\pgfplotsset{compat=1.18}
\newcommand{\defeq}{\vcentcolon=}
\newcommand{\pmset}{\left\{-,+\right\}}
\newcommand{\stocks}{\mathbb{I}}
\newcommand{\diff}{\mathrm{d}}
\newcommandx{\E}[2][1={}, 2={}]{\ifthenelse{\isempty{#1}}{\mathbb{E}}{\mathbb{E}^{#1}}\ifthenelse{\isempty{#2}}{}{\left[#2\right]}}
\newcommandx{\Proba}[2][1={}, 2={}]{\ifthenelse{\isempty{#1}}{\mathbb{P}}{\mathbb{P}^{#1}}\ifthenelse{\isempty{#2}}{}{\left[#2\right]}}
\DeclareMathOperator{\tr}{Tr}
\DeclarePairedDelimiter{\floor}{\lfloor}{\rfloor}
\newtheorem{assumption}{Assumption}[subsection]
\newtheorem{lemma}{Lemma}[subsection]
\newtheorem{proposition}{Proposition}[subsection]
\newtheorem{corollary}{Corollary}[subsection]
\newtheorem{remark}{Remark}[subsection]
\newtheorem{theorem}{Theorem}[subsection]
\title{Multi-dimensional queue-reactive model and signal-driven models: a unified framework\footnote{This research benefited from the financial support of the chairs ``Deep finance \& Statistics'' and ``Machine Learning \& systematic methods in finance'' of \'Ecole polytechnique. The author thanks Qube RT for providing the data and the computing power to make this study possible, especially Charles-Henri Kempeners who guided the project and provided accurate insights, and Rémi Desgranges for his valuable advice. The author thanks his PhD advisors Sergio Pulido and Mathieu Rosenbaum for their guidance and careful reading of the paper.}}
\author{Emmanouil Sfendourakis\thanks{Centre de Mathématiques Appliquées (CMAP), CNRS, École polytechnique, Institut Polytechnique de
Paris. \textbf{Email:} emmanouil.sfendourakis@polytechnique.edu}}
\date{}
\begin{document}

\maketitle

\begin{abstract}
    \noindent We present a Markovian market model driven by a hidden Brownian efficient price. In particular, we extend the queue-reactive model, making its dynamics dependent on the efficient price. Our study focuses on two sub-models: a signal-driven price model where the mid-price jump rates depend on the efficient price and an observable signal, and the usual queue-reactive model dependent on the efficient price via the intensities of the order arrivals. This way, we are able to correlate the evolution of limit order books of different stocks. We prove the stability of the observed mid-price around the efficient price under natural assumptions. Precisely, we show that at the macroscopic scale, prices behave as diffusions. We also develop a maximum likelihood estimation procedure for the model, and test it numerically. Our model is them used to backest trading strategies in a liquidation context.
    \end{abstract}
    
    \textbf{Keywords:} Market microstructure, limit order book, queue-reactive model, stochastic stability, hidden process, MLE

\section{Introduction}

In the classical mathematical approach of financial markets, assets are traded continuously at prices that can take any real value, and anybody can trade at these prices at any moment. It is often assumed that they are diffusion processes driven by a Brownian motion. While this approximation is satisfactory at the low-frequency scale, it cannot describe accurately the behavior of traded prices at high frequency.

Indeed, most financial exchanges use a limit order book (LOB) mechanism, see \parencite{gould2013limit} for a thorough description. With this mechanism, trades only occur when someone is willing to pay the price set by a participant that placed a limit order. Prices are constrained to be discrete multiples of a tick size, for example one cent in the New York Stock Exchange or determined by various market quantities in Europe \parencite{amf}.

In the literature, many stochastic models have been presented to describe the dynamics of limit order books \parencite{smith2003statistical,farmer2005predictive,cont2010stochastic,abergel2013mathematical,cont2013price,abergel2015long,lakner2016high}, or simply discrete traded prices \parencite{bacry2013modelling}.

The dependence on the current state of the order book in the dynamics of the order flow has been shown empirically to be of great significance \parencite{muni2017modelling,lehalle2021optimal,morariu2022state,sfendourakis_lob_2023}. \textcite{huang2015simulating} introduce the \enquote{queue-reactive} approach to model the dynamics of the LOB and the price of an asset, introducing a Markovian dependence on the current state of the LOB for the arrival intensities of the orders. It has been extended in various ways. \textcite{wuQRHawkes} consider Hawkes dynamics for the order flow, \textcite{bodor2024novel} refine the model to accurately represent order sizes, and \textcite{bodor2025deep} use a deep-learning approach to model accurately high-dimensional dependencies.

It has been shown that in these models, in the macroscopic scale, the observed prices behave like Brownian motions with constant volatility \parencite{abergel2013mathematical,abergel2015long,huang2017ergodicity,mounjid2019asymptotic} or even rough volatility for Hawkes models with a more sophisticated scaling \parencite{jaisson2015limit}.

Some models directly consider a macroscopic diffusive efficient price, that drives the dynamics of the observed prices that only jump at discrete (random) times. It is the case of \parencite{frey1999risk,ceci2006risk,robert_new_2011,delattre2013estimating,derchu2020bayesian} among others. They are interested in either the estimation of the model parameters, especially the volatility of the latent process, or applications of filtering theory to some optimization problems.

In our paper, we combine both approaches to present a Markovian state-dependent price model driven by an efficient price. This includes an extension of the queue-reactive model, where the order book, up to a fixed price distance from the mid-price, of an asset is represented and may influence the dynamics of the order flow, along the Brownian efficient price. This way, we are able to consider correlated dynamics of the LOB and order flow of different assets, the correlation being originated exclusively from the efficient price. In short, the dynamics of the order flow of an asset depends on two states: its own order book, and its own efficient price, the second being possibly correlated to the efficient price of other assets.

We show that in the macroscopic scale, with a standard square-root scaling, under natural assumptions, the observed price becomes arbitrarily close in probability to the efficient price. To retrieve the macroscopic volatility we can therefore embed it in the efficient price instead of playing with $\theta^{reinit}$ as in \parencite{huang2015simulating}. This way, the LOB fulfills its role in price discovery.

We also propose, without any theoretical proof of convergence, a maximum likelihood estimator of our model, and an approximation algorithm to compute the likelihood which can be represented as the solution of a PDE. We then observe the convergence numerically.

Our model can also be used in the context of optimal execution across multiple assets. It exhibits transient market impact for large market orders. The agent can use it to design and backtest short-term strategies, for which the correlation between asset prices remains relevant.

In Section \ref{sec:intuitive_queue_reactive}, we present the models of interest. In Section \ref{sec:models}, we give the mathematical details and state the large-time scale convergence results. In Section \ref{sec:estimation}, we describe our estimation procedure, and validate it numerically. In Section \ref{sec:real_data}, we apply the estimation procedure to an example with real market data. In Section \ref{sec:liquidation}, we test the behavior of our model in optimal liquidation setting. In Section \ref{section:proofs_stability}, we prove the results stated in Section \ref{sec:models}. In Section \ref{sec:proof_likelihood}, we prove some of the statements of Section \ref{sec:estimation}.

\subsection{Notations}

Finite-dimensional spaces $\mathbb{R}^d$ are endowed with the max-norm $|y| = \max_i |y^i|.$ For two functions $f$ and $g$ defined on a space $E$ and valued in $\mathbb{R}^d$, we introduce the sup-norm $|f-g|_{\infty, E} = \sup_{t \in E} |f(t) - g(t)|$. For a topological space $\mathcal{X}$, we denote by $\mathbb{B}(\mathcal{X})$ its Borel $\sigma$-algebra. We denote by $(e_k)_{k \in \mathbb{M}}$ the canonical basis of $\mathbb{R}^{\mathbb{M}}$, $\mathbb{M}$ being an at most countable set. For $x=(x^j)_{j \in \mathbb{J}}$ and $i \in \mathbb{I}$, $e_i \otimes x =  (y_{i'}^{j})_{i' \in \mathbb{I}, j \in \mathbb{J}} \in (\mathbb{R}^{\mathbb{J}})^{\mathbb{I}}$ is the element where all $y_{i'}^{j} = 0$ if $i' \neq i$, otherwise $y_i^j = x^j$. The set of natural integers, including 0, is denoted by $\mathbb{N}$, and $\mathbb{N}^*=\mathbb{N}\setminus \{0\}$. $D(I, \mathbb{R}^d)$, denotes the Skorokhod space of càdlàg functions on an interval $I$, valued in $\mathbb{R}^d$.

We recall the multi-index notation. For $y=(y^i)_{i \in \mathbb{R}^d}$ and $\alpha =(\alpha_i)_{i \in \mathbb{N}^d} \in \mathbb{N}^d$, $y^{\alpha}=\prod_{i = 0}^d (y^i)^{\alpha_i}$. For $\alpha =(\alpha_i)_{i \in \mathbb{N}^d} \in \mathbb{N}^d$ and $\beta =(\beta_i)_{i \in \mathbb{N}^d} \in \mathbb{N}^d$, $\alpha \leqslant \beta$ if and only if for all $i \in \{1, \dots, d\}$, $\alpha_i \leqslant \beta_i$. Finally, $|\alpha| = \sum_{i=1}^d \alpha_i$.

\section{Presentation of the models}
\label{sec:intuitive_queue_reactive}

In this section, we present two models driven by an underlying efficient price: the signal-driven model and the queue-reactive model. We will formalize this construction rigorously in Section \ref{sec:models} within a more general framework.

We consider $d$ stocks. Each stock $i$ has a tick size $\delta^i > 0$: traded prices can only be multiples of $\delta^i$. We observe its reference price $P_t^i \in \delta^i (\frac{1}{2} + \mathbb{Z})$ lying on the inter-tick grid. This reference price can be a proxy of the mid-price of the asset. It is driven by an unobservable efficient prices $S_t = S_0 + \Sigma W_t$ where $\Sigma$ is a $d \times d$ matrix and $W$ is a $d$-dimensional Brownian motion. The process $S$ represents the \enquote{fair} price of the assets and takes continuous values.

At the macroscopic scale, the models presented below behave like $S$. The long-term correlations observed between asset prices are fully described by the correlation matrix $\Sigma$.

\subsection{Signal-driven model}

Fix a stock $i$. In this model, we are interested in the jumps of its reference price $P_t^i$. Specifically, we observe a counting process $(N^{i,-}, N^{i,+})$: $N^{i,-}$ jumps when $P^i$ decreases by one tick, $N^{i,+}$ jumps when $P^i$ increases by one tick.

Apart from the reference price, we observe a signal $X^i$ affecting the stock dynamics. It can be some quantity directly observable from order book data, such as the spread or the volume imbalance. The process $X^i$ evolves in a Markovian fashion.

The dynamics of $(N^{i,e})_{1 \leqslant i \leqslant d, e \in \pmset}$ are fully described by their intensity $t \to (\Lambda^{i,e}(X_{t-}^i, S_t^i - P_t^i))$. These intensities depend on the current signal $X^i$ and the relative position between the efficient price and the reference price, as in \parencite{derchu2020bayesian}. When the efficient price is higher (resp. lower) than the reference price, the latter will tend to go up (resp. down). This way, both prices are kept relatively close to each other.

\subsection{The queue-reactive model}

The limit order book (LOB) of the asset $i$ is represented as a $2K$-dimensional vector $q_t^i = (q_t^{i,s,j})_{s \in \{b,a\}, 1 \leqslant j \leqslant K} \in \mathbb{N}^{2K}$ around its reference price $P_t^i$, as in \parencite{huang2015simulating}. The quantity $q_t^{i,a,j}$ (resp. $q_t^{i,b,j}$) represents the pending volume on the ask (resp. bid) side at price $P_t^i + \delta^i(-\frac{1}{2}+j)$ (resp. $P_t^i - \delta^i(-\frac{1}{2}+j)$). We consider four types of orders:
\begin{itemize}
    \item Limit orders, adding volume to a specific pile.
    \item Cancellations, removing a pending order from a specific pile.
    \item Market orders, representing an actual transaction.
    \item Modifications, moving a pending limit order to a different price.
\end{itemize}
We denote by $\mathbb{T}$ the set of all order types (the four types mentioned above, together with the volumes and prices). Their effects are illustrated in Figure \ref{fig:illustration_order_types}. We track these orders with a counting process $N$ valued in $\mathbb{N}^{\mathbb{T}}$: $N^e_t$ is the number of orders of type $e \in \mathbb{T}$ that occurred until time $t$.

\begin{figure}
    \centering
    \begin{subfigure}[b]{0.45\textwidth}
        \centering
        \resizebox{\textwidth}{!}{
            \begin{tikzpicture}
                \draw[->] (0,0) -- (7,0) node[right] {Price};
                \draw[->] (0,0) -- (0,3) node[above] {Volume};
            
                \fill[blue!50] (0.75,0) rectangle (1.25,1.5) node[midway, above] {1.5};
                \node[blue!50] at (1.0, 1.8) {$q^{b,3}$};
                \fill[blue!50] (1.75,0) rectangle (2.25,2) node[midway, above] {2};
                \node[blue!50] at (2.0, 2.3) {$q^{b,2}$};
                \fill[blue!50] (2.75,0) rectangle (3.25,1) node[midway, above] {1};
                \node[blue!50] at (3.0, 1.3) {$q^{b,1}$};
            
                \fill[red!50] (4.75,0) rectangle (5.25,0.7);
                \node[red!50] at (5.0, 1.0) {$q^{a,2}$};
                \fill[red!50] (5.75,0) rectangle (6.25,2.5);
                \node[red!50] at (6.0, 2.8) {$q^{a,3}$};
        
                \foreach \x in {1,2,3,4,5,6} {
                \draw (\x, 0.1) -- (\x, -0.1);
            }
            \draw[violet] (3.5, 0.3) -- (3.5, -0.3) node[below] {$P$};
            
            \end{tikzpicture}
        }
        \caption{Initial state.}
    \end{subfigure}
    \hfill
    \begin{subfigure}[b]{0.45\textwidth}
        \centering
        \resizebox{\textwidth}{!}{
            \begin{tikzpicture}
                \draw[->] (0,0) -- (7,0) node[right] {Price};
                \draw[->] (0,0) -- (0,3) node[above] {Volume};
            
                \fill[blue!50] (0.75,0) rectangle (1.25,1.5) node[midway, above] {1.5};
                \node[blue!50] at (1.0, 1.8) {$q^{b,3}$};
                \fill[blue!50] (1.75,0) rectangle (2.25,2) node[midway, above] {2};
                \node[blue!50] at (2.0, 2.3) {$q^{b,2}$};
                \fill[blue!50] (2.75,0) rectangle (3.25,1) node[midway, above] {1};
                \node[blue!50] at (3.0, 1.3) {$q^{b,1}$};
            
                \fill[red!50] (3.75,0) rectangle (4.25,0.5);
                \node[red!50] at (4.0, 0.8) {$v$};
                \fill[red!50] (4.75,0) rectangle (5.25,0.7);
                \node[red!50] at (5.0, 1.0) {$q^{a,2}$};
                \fill[red!50] (5.75,0) rectangle (6.25,2.5);
                \node[red!50] at (6.0, 2.8) {$q^{a,3}$};
        
                \foreach \x in {1,2,3,4,5,6} {
                \draw (\x, 0.1) -- (\x, -0.1);
            }
            \draw[violet] (3.5, 0.3) -- (3.5, -0.3) node[below] {$P$};
            
            \end{tikzpicture}
        }
        \caption{Limit order at the first ask pile.}
    \end{subfigure}

    \vspace{0.5cm}

    \begin{subfigure}[b]{0.45\textwidth}
        \centering
        \resizebox{\textwidth}{!}{
            \begin{tikzpicture}
                \draw[->] (0,0) -- (7,0) node[right] {Price};
                \draw[->] (0,0) -- (0,3) node[above] {Volume};
            
                \fill[blue!50] (0.75,0) rectangle (1.25,1.0) node[midway, above] {1.5};
                \node[blue!50] at (1.0, 1.3) {$q^{b,3}-v$};
                \fill[blue!50] (1.75,0) rectangle (2.25,2) node[midway, above] {2};
                \node[blue!50] at (2.0, 2.3) {$q^{b,2}$};
                \fill[blue!50] (2.75,0) rectangle (3.25,1) node[midway, above] {1};
                \node[blue!50] at (3.0, 1.3) {$q^{b,1}$};
            
                \fill[red!50] (4.75,0) rectangle (5.25,0.7);
                \node[red!50] at (5.0, 1.0) {$q^{a,2}$};
                \fill[red!50] (5.75,0) rectangle (6.25,2.5);
                \node[red!50] at (6.0, 2.8) {$q^{a,3}$};
        
                \foreach \x in {1,2,3,4,5,6} {
                \draw (\x, 0.1) -- (\x, -0.1);
            }
            \draw[violet] (3.5, 0.3) -- (3.5, -0.3) node[below] {$P$};
            
            \end{tikzpicture}
        }
        \caption{Cancellation on the third bid pile.}
    \end{subfigure}
    \hfill
    \begin{subfigure}[b]{0.45\textwidth}
        \centering
        \resizebox{\textwidth}{!}{
            \begin{tikzpicture}
                \draw[->] (0,0) -- (7,0) node[right] {Price};
                \draw[->] (0,0) -- (0,3) node[above] {Volume};
            
                \fill[blue!50] (0.75,0) rectangle (1.25,1.0) node[midway, above] {1.5};
                \node[blue!50] at (1.0, 1.3) {$q^{b,3}-v$};
                \fill[blue!50] (1.75,0) rectangle (2.25,2) node[midway, above] {2};
                \node[blue!50] at (2.0, 2.3) {$q^{b,2}$};
                \fill[blue!50] (2.75,0) rectangle (3.25,1.5) node[midway, above] {1};
                \node[blue!50] at (3.0, 1.8) {$q^{b,1}+v$};
            
                \fill[red!50] (4.75,0) rectangle (5.25,0.7);
                \node[red!50] at (5.0, 1.0) {$q^{a,2}$};
                \fill[red!50] (5.75,0) rectangle (6.25,2.5);
                \node[red!50] at (6.0, 2.8) {$q^{a,3}$};
        
                \foreach \x in {1,2,3,4,5,6} {
                \draw (\x, 0.1) -- (\x, -0.1);
            }
            \draw[violet] (3.5, 0.3) -- (3.5, -0.3) node[below] {$P$};
            
            \end{tikzpicture}
        }
        \caption{Modification of an order pending on the third bid pile to the first bid pile}
    \end{subfigure}

    \caption{Illustration of the LOB and order types. In this example, $q^{a,1}$ is initially zero.}
    \label{fig:illustration_order_types}
\end{figure}
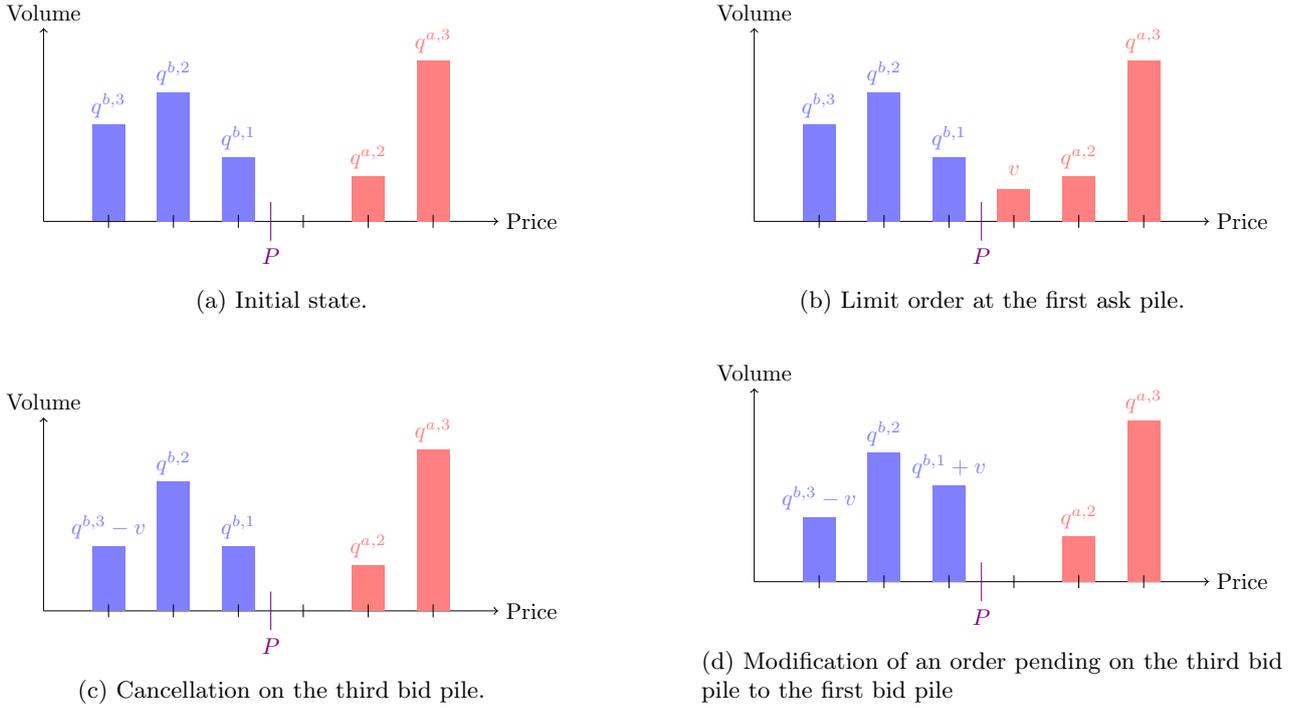

The dynamics of the reference price $P$ are completely determined by the order flow, as detailed in Section \ref{subsec:queue_reactive}, in a way that constrains $P_t$ to always be at most $\frac{\delta}{2}$ far from the mid-price.

We can now fully describe the dynamics of the order flow in our model: $N$ is a counting process with intensity $t \mapsto (\Lambda^{i,e}(q^i_{t-}, S_t - P_{t-}))_{1 \leqslant i \leqslant d, e \in \mathbb{T}}$. The intensities depend not only on the current state of the LOB $q^i_{t-}$ as in \parencite{huang2015simulating}, but also on the relative position of the efficient price and the reference price as in \parencite{derchu2020bayesian}. The efficient price can be seen as an aggregation of market participants' own idea of a fair price to trade. This way, when the efficient price is significantly higher than the best ask price, very few market bid orders will occur since the seller will get a disadvantageous price, and there will be many market orders on the ask side because this is a bargain for the buyer. These phenomena are modeled via the intensities $t \mapsto (\Lambda^{i,e}(q^i_{t-}, S_t - P_{t-}))$. This way, as we show in Section \ref{sec:models}, at the macroscopic scale, the reference price $P$ ends up behaving as the efficient price $S$.

\section{General mathematical framework}
\label{sec:models}

We consider $d$ stocks that are traded on a market. From now on, $\mathbb{I} \defeq \{1,\dots,d\}$. Let $(\Omega, \mathcal{F}, (\mathcal{F}_t)_{t \geqslant 0}, \mathbb{P})$ be a filtered probability space supporting an $\mathbb{R}^d$-valued continuous adapted process $S$ (the unobserved efficient price), an adapted càdlàg pure jump process $X$ valued in a subset $\mathcal{X}$ of a Euclidean space (the state process) and the observed reference price process $P_t$ valued in $\mathcal{P}\defeq \prod_{i\in \stocks} \mathcal{P}^i$, where
\begin{equation*}
    \mathcal{P}^i \defeq \frac{\delta^i}{2} + \delta^i \mathbb{Z},\quad i \in \mathbb{I}
\end{equation*}
is the tick grid for the $i$th stock, $\delta^i > 0$ being its tick size. We also assume there exists a counting process $N=(N^k)_{k \in \mathbb{M}}$ ($\mathbb{M}$ is an at most countable set). A jump of $N^k$ represents an event of type $k$ than we record. In Section \ref{sec:estimation}, we wish to estimate the intensity of these jumps. The set $\mathbb{M}$ represents all the event types (for example market, limit and cancel orders for every stock). The state $X$ and the reference prices $P$ can jump together with $N$, but they can also jump alone. For example, if $N$ represents only the market orders, and $X$ is the imbalance of every LOB, the process $X$ with limit orders and cancellations at the best price.

Furthermore, we suppose that there exists a $d \times d$ invertible matrix $\Sigma$, a family of measurable intensity functions $\Lambda^k:\mathcal{X} \times \mathbb{R}^d \to [0,\infty)$, $k \in \mathbb{M}$, $\Lambda:\mathcal{X} \to [0, \infty)$ and finally the transition probability kernels $\mu^k:\mathcal{X} \times \mathcal{P} \times \mathcal{B}(\mathcal{X} \times \mathcal{P}) \to [0,1]$, $k \in \mathbb{M}$ and $\mu:\mathcal{X} \times \mathcal{P} \times \mathcal{B}(\mathcal{X} \times \mathcal{P}) \to [0,1]$ such that $(S, P, X, N)$ is a Markov process (potentially killed the explosion time of $(S, P, X)$, we allow $N$ to take infinite values) with infinitesimal generator $\hat{\mathcal{A}}$ defined by
\begin{equation*}
    \begin{split}
        \hat{\mathcal{A}}\phi(y, p, x, n)
        &=\frac{1}{2} \tr \big(\Sigma \Sigma^T \nabla^2_y \phi(y, p, x, n)\big)\\
        &\phantom{=}+ \sum_{k \in \mathbb{M}}\bigg(\int_{\mathcal{X} \times \mathcal{P}} \phi(y, p', x', n+e_k)\mu^k(x, p,\diff x', \diff p') - \phi(y, p, x, n) \bigg) \Lambda^k(x, y-p)\\
        &\phantom{=}+
        \bigg(\int_{\mathcal{X} \times \mathcal{P}} \phi(y, p', x', n)\mu(x,p,\diff x', \diff p') - \phi(y, p, x, n) \bigg) \Lambda(x)
    \end{split}
\end{equation*}
for $(y, p, x, n) \in \mathbb{R}^d \times \mathcal{P} \times \mathcal{X} \times \mathbb{N}^{\mathbb{M}}$ and $\phi: \mathbb{R}^d \times \mathcal{P} \times \mathcal{X} \times \mathbb{N}^{\mathbb{M}} \to \mathbb{R}$ measurable, two times differentiable with respect to its first variable, $\nabla_y^2$ representing its Hessian.

The efficient price is a Brownian motion with volatility $\Sigma$. The events of interest $(N^k)$ occur at a rate dependent on the distance between the reference price and the efficient price. Tracking $N$ is only for the purpose of estimation, $N$ does not actually influence the dynamics of $(S,P,X)$ which are Markovian on their own right with infinitesimal generator $\mathcal{A}$ defined by
\begin{equation*}
    \begin{split}
        \mathcal{A}\phi(y, p, x)
        &=\frac{1}{2} \tr \big(\Sigma \Sigma^T \nabla^2_y \phi(y, p, x)\big)\\
        &\phantom{=}+ \sum_{k \in \mathbb{M}}\bigg(\int_{\mathcal{X} \times \mathcal{P}} \phi(y, p', x')\mu^k(x,p,\diff x', \diff p') - \phi(y, p, x) \bigg) \Lambda^k(x, y-p)\\
        &\phantom{=}+
        \bigg(\int_{\mathcal{X} \times \mathcal{P}} \phi(y, p', x')\mu(x,p,\diff x', \diff p') - \phi(y, p, x) \bigg) \Lambda(x)
    \end{split}
\end{equation*}
for $(y, p, x) \in \mathbb{R}^d \times \mathcal{P} \times \mathcal{X}$ and $\phi: \mathbb{R}^d \times \mathcal{P} \times \mathcal{X}\to \mathbb{R}$ measurable, two times differentiable with respect to its first variable.

In the following subsections, we present particular cases assuring that the reference price does not stay \enquote{too far} from the efficient price. Specifically, we define the rescaled processes $\tilde{P}^{(n)}$ and $\tilde{S}^{(n)}$ by
\begin{equation*}
    \tilde{P}^{(n)}_t = \frac{1}{\sqrt{n}}(P_{nt} - P_0)\text{ and }
    \tilde{S}^{(n)}_t = \frac{1}{\sqrt{n}}(S_{nt} - S_0),\quad \forall n \in \mathbb{N}^*,\ t \geqslant 0.
\end{equation*}
We show that in the presented models, $\tilde{P}^{(n)}$ and $\tilde{S}^{(n)}$ become arbitrarily close in probability as $n$ tends to infinity. This way, the order book fills its role of price revelation.

\subsection{Signal-driven model}
\label{subsec:signal_modulated}

In this subsection, we suppose that the reference price can increase or decrease by one tick at a time. Its jumps have an intensity depending on a Markovian signal. The signal can be a quantity directly observable on the LOB such as the volume imbalance or the spread.

We now state the assumptions needed for the stability of the model.

Assumption \ref{assumption:signal_modulated_main_assumption} below is the main assumption characterizing the particular model of this subsection. The transition kernel $\tilde{\mu}^{i,+}$ (resp. $\tilde{\mu}^{i,+}$) describes the transition behavior of the signal when the reference price of the $i$th asset jumps up (resp. down). Note that the signal can also jump alone.
\begin{assumption}
    \label{assumption:signal_modulated_main_assumption}
    The set $\mathbb{M}$ is equal to $\stocks \times \pmset$. For $i \in \stocks$ there exists two transition probability kernels $\tilde{\mu}^{i, +}:\mathcal{X} \times \mathcal{B}(\mathcal{X}) \to [0,1]$ and $\tilde{\mu}^{i, -}:\mathcal{X} \times \mathcal{B}(\mathcal{X}) \to [0,1]$ such that for $x \in \mathcal{X}$, $A \in \mathcal{B}(\mathcal{X})$ and $p \in \mathcal{P}$,
    \begin{equation*}
        \mu^{i, +}(x,p,A \times \{p+\delta^i e_i\}) = \tilde{\mu}^{i, +}(x, A) \text{ and }
        \mu^{i, -}(x,p,A \times \{p-\delta^i e_i\}) = \tilde{\mu}^{i, -}(x, A).
    \end{equation*}
    There exists two transition probability kernels $\tilde{\mu}^{+}:\mathcal{X} \times \mathcal{B}(\mathcal{X}) \to [0,1]$ and $\tilde{\mu}^{-}:\mathcal{X} \times \mathcal{B}(\mathcal{X}) \to [0,1]$ such that for $x \in \mathcal{X}$, $A \in \mathcal{B}(\mathcal{X})$ and $p \in \mathcal{P}$,
    \begin{equation*}
        \mu^{+}(x,p,A \times \{p\}) = \tilde{\mu}^{+}(x, A) \text{ and }
        \mu^{-}(x,p,A \times \{p\}) = \tilde{\mu}^{-}(x, A).
    \end{equation*}
    Furthermore, $\mathcal{X}$ is compact. For $i \in \stocks$, there exist two measurable intensity functions $\tilde{\Lambda}^{i, +}:\mathcal{X} \times \mathbb{R} \to [0, \infty)$ and $\tilde{\Lambda}^{i, -}:\mathcal{X} \times \mathbb{R} \to [0, \infty)$ such that for $x \in \mathcal{X}$ and $y = (y^i)_{i \in \stocks} \in \mathbb{R}^d$,
    \begin{equation*}
        \Lambda^{i, +}(x,y)= \tilde{\Lambda}^{i, +}(x, y^i) \text{ and }
        \Lambda^{i, -}(x,y)= \tilde{\Lambda}^{i, -}(x, y^i).
    \end{equation*}
\end{assumption}

Under Assumption \ref{assumption:signal_modulated_main_assumption}, $(S-P, X)$ is Markovian with infinitesimal generator $\mathcal{L}$ defined below. For $\phi$ regular enough and $(y,x) \in \mathbb{R}^d \times \mathcal{X}$,
\begin{equation*}
    \begin{split}
        \mathcal{L} \phi (y,x)
        &=\frac{1}{2} \tr \big(\Sigma \Sigma^T \nabla^2_y \phi(y, x)\big)\\
        &\phantom{=}+ \sum_{i \in \stocks}\bigg(\int_{\mathcal{X}} \phi(y-\delta^i e_i,  x')\tilde{\mu}^{i,+}(x,\diff x') - \phi(y, x) \bigg) \tilde{\Lambda}^{i, +}(x, y^i)\\
        &\phantom{=}+ \sum_{i \in \stocks}\bigg(\int_{\mathcal{X}} \phi(y+\delta^i e_i,  x')\tilde{\mu}^{i,-}(x,\diff x') - \phi(y, x) \bigg) \tilde{\Lambda}^{i, -}(x, y^i)\\
        &\phantom{=}+
        \bigg(\int_{\mathcal{X}} \phi(y, x')\tilde{\mu}(\diff x') - \phi(y, x) \bigg) \Lambda(x).
    \end{split}
\end{equation*}

Let $U:\mathbb{R}\to (0, \infty)$ be a $C^{\infty}$ function such that $U(y)=|y|$ for $y \in (-\infty,-1] \cup [1, \infty)$ and $U(y) \leqslant 1$ for $|y| \leqslant 1$. For $y = (y^i)_{i \in \stocks} \in \mathbb{R}^d$ and $x \in \mathcal{X}$, define $V(y, x) \defeq \prod_{i \in \stocks} e^{U(y^i)}$. The function $V$ is norm-like.

Assumption \ref{assumption:limits} below is crucial for the stability of the process $S-P$. It states that is $P$ is too far from $S$, there is a high probability of a jump that will bring back $P$ closer to $S$ in the near future.
\begin{assumption}
    \label{assumption:limits}
    For $i \in \stocks$, $\tilde{\Lambda}^{i, +}$ and $\tilde{\Lambda}^{i, -}$ are locally bounded and
    \begin{align*}
        \lim_{y \to \infty}\inf_{x \in C} \tilde{\Lambda}^{i, +}(x, y) = \infty,\ 
        \limsup_{y \to -\infty}\sup_{x \in C} \tilde{\Lambda}^{i, +}(x, y) < \infty,\\
        \limsup_{y \to \infty}\sup_{x \in C} \tilde{\Lambda}^{i, -}(x, y) < \infty,\ 
        \lim_{y \to \infty}\inf_{x \in C} \tilde{\Lambda}^{i, -}(x, y) = \infty.
    \end{align*}
\end{assumption}

Proposition \ref{prop:exp_inequality} below, proved in Section \ref{subsec:proofs_signal_modulated}, shows a Lyapunov inequality for $V$ that will ensure the stability of $(S-P, X)$.
\begin{proposition}
    \label{prop:exp_inequality}
    Under Assumptions \ref{assumption:signal_modulated_main_assumption} and \ref{assumption:limits}, there exists $K \geqslant 1$ such that for all $y \in \mathbb{R}^d$ and $x \in \mathcal{X}$,
    \begin{equation*}
        \mathcal{L}V(x, y) \leqslant -V(y) + K.
    \end{equation*}
\end{proposition}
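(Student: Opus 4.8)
The plan is to exploit the fact that $V(y,x)=\prod_{i\in\stocks}e^{U(y^i)}$ does not depend on $x$. Since each $\tilde\mu^{i,\pm}(x,\cdot)$ and $\tilde\mu(\cdot)$ is a probability kernel, every integral against a kernel reduces to evaluating $V$ at the shifted argument, and the pure-signal term carrying intensity $\Lambda(x)$ vanishes identically. First I would compute $\mathcal{L}V$ explicitly and factor out $V(y)$, writing $\mathcal{L}V(y)=V(y)\,G(x,y)$ where
\begin{equation*}
G(x,y)=\tfrac12\Big[\textstyle\sum_{i,j}(\Sigma\Sigma^T)_{ij}U'(y^i)U'(y^j)+\sum_i(\Sigma\Sigma^T)_{ii}U''(y^i)\Big]+\sum_{i\in\stocks}g_i(x,y^i),
\end{equation*}
with
\begin{equation*}
g_i(x,y^i)=\big(e^{U(y^i-\delta^i)-U(y^i)}-1\big)\tilde\Lambda^{i,+}(x,y^i)+\big(e^{U(y^i+\delta^i)-U(y^i)}-1\big)\tilde\Lambda^{i,-}(x,y^i),
\end{equation*}
using $\partial_{y^i}V=VU'(y^i)$ and $\partial^2_{y^iy^j}V=V\big(U'(y^i)U'(y^j)+\mathbf{1}_{i=j}U''(y^i)\big)$. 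Since $V>0$, the target inequality is equivalent to $V(y)\big(G(x,y)+1\big)\leq K$, so it suffices to show that $\sup_{x\in\mathcal{X}}G(x,y)\to-\infty$ as $|y|\to\infty$.

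The diffusion bracket is bounded uniformly in $(x,y)$: $U'$ is bounded (equal to $\pm1$ outside $[-1,1]$ and smooth on the compact $[-1,1]$) and $U''$ is bounded with support in $[-1,1]$, so the bracket is at most some constant $C_0$. The core of the argument is the jump part, which I would treat coordinate by coordinate. For $y^i\geq 1+\delta^i$ one has $U(y^i-\delta^i)-U(y^i)=-\delta^i$ and $U(y^i+\delta^i)-U(y^i)=\delta^i$, so that $g_i(x,y^i)=(e^{-\delta^i}-1)\tilde\Lambda^{i,+}(x,y^i)+(e^{\delta^i}-1)\tilde\Lambda^{i,-}(x,y^i)$; by Assumption \ref{assumption:limits} the coefficient $e^{-\delta^i}-1<0$ multiplies the diverging intensity $\tilde\Lambda^{i,+}$ while $\tilde\Lambda^{i,-}$ stays bounded, so $\sup_x g_i(x,y^i)\to-\infty$ as $y^i\to\infty$. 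The symmetric computation for $y^i\leq -1-\delta^i$, where the roles of $\tilde\Lambda^{i,+}$ and $\tilde\Lambda^{i,-}$ swap, gives $\sup_x g_i(x,y^i)\to-\infty$ as $y^i\to-\infty$. Away from infinity, local boundedness of the intensities together with compactness of $\mathcal{X}$ shows $\sup_x g_i(x,\cdot)$ is bounded above on all of $\mathbb{R}$ by some constant $B_i$.

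To combine coordinates under the max-norm $|y|=\max_i|y^i|$, fix $M>0$, set $B=\sum_i B_i$, and for each $i$ choose $R_i$ with $\sup_x g_i(x,y^i)<-M-B$ whenever $|y^i|>R_i$; put $R=\max_i R_i$. If $|y|>R$, the coordinate $i_0$ realizing the maximum satisfies $|y^{i_0}|>R$, whence $\sup_x g_{i_0}<-M-B$ while $\sum_{i\neq i_0}\sup_x g_i\leq B$, so $\sup_x\sum_i g_i(x,y^i)<-M$ and $\sup_x G(x,y)<C_0-M$. This proves $\sup_x G(x,y)\to-\infty$. Enlarging $R$ so that $\sup_x G\leq -1$ on $\{|y|>R\}$ yields $\mathcal{L}V\leq -V$ there; on the compact set $\{|y|\leq R\}$ both $V$ and $G$ are bounded (again by local boundedness and compactness of $\mathcal{X}$), so $\mathcal{L}V=VG\leq K_0$, and $K=\max\big(1,\,K_0+\sup_{|y|\leq R}V(y)\big)$ delivers $\mathcal{L}V\leq -V+K$ everywhere.

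The main obstacle I expect is the bookkeeping needed to pass from the coordinatewise blow-up to a genuine $-\infty$ limit of the full sum under the max-norm: a single large coordinate must be shown to overwhelm the merely bounded-above contributions of all other coordinates and of the diffusion term, which is precisely what the uniform constants $B_i$ and the uniformity in $x$ built into Assumption \ref{assumption:limits} are there to guarantee. A secondary point to handle carefully is checking that the sign of each exponential prefactor $e^{\mp\delta^i}-1$ is paired with the intensity that diverges in the matching direction, so that the divergence is stabilizing rather than destabilizing.
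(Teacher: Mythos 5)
Your proposal is correct and takes essentially the same route as the paper's proof: factor $V$ out of $\mathcal{L}V$, bound the diffusion bracket by a uniform constant, and use Assumption \ref{assumption:limits} to check that each jump term pairs the negative prefactor $e^{-\delta^i}-1$ with the intensity that diverges in the matching direction, so that once a single coordinate of $y$ is large the sum is driven below $-1$ uniformly in $x$, the remaining compact region being handled by local boundedness and compactness of $\mathcal{X}$. Your threshold $|y^i|\geqslant 1+\delta^i$ is in fact slightly more careful than the paper's $\Delta=\max(1,\max_i \delta^i)$ near the boundary, but this is bookkeeping, not a different argument.
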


By Proposition \ref{prop:exp_inequality} and \parencite[Theorem 2.1]{meyntweedieIII}, $(S-P, X)$ is non-explosive. Proposition \ref{prop:non_explosiveness_lyapunov} then gives us Corollary \ref{corol:signal_modulated_non_explosive}.

\begin{corollary}
    \label{corol:signal_modulated_non_explosive}
    Under Assumptions \ref{assumption:signal_modulated_main_assumption} and \ref{assumption:limits}, the Markov process $(S,P,X,N)$ is non-explosive.
\end{corollary}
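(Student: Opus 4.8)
The plan is to follow the two-step route indicated in the text: first establish non-explosion for the reduced pair $(S-P, X)$ by a Lyapunov argument, then transfer it to the full augmented process $(S, P, X, N)$. For the first step I would take the inequality $\mathcal{L}V \le -V + K$ from Proposition \ref{prop:exp_inequality}, together with the fact (already noted) that $V(y,x) = \prod_{i \in \stocks} e^{U(y^i)}$ is norm-like --- its independence of $x$ is harmless because $\mathcal{X}$ is compact and cannot itself contribute to a blow-up --- and feed them into the non-explosion criterion of \textcite[Theorem 2.1]{meyntweedieIII}. The drift condition with a strictly negative coefficient in front of $V$ makes $t \mapsto e^{t} V((S-P)_t, X_t) - K\int_0^t e^s\,\diff s$ a local supermartingale along the minimal process, so $V$, and hence $|S-P|$, cannot reach infinity in finite time. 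Since the only mechanism that could force an accumulation of jumps is the divergence of the intensities $\tilde{\Lambda}^{i,\pm}$ as $y^i \to \pm\infty$ (Assumption \ref{assumption:limits}), controlling $|S-P|$ is exactly what rules out jump-time accumulation, giving non-explosion of $(S-P, X)$.

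For the second step I would invoke Proposition \ref{prop:non_explosiveness_lyapunov}, whose mechanism I would justify as follows. The total jump intensity of $(S, P, X, N)$ at a state $(y, p, x, n)$ equals $\Lambda(x) + \sum_{k \in \mathbb{M}} \Lambda^k(x, y - p)$, which is a function of $(y - p, x)$ alone; in particular the random clock governing all jumps of $(P, X, N)$ is measurable with respect to the reduced process $(S - P, X)$. Consequently the jump times of the augmented process coincide with those of $(S-P, X)$, and every jump of $N$ is a jump of $(S-P, X)$, so $N_t$ is bounded by the number of jumps of $(S-P, X)$ on $[0,t]$, which is finite for every $t$ by the first step. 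Since $S$ is a Brownian motion and hence non-explosive, $P = S - (S-P)$ is finite for all $t$, while $X$ stays in the compact set $\mathcal{X}$; therefore $(S, P, X, N)$ is defined for all $t \ge 0$, i.e. non-explosive.

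The routine parts are the supermartingale estimate and the bookkeeping on jump times. The main obstacle, and the step that deserves care, is the lifting itself: I must make sure that the notion of non-explosion delivered by Theorem 2.1 for $(S-P, X)$ is precisely the event that the jump times tend to infinity, and not merely that the state stays bounded, and that this event is identical for the augmented process. This amounts to checking that no extra explosion is created by the coordinates we added --- that $N$ cannot blow up while $(S-P, X)$ does not --- which is exactly why it matters that the intensities $\Lambda^k$ and $\Lambda$ depend on the state only through $(S-P, X)$, and why the continuous Brownian coordinate $S$ contributes no explosion of its own.
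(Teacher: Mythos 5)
Your overall skeleton is exactly the paper's: Proposition \ref{prop:exp_inequality} combined with the Foster--Lyapunov criterion of \parencite[Theorem 2.1]{meyntweedieIII} gives non-explosion of $(S-P,X)$, and Proposition \ref{prop:non_explosiveness_lyapunov} lifts this to $(S,P,X,N)$. Had you simply invoked Proposition \ref{prop:non_explosiveness_lyapunov} after verifying its hypothesis, the proof would coincide with the paper's and be complete. The difficulty lies in the ``mechanism'' you substitute for that proposition's proof, which is not the paper's argument and contains a genuine gap.

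The gap is the inference ``$(S-P,X)$ is non-explosive, hence it has finitely many jumps on $[0,t]$, hence $N_t$ is finite.'' The non-explosion delivered by \parencite[Theorem 2.1]{meyntweedieIII} is the statement that the state does not leave every compact set in finite time, i.e.\ $\sup_{u \leqslant t} \big(|S_u - P_u| + |X_u|\big) < \infty$ a.s.\ for every $t$; it is \emph{not} the statement that the jump times of $(S-P,X)$ fail to accumulate. Your own closing paragraph flags exactly this distinction as the delicate point, but then resolves it by appeal to the wrong property: you argue that the lift works because the intensities ``depend on the state only through $(S-P,X)$.'' That factoring is built into the general framework and is not the operative hypothesis; if some $\Lambda^k$ factored through $(S-P,X)$ but blew up at a finite point $y_0$ (i.e.\ failed to be locally bounded), then $N$ could record infinitely many events in finite time while $(S-P,X)$ stays bounded and perfectly non-explosive. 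The actual bridge between ``paths stay bounded'' and ``$N_t < \infty$'' is the local boundedness of $\sum_{k \in \mathbb{M}} \Lambda^k$: on the event $A = \{\sup_{u \leqslant t}(|S_u-P_u|+|X_u|) \leqslant M\}$ the total event rate is bounded by a constant $C$, so $N$ is stochastically dominated by a Poisson process of rate $C$ and is a.s.\ finite. This Poisson-tail domination (carried out by induction on the bounds $f_m(u) \leqslant 1 - e^{-Cu}\sum_{n=0}^{m-1}(Cu)^n/n!$) is precisely the content of the paper's proof of Proposition \ref{prop:non_explosiveness_lyapunov}, and it is what your justification omits. Relatedly, your proposal never checks the proposition's hypothesis that $\sum_{k\in\mathbb{M}}\Lambda^k$ is locally bounded; in the signal-driven model this follows from Assumption \ref{assumption:limits} (local boundedness of the $\tilde{\Lambda}^{i,\pm}$) together with the finiteness of $\mathbb{M} = \stocks \times \pmset$, and it is where Assumption \ref{assumption:limits} enters the corollary beyond feeding Proposition \ref{prop:exp_inequality}. (Your auxiliary claim that every jump of $N$ is a jump of $(S-P,X)$ is true in this model, since a jump of $N^{i,\pm}$ moves $P$ by one tick, but it cannot replace the domination argument, because the finiteness of the number of jumps of $(S-P,X)$ is itself what needs proving.)
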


As a direct consequence of Proposition \ref{prop:exp_inequality}, we have that $\tilde{P}^{(n)}$ and $\tilde{S}^{(n)}$ become arbitrarily close in probability, uniformly on every compact, as $n$ tends to infinity. This is a particular case of Proposition \ref{prop:lyapunov_convergence_proba_general}.

\begin{proposition}
    \label{prop:signal_modulated_convergence_proba}
    Suppose that $S_0 - P_0$ is bounded.  Under Assumptions \ref{assumption:signal_modulated_main_assumption} and \ref{assumption:limits}, for all $T, \epsilon > 0$,
    \begin{equation*}
        \lim_{n\to \infty} \Proba[][\big| \tilde{S}^{(n)}-\tilde{P}^{(n)}\big|_{\infty, [0,T]} \geqslant \epsilon] = 0.
    \end{equation*}
\end{proposition}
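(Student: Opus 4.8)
The plan is to work with the difference process $Z \defeq S - P$. Under Assumption \ref{assumption:signal_modulated_main_assumption} the pair $(Z,X)$ is Markovian with generator $\mathcal{L}$, and it is non-explosive by Corollary \ref{corol:signal_modulated_non_explosive}. Since $\tilde{S}^{(n)}_t - \tilde{P}^{(n)}_t = \frac{1}{\sqrt{n}}(Z_{nt} - Z_0)$ and $Z_0 = S_0 - P_0$ is bounded, say $|Z_0| \leqslant B$, it suffices to control the running maximum of $|Z|$: using $\sup_{s \leqslant nT}|Z_s - Z_0| \leqslant \sup_{s \leqslant nT}|Z_s| + B$, one checks that for $n$ large enough the event $\{|\tilde{S}^{(n)} - \tilde{P}^{(n)}|_{\infty,[0,T]} \geqslant \epsilon\}$ is contained in $\{\sup_{s \in [0,nT]}|Z_s| \geqslant \tfrac{\epsilon}{2}\sqrt{n}\}$. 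The whole point is then to turn the Lyapunov inequality of Proposition \ref{prop:exp_inequality} into an exponential tail bound for this running maximum.

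First I would derive a maximal estimate. Fix $a \geqslant 1$ and set $\tau_a \defeq \inf\{t \geqslant 0 : |Z_t| \geqslant a\}$. Because the intensities $\tilde{\Lambda}^{i,\pm}$ grow unboundedly in $y$, the norm-like $V$ is not globally in the domain of $\mathcal{L}$ in an elementary sense, so I would localise with the exit times $\sigma_m \defeq \inf\{t : |Z_t| \geqslant m\}$, on which all intensities and jump effects are bounded and Dynkin's formula applies. Using $\mathcal{L}V \leqslant -V + K \leqslant K$ from Proposition \ref{prop:exp_inequality} and optional stopping at $\tau_a \wedge \sigma_m \wedge t$ gives $\E[][V(Z_{\tau_a \wedge \sigma_m \wedge t})] \leqslant V(Z_0) + Kt$; letting $m \to \infty$, invoking non-explosiveness ($\sigma_m \to \infty$ a.s.), $V \geqslant 0$ and Fatou's lemma, yields $\E[][V(Z_{\tau_a \wedge t})] \leqslant V(Z_0) + Kt$. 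On the other hand, since $V(y) = e^{\sum_{i} U(y^i)} \geqslant e^{|y|}$ whenever $|y| \geqslant 1$ (the coordinate realising the max-norm contributes $U(y^i) = |y^i| = |y|$, the others being nonnegative), and $|Z_{\tau_a}| \geqslant a$ on $\{\tau_a \leqslant t\}$, we get $V(Z_{\tau_a \wedge t}) \geqslant e^{a}\,\mathds{1}_{\{\tau_a \leqslant t\}}$ pointwise. Combining the two bounds, $\Proba[][\tau_a \leqslant t] \leqslant (V(Z_0) + Kt)\,e^{-a}$.

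It then remains to plug in the right scaling. Taking $t = nT$ and $a = \tfrac{\epsilon}{2}\sqrt{n}$, and using $V(Z_0) \leqslant C$ uniformly (as $|Z_0| \leqslant B$), I obtain $\Proba[][\sup_{s \in [0,nT]}|Z_s| \geqslant \tfrac{\epsilon}{2}\sqrt{n}] \leqslant (C + KnT)\,e^{-\frac{\epsilon}{2}\sqrt{n}}$, which tends to $0$ as $n \to \infty$ because the exponential dominates the polynomial factor. Combined with the reduction of the first paragraph, this gives $\lim_{n \to \infty}\Proba[][|\tilde{S}^{(n)} - \tilde{P}^{(n)}|_{\infty,[0,T]} \geqslant \epsilon] = 0$. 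I expect the only delicate point to be the rigorous application of Dynkin's formula to $V$ despite the intensities growing unboundedly in $y$: this is exactly why the localisation by $\sigma_m$, together with the non-explosiveness from Corollary \ref{corol:signal_modulated_non_explosive} and the favourable sign of $\mathcal{L}V$, is needed so that the limit $m \to \infty$ preserves the inequality. The càdlàg measurability technicality, namely that $\{\tau_a \leqslant t\}$ captures $\{\sup_{s \leqslant t}|Z_s| \geqslant a\}$, is routine.
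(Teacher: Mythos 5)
Your proof is correct, and it follows the paper's overall skeleton --- the Lyapunov bound $\mathcal{L}V \leqslant -V + K \leqslant K$ from Proposition \ref{prop:exp_inequality}, a maximal inequality for $V$ along the path, and the $\sqrt{n}$ rescaling --- but the key middle step is executed by a genuinely different route. The paper proves the statement as a special case of the general Proposition \ref{prop:lyapunov_convergence_proba_general}: there, the maximal inequality is not re-derived but cited (Kushner's supermartingale inequality), giving $\mathbb{P}\big[\sup_{u \leqslant T} V(S_u-P_u,X_u) \geqslant \lambda\big] \leqslant \big(\mathbb{E}[V(S_0-P_0,X_0)] + CT\big)/\lambda$, and only a \emph{polynomial} minorant $V(y,x) \geqslant \sum_i (y^i)^2\psi(y^i)\mathds{1}_{\{y^i \geqslant y_0\}}$ is used, precisely so that the same proposition applies verbatim to the queue-reactive model of Section \ref{subsec:queue_reactive}, where no exponential minorant is available. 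You instead re-derive the maximal estimate by hand (Dynkin's formula, optional stopping at the hitting time $\tau_a$, Fatou) and exploit the exponential form $V(y) \geqslant e^{|y|}$ specific to the signal-driven model. What your route buys is a self-contained argument and a quantitatively stronger, exponential tail bound $(\mathbb{E}[V(Z_0)] + KnT)e^{-\epsilon\sqrt{n}/2}$, versus the paper's polynomial decay in $n$; what it gives up is generality, since it would not transfer to the queue-reactive case. Two minor points: your localization by $\sigma_m$ is vacuous, because $\sigma_m = \tau_m \geqslant \tau_a$ for $m \geqslant a$, and stopping at $\tau_a$ already confines $(Z,X)$ to a region where the intensities (locally bounded, with $\mathcal{X}$ compact) and the jump sizes (at most $\max_i \delta^i$) are bounded, so Dynkin's formula applies directly; and the inclusion $\{\sup_{s \leqslant t}|Z_s| \geqslant a\} \subseteq \{\tau_a \leqslant t\}$ requires the harmless replacement of $a$ by, say, $a-1$, since the supremum need not be attained --- neither point affects the conclusion.
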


The corollary below states that, in the macroscopic scale, $P$ behaves like $S$. It is a direct consequence of Corollary \ref{corol:skorokhod_convergence}.

\begin{corollary}
    Under Assumptions \ref{assumption:signal_modulated_main_assumption} and \ref{assumption:limits}, $(\tilde{P}^{(n)})_{n \in \mathbb{N}^*}$ converges in distribution in $D([0, \infty), \mathbb{R}^d)$ to $\Sigma B$ where $B$ is a $d$-dimensional Brownian motion.
\end{corollary}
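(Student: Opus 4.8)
The plan is to deduce the statement from Proposition~\ref{prop:signal_modulated_convergence_proba} combined with the elementary observation that the rescaled efficient price is \emph{exactly} a $\Sigma$-Brownian motion for every $n$, not merely in the limit. Recall that $S_t = S_0 + \Sigma W_t$, so that
\begin{equation*}
    \tilde{S}^{(n)}_t = \frac{1}{\sqrt{n}}(S_{nt} - S_0) = \Sigma\, \frac{1}{\sqrt{n}} W_{nt} = \Sigma B^{(n)}_t, \qquad B^{(n)}_t \defeq \frac{1}{\sqrt{n}} W_{nt}.
\end{equation*}
By the scaling invariance of Brownian motion, each $B^{(n)}$ is again a standard $d$-dimensional Brownian motion; hence $\tilde{S}^{(n)}$ has the law of $\Sigma B$ for \emph{every} $n$, and in particular $\tilde{S}^{(n)}$ converges in distribution to $\Sigma B$ in $D([0,\infty), \mathbb{R}^d)$ trivially.

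It then remains to transfer this convergence from $\tilde{S}^{(n)}$ to $\tilde{P}^{(n)}$. First I would fix a metric $\rho$ inducing the Skorokhod topology on $D([0,\infty), \mathbb{R}^d)$, for instance one of the standard metrics of the form $\rho(f,g) = \int_0^\infty e^{-u}\big(d_u(f,g) \wedge 1\big)\,\diff u$, where $d_u$ is the Skorokhod metric on $D([0,u], \mathbb{R}^d)$. Since the identity time-change is admissible in the definition of $d_u$, one has $d_u(f,g) \leqslant |f-g|_{\infty, [0,u]}$, so that $\rho$ is dominated by the local uniform distance: $\rho(f,g) \to 0$ as soon as $|f-g|_{\infty, [0,T]} \to 0$ for every $T$. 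Proposition~\ref{prop:signal_modulated_convergence_proba} states exactly that $|\tilde{S}^{(n)} - \tilde{P}^{(n)}|_{\infty, [0,T]} \to 0$ in probability for every $T$; combining this with the domination above and dominated convergence, I would conclude that $\rho(\tilde{S}^{(n)}, \tilde{P}^{(n)}) \to 0$ in probability.

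The conclusion then follows from the converging-together (Slutsky) lemma in the Polish space $\big(D([0,\infty), \mathbb{R}^d), \rho\big)$: if $\tilde{S}^{(n)}$ converges in distribution to $\Sigma B$ and $\rho(\tilde{S}^{(n)}, \tilde{P}^{(n)}) \to 0$ in probability, then $\tilde{P}^{(n)}$ converges in distribution to $\Sigma B$. Concretely, for any bounded uniformly continuous $F$, splitting the expectation on the event $\{\rho(\tilde{S}^{(n)}, \tilde{P}^{(n)}) < \eta\}$ and using the modulus of continuity of $F$ on this event and the bound $2\|F\|_\infty$ together with the vanishing probability off it, gives $\E[][F(\tilde{P}^{(n)})] - \E[][F(\tilde{S}^{(n)})] \to 0$; since $\E[][F(\tilde{S}^{(n)})] = \E[][F(\Sigma B)]$ for all $n$ by the first paragraph, the portmanteau theorem yields the claim.

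The step I expect to require the most care is the passage from \emph{local uniform} convergence in probability of the difference to convergence in probability for the Skorokhod metric on the whole half-line. Because addition is not continuous in the $J_1$ topology, one cannot simply add limits; the argument goes through only because the limit $\Sigma B$ is almost surely continuous and the difference tends to $0$ uniformly on compacts, which is precisely what renders the converging-together lemma applicable. Thus the one genuinely technical point is verifying the domination of $\rho$ by the local uniform distance and handling the weighting over the successive intervals $[0,u]$; everything else reduces to Brownian scaling and a standard Slutsky argument.
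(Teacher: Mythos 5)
Your proposal is correct and takes essentially the same route as the paper: both rest on the observation that $\tilde{S}^{(n)}$ has \emph{exactly} the law of $\Sigma B$ by Brownian scaling, combined with the converging-together theorem (the paper cites Billingsley's Theorem 3.1 for this step, applied on $D([0,T],\mathbb{R}^d)$ for each $T$, and then patches to $D([0,\infty),\mathbb{R}^d)$ via Billingsley's Lemma 16.3). Your only deviation is handling the half-line directly through an integral-form Skorokhod metric dominated by the local uniform distance, which is a routine and valid substitute for the Lemma 16.3 step, provided one verifies (or cites, e.g.\ the Ethier--Kurtz metric, which enjoys the same domination) that such a metric indeed induces the Skorokhod topology on $D([0,\infty),\mathbb{R}^d)$.
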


\subsection{The queue-reactive model}
\label{subsec:queue_reactive}

\subsubsection{Model and notations}

We now concentrate on the queue-reactive model of \textcite{huang2015simulating}. Instead of modelling just a signal driving the price as in Subsection \ref{subsec:signal_modulated}, we model the whole state of the order book at a bounded distance from the reference price. For each stock $i$, we consider its limit order book (LOB) up to $K \in \mathbb{N}^*$ piles, on the bid and ask sides. We define the sets
\begin{align*}
    \mathbb{J} &\defeq \{a, b \} \times \{1,\dots, K\},
    \\
    \mathcal{Q} &\defeq \left\{q = (q^j)_{j \in \mathbb{J}}: \forall j \in \mathbb{J}, q^j \in \mathbb{N}, \exists (l,l') \text{ s.t. }q^{b,l}, q^{a,l'}>0, \left|\inf\{l:q^{a,l} > 0\} - \inf\{l:q^{b,l} > 0\}\right| \leqslant 1\right\}.
\end{align*}

$\mathcal{Q}$ represents all the possible states of the LOB of a stock, $q^j$ being the volume pending at pile $j$. $(a, l) \in \mathbb{J}$ represents the pile on the ask side, $l$ ticks above the reference price. $(b, l) \in \mathbb{J}$ represents the pile on the bid side, $l$ ticks below the reference price. As it is built, the reference price cannot more than half a tick far away from the mid-price. We impose there is at least one pending limit order on the bid and on the ask side of the observable LOB. For an order book state $q = (q^j)_{j \in \mathbb{J}} \in \mathcal{Q}$, we define the best bid and ask piles of the LOB as the ones with the best proposed price with non-zero volume pending, as well as the second-best:
\begin{align*}
    j^{a,best}(q) &\defeq \big(a, \inf \{l \in \{1,\dots,K\}: q^{a,l} > 0\}\big)\\
    j^{b,best}(q) &\defeq \big(b, \inf \{l \in \{1,\dots,K\}: q^{b,l} > 0\}\big) \\
    j^{a,best,2}(q) &\defeq \big(a, \inf \{l \in \{j^{a,best}(q)+1,\dots,K\}: q^{a,l} > 0\} \wedge(K+2)\big)\\
    j^{b,best,2}(q) &\defeq \big(b, \inf \{l \in \{j^{b,best}(q)+1,\dots,K\}: q^{b,l} > 0\} \wedge(K+2)\big).
\end{align*}
Imposing the second-best pile to be at $K + 2$ ticks is only one pile is filled on one side is for notational convenience, and the reason will appear below when we treat the reference price jumps.

We also define the subtraction operator on $\mathbb{J}$ which corresponds to the (signed) distance in ticks of two piles:
\begin{equation*}
    (s_1, l_1) - (s_2, l_2) = 
    \begin{cases}
        l_1 - l_2 &\text{if }s_1=s_2=a\\
        l_2 - l_1 &\text{if }s_1=s_2=b\\
        l_1 + l_2 -1 &\text{if }s_1=a \text{ and } s_2=b\\
        -l_1 - l_2 +1 &\text{if }s_1=b \text{ and } s_2=a
    \end{cases},\quad (s_1, l_1), (s_2, l_2) \in \mathbb{J}.
\end{equation*}
Using this operator, the spread in ticks for a LOB state $q \in \mathcal{Q}$ is $spread(q)\defeq j^{a,best}(q) - j^{b,best}(q)$. When the spread is odd, a bigger part of it is either above the reference price (the ask side) or below it (the bid side). For $q \in \mathcal{Q}$ and $j^{a,best}(q) = (a,l_1)$, $j^{b,best}(q) = (b,l_2)$, define $d(q) \defeq l_1 - l_2 \in \{-1, 0, 1\}$ the direction which the spread covers more.

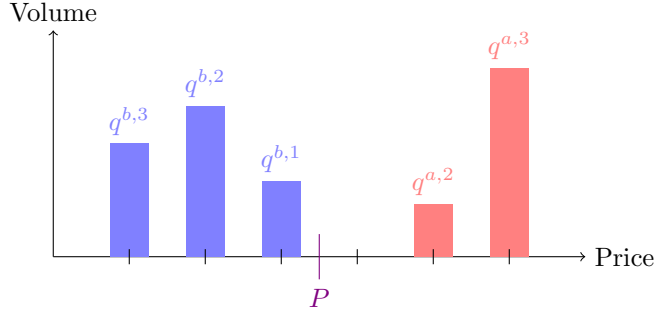
\begin{figure}
    \centering
    \begin{tikzpicture}
        \draw[->] (0,0) -- (7,0) node[right] {Price};
        \draw[->] (0,0) -- (0,3) node[above] {Volume};
    
        \fill[blue!50] (0.75,0) rectangle (1.25,1.5) node[midway, above] {1.5};
        \node[blue!50] at (1.0, 1.8) {$q^{b,3}$};
        \fill[blue!50] (1.75,0) rectangle (2.25,2) node[midway, above] {2};
        \node[blue!50] at (2.0, 2.3) {$q^{b,2}$};
        \fill[blue!50] (2.75,0) rectangle (3.25,1) node[midway, above] {1};
        \node[blue!50] at (3.0, 1.3) {$q^{b,1}$};
    
        \fill[red!50] (4.75,0) rectangle (5.25,0.7);
        \node[red!50] at (5.0, 1.0) {$q^{a,2}$};
        \fill[red!50] (5.75,0) rectangle (6.25,2.5);
        \node[red!50] at (6.0, 2.8) {$q^{a,3}$};

        \foreach \x in {1,2,3,4,5,6} {
        \draw (\x, 0.1) -- (\x, -0.1);
    }
    \draw[violet] (3.5, 0.3) -- (3.5, -0.3) node[below] {$P$};
    
    \end{tikzpicture}
    \caption{Illustration of a limit order book. Here, $j^{a, 1}(q)$ is empty, $j^{b,best}(q) = 1$, $j^{a,best}(q) = 2$ and $d(q) = 1$.}
    \label{fig:LOB}
\end{figure}

Let $\mathbb{T}\defeq \mathbb{T}^{-} \cup \mathbb{T}^{+} \cup \mathbb{T}^{modif}$ be the set of all event types for a stock where
\begin{align*}
    \mathbb{T}^{-} &\defeq \{-\} \times \mathbb{J} \times \mathbb{N}^*,\\
    \mathbb{T}^{+} &\defeq \{+\} \times \mathbb{J} \times \{b, a\} \times \mathbb{N}^*,\\
    \mathbb{T}^{modif} &\defeq \{\text{modif}\} \times \mathbb{J} \times  \mathbb{J}\times\mathbb{N}^*.
\end{align*}
An event $(-, j, n) \in \mathbb{T}^{-}$ corresponds to an order consuming $n$ units of the asset at pile $j$. An event $(-, j, b, n) \in \mathbb{T}^{+}$ corresponds to a buy limit order at pile $j$ with volume $n$, $(-, j, a, n) \in \mathbb{T}^{+}$ corresponds to a sell limit order at pile $j$ with volume $n$. An event $(\text{modif}, j_1, j_2, n)$ is a modification: a quantity $n$ that was pending at pile $j_1$ is now pending at pile $j_2$. A buy limit order is always modified to a buy limit order, and a sell limit order is always modified to a sell limit order.

\begin{remark}
    In practice, there are two types of orders that consume liquidity, represented by a \enquote{-} here: market orders and cancellations. Assuming market orders do not consume more than one pile, both have exactly the same role in LOB dynamics, so we do not separate them in the stability analysis. It is still possible to consider them separately for estimation purposes, see Section \ref{sec:estimation}.
\end{remark}

\begin{remark}
    It is necessary to specify if a limit order is buy or sell, mentioning the pile is not enough for orders in the spread. For example, if piles $(b, 2)$, $(b, 1)$ and $(a,2)$ are filled but $(a, 1)$ is empty, a limit order at pile $(a,1)$ could be either buy or sell.
\end{remark}

For each event type $e \in \mathbb{T}$, we define the set $\mathcal{Q}^{e, licit} \subset \mathcal{Q}$ of order book states for which such an event can occur. For $e=(-,j,n) \in \mathbb{T}^{-}$,
\begin{equation*}
    \mathcal{Q}^{e, licit} \defeq \{(q^{j'})_{j' \in \mathbb{J}} \in \mathcal{Q}: q^j \geqslant n\},
\end{equation*}
meaning that an order consuming a volume $n$ of the asset at pile $j$ is only possible when the said pile has at least $n$ units pending. For $e = (+,(s_1,l),s2,n) \in \mathbb{T}^{+}$,
\begin{equation*}
    \mathcal{Q}^{e,licit} \defeq
    \begin{cases}
        \mathcal{Q} & \text{if }s_1=s_2\\
        \{q \in \mathcal{Q}: l < j^{a,best}(q)\} &\text{if }s_1=a\text{ and }s_2=b\\
        \{q \in \mathcal{Q}: l < j^{b,best}(q)\} &\text{if }s_1=b\text{ and }s_2=a,
    \end{cases}
\end{equation*}
meaning a bid limit order cannot be placed at a price higher or equal than the best ask price, and an ask bid limit order cannot be placed at a price lower or equal than the best ask price. Finally, a modification $e=(\text{modif}, (s_1, l_1), j_2, n)\in \mathbb{T}^{modif}$ is licit if and only if n units of the asset can be taken from $(s_1, l_1)$ and can be added to $j_2$:
\begin{equation*}
    \mathcal{Q}^{e, licit} \defeq \begin{cases}Q^{(-, (s_1, l_1), n),licit} \cap Q^{(+, j_2, s_1, n),licit}& \text{if } j_2 \neq (s_1,l_1)\\
       \emptyset & \text{if } j_2 = (s_1,l_1)
    \end{cases},
\end{equation*}
excluding modifications in the same queue.

As in \parencite[Section 2.2.2]{huang2015simulating}, we consider a reference price that is either the mid-price if the spread is odd, or at half-tick from the mid-price if the spread is even. At a mid-price change, for example when the best queue is wiped or a limit order is placed in the spread, the new reference price is either:
\begin{itemize}
    \item the new mid-price if the new spread is even,
    \item the new mid-price plus or minus half a tick if the spread is odd, whatever is closer to the old reference price.
\end{itemize}
Precisely, we define the price variation function $\delta p^{e}$ for each event type $e \in \mathbb{T}$. For a liquidity consuming event $e=(-,j=(s,l), n) \in \mathbb{T}^{-}$,
\begin{equation*}
    \delta p^e(q) = \begin{cases}
        0 &\text{if } j \neq j^{s,best}(q)\text{ or }n \neq q^{j}\\
        \floor*{\frac{1}{2}(d(q)+j^{a,best,2}(q)-j^{a,best}(q))} &\text{if } s=a \text{ and } l = j^{a,best}(q)\text{ and }n = q^{j}\\
        \floor*{\frac{1}{2} + \frac{1}{2}(d(q)+j^{b,best,2}(q)-j^{b,best}(q))} &\text{if } s=b \text{ and } j = j^{b,best}(q)\text{ and }n = q^{j}
    \end{cases},\quad q \in \mathcal{Q}^{e,licit}.
\end{equation*}
For a limit order $e=(+,j,s,n) \in \mathbb{T}^{+}$,
\begin{equation*}
    \delta p^e(q) = \begin{cases}
        0 &\text{if } s=b\text{ and }j-j^{b,best}(q) \leqslant 0\\
        0& \text{if } s=a\text{ and }j-j^{a,best}(q) \geqslant 0\\
        \floor*{\frac{1}{2} + \frac{1}{2}(d(q)+j-j^{a,best}(q))} & \text{if }s=a \text{ and } j - j^{a,best}(q) < 0 \\
        \floor*{\frac{1}{2}(d(q)+j-j^{b,best}(q))} & \text{if }s=b \text{ and } j - j^{b,best}(q) > 0 
    \end{cases},\quad q \in \mathcal{Q}^{e,licit}.
\end{equation*}
For a modification $e=(\text{modif},j_1=(s,l),j_2,n) \in \mathbb{T}^{modif}$,
\begin{equation*}
    \delta p^e(q) = \begin{cases}
        \delta p^{(+,j_2,s, n)}(q) &\text{if } s=b\text{ and }j-j^{b,best}(q) > 0\\
        \delta p^{(+,j_2,s, n)}(q) &\text{if } s=a\text{ and }j-j^{a,best}(q) < 0\\
        \min\bigg(\begin{aligned}[t] &\floor*{\frac{1}{2}(d(q)+j^{a,best,2}(q)-j^{a,best}(q))},\\ &\floor*{\frac{1}{2}(d(q)+j_2-j^{a,best}(q))} \bigg)\end{aligned} & 
        \begin{aligned}[t]&\text{if }s=a \text{ and } j_1=j^{a,best}(q)\\ &\text{and }j_2 - j^{a,best}(q) > 0\\ & \text{and }n=q^j \end{aligned}\\
        \max\bigg(\begin{aligned}[t] &\floor*{\frac{1}{2} + \frac{1}{2}(d(q)+j^{b,best,2}(q)-j^{b,best}(q))},\\&\floor*{\frac{1}{2} + \frac{1}{2}(d(q)+j_2-j^{b,best}(q))} \bigg)\end{aligned} & 
        \begin{aligned}[t]&\text{if }s=b \text{ and } j_1=j^{b,best}(q)\\ &\text{and }j_2 - j^{b,best}(q) < 0\\ & \text{and }n=q^j \end{aligned} 
    \end{cases}, q \in \mathcal{Q}^{e,licit}.
\end{equation*}

\begin{remark}
    Since $j^{s,best,2}(q)$, $s \in \{a,b\}$ is fixed at $K+2$ if only one pile of side $s$ contains pending limit orders, a depletion of the full side $s$ will always trigger a reference price jump. Otherwise, the reference price would never jump in the case $K = 1$.
\end{remark}

We also define the variation in volume $\delta q$ in the LOB after every type of order. This quantity does not depend on the current state of the LOB. For $e \in \mathbb{T}$, we have
\begin{equation*}
    \delta q^e = \begin{cases}
        -ne_j & \text{if } e=(-,j,n) \in \mathbb{T}^{-}\\
        ne_j & \text{if } e=(+,j,s,n) \in \mathbb{T}^{+}\\
        -ne_{j_1} + ne_{j_2} & \text{if } e=(+,j_1,j_2,n) \in \mathbb{T}^{modif}.
    \end{cases}
\end{equation*}
We now introduce a notation to describe the state of the LOB after new piles are discovered due to a shift of the observation window. Let $q  = (q^j)_{j \in \mathbb{J}}\in \mathbb{N}^{\mathbb{J}}$ and $v=(v^m)_{1 \leqslant m \leqslant K'} \in \mathbb{N}^{K'}$ for some $1 \leqslant K' \leqslant K$. Then $[v, q] = (q^j_1)_{j \in \mathbb{J}}$ and $[q,v] = (q^j_2)_{j \in \mathbb{J}}$ where
\begin{align*}
    &(q_1^{b,K},\dots,q_1^{b,1},q_1^{a,1},\dots,q_1^{a,K})
    = (v^{K'},\dots,v^1,q^{b,K},\dots,q^{b,1}, q^{a,1},\dots,q^{a,K-K'})\\
    &(q_2^{b,K},\dots,q_2^{b,1},q_2^{a,1},\dots,q_2^{a,K})
    = (q^{b,K-K'},\dots,q^{b,1}, q^{a,1},\dots,q^{a,K-K'},v^1,\dots,v^K).
\end{align*}
For $q = (q_i^j)_{i \in \mathbb{I},j\in \mathbb{J}} \in (\mathbb{N}^{\mathbb{J}})^{\mathbb{I}}$ and $i \in \stocks$ we define the shifts on the component $i$ $[v,q]^{(i)}=(q^*_{i'})_{i' \in \stocks} \in (\mathbb{N}^{\mathbb{J}})^{\mathbb{I}}$ and $[q,v]^{(i)}=(q^{**}_{i'})_{i' \in \stocks} \in (\mathbb{N}^{\mathbb{J}})^{\mathbb{I}}$ by
$q^*_{i'} = q^{**}_{i'} = q_{i'}$ if $i' \neq  i$, $q^*_i = [v,q]_i$ and $q^{**}_i = [q,v]_i$. This bracket notation is illustrated in Figure \ref{fig:illustration_crochet}.

\begin{figure}
    \centering
    \begin{subfigure}[b]{0.45\textwidth}
        \centering
        \resizebox{\textwidth}{!}{
        \begin{tikzpicture}
            \draw[->] (0,0) -- (7,0) node[right] {Price};
            \draw[->] (0,0) -- (0,3) node[above] {Volume};
        
            \fill[blue!50] (0.75,0) rectangle (1.25,1.8);
            \node[blue!50] at (1.0, 2.1) {$v^2$};
            \fill[blue!50] (1.75,0) rectangle (2.25,1.1);
            \node[blue!50] at (2.0, 1.4) {$v^1$};
            \fill[blue!50] (2.75,0) rectangle (3.25,1.5);
            \node[blue!50] at (3.0, 1.8) {$q^{b,3}$};
        
            \fill[red!50] (3.75,0) rectangle (4.25,2);
            \node[red!50] at (4.0, 2.3) {$q^{b,2}$};
            \fill[red!50] (4.75,0) rectangle (5.25,1);
            \node[red!50] at (5.0, 1.3) {$q^{b,1}$};
    
            \foreach \x in {1,2,3,4,5,6} {
            \draw (\x, 0.1) -- (\x, -0.1);
        }
        \draw[violet] (3.5, 0.3) -- (3.5, -0.3) node[below] {$P$};
        
        \end{tikzpicture}
        }
        \caption{$[v,q]$}
    \end{subfigure}
    \hfill
    \begin{subfigure}[b]{0.45\textwidth}
        \centering
        \resizebox{\textwidth}{!}{
        \begin{tikzpicture}
            \draw[->] (0,0) -- (7,0) node[right] {Price};
            \draw[->] (0,0) -- (0,3) node[above] {Volume};
        
            \fill[blue!50] (0.75,0) rectangle (1.25,1);
            \node[blue!50] at (1.0, 1.3) {$q^{b,1}$};
            \fill[blue!50] (2.75,0) rectangle (3.25,0.7);
            \node[blue!50] at (3.0, 1.0) {$q^{a,2}$};
        
            \fill[red!50] (3.75,0) rectangle (4.25,2.5);
            \node[red!50] at (4.0, 2.8) {$q^{a,3}$};
            \fill[red!50] (4.75,0) rectangle (5.25,1.1);
            \node[red!50] at (5.0, 1.4) {$v^1$};
            \fill[red!50] (5.75,0) rectangle (6.25,1.8);
            \node[red!50] at (6.0, 2.1) {$v^2$};
    
            \foreach \x in {1,2,3,4,5,6} {
            \draw (\x, 0.1) -- (\x, -0.1);
        }
        \draw[violet] (3.5, 0.3) -- (3.5, -0.3) node[below] {$P$};
        
        \end{tikzpicture}}
        \caption{$[q,v]$}
    \end{subfigure}

    \caption{Illustration of the notations $[v,q]$ and $[q,v]$ where $q$ is the one drawn in Figure \ref{fig:LOB}.}
    \label{fig:illustration_crochet}
\end{figure}
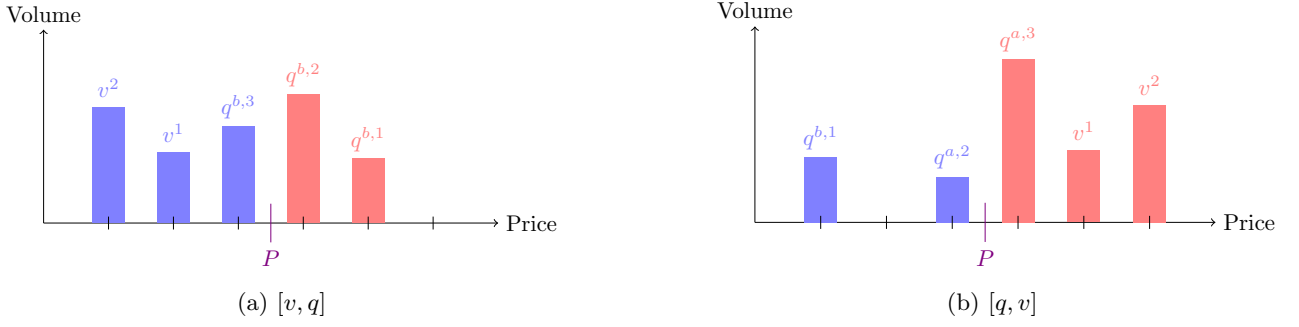

Assumption \ref{assumption:QR_main_assumption} below is the main assumption characterizing the multi-stock queue-reactive models we study here. It will be assumed to hold in this whole Subsection. It relies heavily on the notations we introduced in this section. It can be summarized as follows:
\begin{itemize}
    \item An event only affects one asset at a time.
    \item If an event does not trigger a reference price modification, the volumes are updated in the order book according to the order type (add volume to the specified pile for a limit order, remove volume in the case of a market order or a cancellation, remove volume from a pile and add it to another in the case of a modification).
    \item If an event triggers a reference price jump, for example an increase of $j$ ticks, the volumes are first updated in the LOB. Then, the observation window slides upwards by $j$ ticks. We discover $j$ new ask piles, whose volumes are drawn randomly. In parallel, $j$ of the piles we were observing on the bid side disappear. If the bid side is left empty, some volume is added, at random, on the last observed bid pile. It can be interpreted as quick modifications from orders that were too far from the opposite side still expecting to be executed. This way, we ensure there is liquidity on the bid and ask sides at all times. One can observe that the mid-price stays at half-tick from the reference price in any case. The same approach was used in the models with $K=1$ of \parencite{cont2012order,cont2013price,lehalle2017limit,lehalle2021optimal}.
\end{itemize}
\begin{assumption}
    \label{assumption:QR_main_assumption}
    The set $\mathbb{M}$ is equal to $\stocks \times \mathbb{T}$. The state space $\mathcal{X}$ is equal to $\mathcal{Q}^{\stocks}$. The intensity $\Lambda \equiv 0$. For $i \in \stocks$, there exist probability measures $\mathcal{V}^{i,b,1}$, $\mathcal{V}^{i,a,1}$, \dots, $\mathcal{V}^{i,b,K}$, $\mathcal{V}^{i,a,K}$ on the sets $\mathbb{N^*}$, $\mathbb{N}^*$, \dots, $(\mathbb{N^*})^K$, $(\mathbb{N}^*)^K$ respectively such that everything below holds. We suppose they all have finite first-order moments. Let $i \in \stocks$, $e \in \mathbb{T}$, $q=(q_{i'})_{i' \in \stocks} \in \mathcal{Q}^{\stocks}$ and $p=(p_{i'})_{i' \in \stocks}$.
    \begin{itemize}
        \item If $\delta p^e(q_i) = 0$, then $\mu^{i,e}(q,p, \{q + e_i \otimes \delta q^e\} \times \{p\} ) = 1$.
        \item If $p^e(q_i) > 0$, let $(q'^j)_{j \in \mathbb{J}} \defeq q + \delta q^e$. If $(q'^{(b,1)},\dots,q'^{(b,\delta p^e(q_i))}) = (0,\dots,0)$, then
        \begin{equation*}
            \mu^{i,e}(q,p, \{[q + e_i \otimes \delta q^e, v]^{(i)} + e_i \otimes (w e_{b,K})\} \times \{p+\delta^i\delta p^e(q_i)e_i\} ) = \mathcal{V}^{i,a,\delta p^e(q_i)}(\{v\})\mathcal{V}^{i,b,1}(\{w\})
        \end{equation*}
        for $(v,w) \in \mathbb{N}^* \times (\mathbb{N}^*)^{\delta p^e(q_i)}$.
        If $(q'^{(b,1)},\dots,q'^{(b,\delta p^e(q_i))}) \neq (0,\dots,0)$, then
        \begin{equation*}
            \mu^{i,e}(q,p, \{[q + e_i \otimes \delta q^e, v]^{(i)}\} \times \{p+\delta^i\delta p^e(q_i)e_i\} ) = \mathcal{V}^{i,a,\delta p^e(q_i)}(\{v\})
        \end{equation*}
        for $v \in (\mathbb{N}^*)^{\delta p^e(q_i)}$.
        \item If $p^e(q_i) < 0$, let $(q'^j)_{j \in \mathbb{J}} \defeq q + \delta q^e$. If $(q'^{(a,1)},\dots,q'^{(a,-\delta p^e(q_i))}) = (0,\dots,0)$, then
        \begin{equation*}
            \mu^{i,e}(q,p, \{[v,q + e_i \otimes \delta q^e]^{(i)} + e_i \otimes (w e_{a,K})\} \times \{p+\delta^i\delta p^e(q_i)e_i\} ) = \mathcal{V}^{i,b,-\delta p^e(q_i)}(\{v\})\mathcal{V}^{i,a,1}(\{w\})
        \end{equation*}
        for $(v,w) \in \mathbb{N}^* \times (\mathbb{N}^*)^{-\delta p^e(q_i)}$.
        If $(q'^{(a,1)},\dots,q'^{(a,-\delta p^e(q_i))}) \neq (0,\dots,0)$, then
        \begin{equation*}
            \mu^{i,e}(q,p, \{[v,q + e_i \otimes \delta q^e]^{(i)}\} \times \{p+\delta^i\delta p^e(q_i)e_i\} ) = \mathcal{V}^{i,b,-\delta p^e(q_i)}(\{v\})
        \end{equation*}
        for $v \in (\mathbb{N}^*)^{-\delta p^e(q_i)}$.
    \end{itemize}

    For $(i,e) \in \mathbb{M}$, there exists a measurable intensity function $\tilde{\Lambda}^{i,e}:\mathcal{Q} \times \mathbb{R} \to [0, \infty)$ such that for $q=(q_i)_{i \in \stocks} \in \mathcal{Q}^{\stocks}$ and $y=(y^i)_{i \in \stocks}$,
    \begin{equation*}
        \Lambda^{i,e}(q,y) = \tilde{\Lambda}^{i,e}(q_i,y^i).
    \end{equation*}
    Furthermore, suppose that $\tilde{\Lambda}^{i,e}\mathds{1}_{(\mathcal{Q} \setminus \mathcal{Q}^{e,licit}) \times \mathbb{R}} \equiv 0$.
\end{assumption}

Under Assumption \ref{assumption:QR_main_assumption}, $(S-P, X)$ is Markovian with infinitesimal generator $\mathcal{L}$ defined below. For $\phi$ regular enough and $(y, q=(q_i^j)_{i \in \stocks, j \in \mathbb{J}}) \in \mathbb{R}^d \times \mathcal{Q}^{\stocks}$
\begin{equation*}
    \begin{split}
        \mathcal{L} \phi (y,q)
        &=\frac{1}{2} \tr \big(\Sigma \Sigma^T \nabla^2_y \phi(y, q)\big)\\
        &\phantom{=}+ \sum_{i \in \stocks} \sum_{\substack{e \in \mathbb{T}\\\delta p^e(q_i) = 0}}\big( \phi(y,  q + e_i \otimes \delta q^e) - \phi(y, q) \big) \tilde{\Lambda}^{i, e}(q_i, y^i)\\
        &\phantom{=}+ \sum_{i \in \stocks} \sum_{\substack{e \in \mathbb{T}\\\delta p^e(q_i) > 0}}\bigg(\int_{(\mathbb{N}^*)^{\delta p^e(q_i)}} \phi(y-\delta^i\delta p^e(q_i)e_i, [q + e_i \otimes \delta q^e, v]^{(i)})\mathcal{V}^{i,a,\delta p^e(q_i)}(\diff v) - \phi(y, q) \bigg) \tilde{\Lambda}^{i, e}(q_i, y^i)\\
        &\phantom{=}+ \sum_{i \in \stocks} \sum_{\substack{e \in \mathbb{T}\\\delta p^e(q_i) < 0}}\bigg(\int_{(\mathbb{N}^*)^{-\delta p^e(q_i)}} \phi(y-\delta^i\delta p^e(q_i)e_i, [v,q + e_i \otimes \delta q^e]^{(i)})\mathcal{V}^{i,b,-\delta p^e(q_i)}(\diff v) - \phi(y, q) \bigg) \tilde{\Lambda}^{i, e}(q_i, y^i).
    \end{split}
\end{equation*}

\subsubsection{Results}
\label{subsubsec:qr_results}
We start stating some assumptions that will ensure that finite volumes stay in the LOB, and that its dynamics follow the efficient price. They are similar to the assumptions needed for ergodicity of the queue-reactive model given by \textcite{huang2017ergodicity}.

\begin{assumption}
    \label{assumption:non_exploding_queue}
    There exist $Q \in \mathbb{N}^*$ and $\lambda \in (0,\infty)$ such that for $i \in \stocks$, $q=(q^j)_{j \in \mathbb{J}} \in \mathcal{Q}$, $j \in \mathbb{J}$ and $y \in \mathbb{R}$, if $q^j > Q$, then
    \begin{equation*}
         \sum_{\substack{e= (+,j',s,n) \in \mathbb{T}^+ \\ j' = j}}n\tilde{\Lambda}^{i,e}(q,y)
        +\sum_{\substack{e= (\text{modif},j',j'',n) \in \mathbb{T}^{modif} \\ j'' = j}}n\tilde{\Lambda}^{i,e}(q,y) \leqslant \lambda.
    \end{equation*}
\end{assumption}

\begin{assumption}
    \label{assumption:finite_expectation_size_orders}
    There exists $Q' \in \mathbb{N}^*$ and $\lambda \in (0,\infty]$ such that for all $i \in \stocks$, $j \in \mathbb{J}$, $q \in \mathcal{Q}$ and $y \in \mathbb{R}$,
    \begin{equation*}
        \sum_{\substack{e=(+,j',n) \in \mathbb{T}^+ \\ j'=j \\n > Q'}}n \tilde{\Lambda}^{i,e}(q,y)
        + \sum_{\substack{e=(\text{modif},j',j'',n)\in \mathbb{T}^{modif}\\j'=j \\n > Q'}}n \tilde{\Lambda}^{i,e}(q,y)
        \leqslant \lambda.
    \end{equation*}
\end{assumption}

\begin{assumption}
    \label{assumption:qr_local_boundedness}
    For $i \in \stocks$ and $e \in \mathbb{T}^+ \cup \mathbb{T}^{modif}$, $\sup_{q \in \mathcal{Q}} \tilde{\Lambda}^{i,e}(q, \cdot)$ is locally bounded. For $i \in \stocks$ and $e \in \mathbb{T}$, $\sup_{q \in \mathcal{Q}} \tilde{\Lambda}^{i,e}(q, \cdot) \mathds{1}_{\{\delta p^e(q) \neq 0\}}$ is locally bounded.
\end{assumption}

\begin{assumption}
    \label{assumption:qr_limits}
    We have the following bounds at $\pm \infty$
    \begin{equation*}
        \begin{split}
            \limsup_{y \to \infty} \inf_{q \in \mathcal{Q}}\Bigg[&\sum_{\substack{e=(+,j,s,n) \in \mathbb{T}^{+}\\s= a}} n \tilde{\Lambda}^{i, e}(q, y)
        +\sum_{\substack{e=(\text{modif},(s,l),j,n) \in \mathbb{T}^{modif}\\s=a\\ j - j^{a,best(q)} < 0}} n\tilde{\Lambda}^{i, e}(q, y)\\
        &+\sum_{\substack{e=(-,(s,l),n) \in \mathbb{T}^-\\s=b\\ (s,l)=j^{b,best}(q)}}n\tilde{\Lambda}^{i, e}(q, y)
        + \sum_{\substack{e=(\text{modif},(s,l),j,n) \in \mathbb{T}^{modif}\\s=b\\ (s,l)=j^{b,best}(q), j - j^{b,best}(q) < 0}}n\tilde{\Lambda}^{i, e}(q, y)\Bigg] < \infty
        \end{split}
    \end{equation*}
    and
    \begin{equation*}
        \begin{split}
            \limsup_{y \to -\infty} \inf_{q \in \mathcal{Q}}\Bigg[&\sum_{\substack{e=(+,j,s,n) \in \mathbb{T}^{+}\\s= b}} n \tilde{\Lambda}^{i, e}(q, y)
        +\sum_{\substack{e=(\text{modif},(s,l),j,n) \in \mathbb{T}^{modif}\\s=b\\ j - j^{b,best(q)}> 0}} n\tilde{\Lambda}^{i, e}(q, y)\\
        &+\sum_{\substack{e=(-,(s,l),n) \in \mathbb{T}^-\\s=a\\ (s,l)=j^{a,best}(q)}}n\tilde{\Lambda}^{i, e}(q, y)
        + \sum_{\substack{e=(\text{modif},(s,l),j,n) \in \mathbb{T}^{modif}\\s=a\\ (s,l)=j^{a,best}(q), j - j^{a,best}(q) > 0}}n\tilde{\Lambda}^{i, e}(q, y)\Bigg] < \infty.
        \end{split}
    \end{equation*}
    There exists a function $\xi:\mathbb{R} \to [0,\infty)$ such that $\lim_{y \to \infty} \xi(y) = \infty$ and, for $i \in \stocks$,
    \begin{align*}
        \sum_{\substack{e=(-,(s,l),n) \in \mathbb{T}^-\\s=a}}n\tilde{\Lambda}^{i, e}(q, y) & \geqslant y \xi(y),\\
        \sum_{\substack{e=(-,(s,l),n) \in \mathbb{T}^-\\s=b}}n\tilde{\Lambda}^{i, e}(q, y) & \geqslant -y \xi(-y).
    \end{align*}
\end{assumption}

Assumption \ref{assumption:non_exploding_queue} is there to ensure that queue sizes do not grow too fast: once their size reached a threshold, the intensity of arrival of limit orders remains bounded. Assumption \ref{assumption:finite_expectation_size_orders} also moderates the growth of queue sizes: it states that the arrival intensity of big orders remains bounded. Assumption \ref{assumption:qr_local_boundedness} is there to guarantee that, as the efficient price remains at a bounded distance from the reference price, the intensities of arrival of limit orders, modifications and reference price changes remain bounded. Assumption \ref{assumption:qr_limits} is the one making the connection between the efficient price and the behavior of the order flow. If the efficient price of an asset is very higher than the efficient price, there will be many cancellations on the ask side (which can also be interpreted as modification to a position beyond the observable LOB), since the pending limit orders are at a bad price for those who placed them. The fact that Assumption \ref{assumption:QR_main_assumption} guarantees that there is always liquidity on the ask side, imposing an arbitrarily large intensity of arrival of cancellations is non-contradictory. On the contrary, they tend to place fewer limit sell orders or modify them to a lower price. Market orders arrivals also increase since people can buy the asset at a price lower than its actual value. On the bid side, those who placed limit orders tend either to keep them where they are, getting a better price, or modify them to higher prices to get executed faster, still getting good value for their price.

Proposition \ref{prop:qr_lyapunov} below, proved in Section \ref{subsec:proofs_queue_reactive}, builds a suitable Lyapunov function for the queue-reactive model, that will ensure its stability.
\begin{proposition}
    \label{prop:qr_lyapunov}
    Under Assumptions \ref{assumption:QR_main_assumption}, \ref{assumption:non_exploding_queue}, \ref{assumption:finite_expectation_size_orders}, \ref{assumption:qr_local_boundedness} and \ref{assumption:qr_limits},
    there exist a $C^{2}$ even function $\psi:\mathbb{R} \to [0,\infty)$ having $+\infty$ as a limit at $\pm \infty$, non-decreasing on $[0,\infty)$, a constant $C \in (0, \infty)$, and a norm-like function $V:\mathbb{R}^d \times \mathcal{Q}^{\stocks} \to [0,\infty)$, $C^2$ in its first variable $y$, such that for $q =(q_i^j)_{i\in \stocks,j\in \mathbb{J}} \in \mathcal{Q}^{\stocks}$ and $y = (y^i)_{i \in \stocks}$,
    \begin{equation*}
        V(y,q) \geqslant \sum_{i \in \stocks} (y^i)^2 \psi(y^i) \text{ and } \mathcal{L}V(y,q) \leqslant C.
    \end{equation*}
\end{proposition}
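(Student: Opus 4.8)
The plan is to exhibit an explicit Foster–Lyapunov function of additive form
\[
V(y,q) = \sum_{i\in\stocks}\Big[\Phi(y^i) + F(y^i)\,A_i(q) + G(y^i)\,B_i(q)\Big] + \eta\sum_{i\in\stocks}\sum_{j\in\mathbb{J}} q_i^j,
\]
where $A_i(q)=\sum_{l=1}^K q_i^{a,l}$ and $B_i(q)=\sum_{l=1}^K q_i^{b,l}$ are the total ask and bid volumes of stock $i$, $\eta>0$ is a small constant, $\psi$ is an even $C^2$ function, nondecreasing on $[0,\infty)$ with $\psi\to\infty$ and mild growth (e.g. $\psi(y)=\log(e+y^2)$), $\Phi(y)\defeq y^2\psi(y)$ so that $\Phi''=O(\psi)$, and $F,G$ are nonnegative bounded $C^2$ functions with $F$ vanishing on $(-\infty,0]$, equal to a constant $f_\infty>0$ on $[1,\infty)$, and $G(y)=F(-y)$. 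Then $V$ is norm-like and the lower bound $V(y,q)\geq\sum_i (y^i)^2\psi(y^i)$ holds by construction. The whole difficulty is to prove $\mathcal{L}V\leq C$.

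I would split $\mathbb{R}^d$ into the region where $|y^i|\leq R$ for all $i$ and its complement. On $\{|y|\leq R\}$, Assumption \ref{assumption:qr_local_boundedness} bounds the limit-order, modification and price-change intensities uniformly in $q$, Assumption \ref{assumption:non_exploding_queue} caps the volume inflow into any pile of size exceeding $Q$, and Assumption \ref{assumption:finite_expectation_size_orders} caps the arrival of large orders; combined with the finite first moments of the discovery laws $\mathcal{V}^{i,s,m}$ of Assumption \ref{assumption:QR_main_assumption}, these make each jump family of $\mathcal{L}V$ bounded by a constant there, the diffusion term being bounded since $\Phi''$ is bounded on $[-R,R]$. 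On the complement, say $y^i\geq R$ (the case $y^i\leq -R$ being symmetric through $G$ and the bid side), the diffusion term contributes $+\tfrac12(\Sigma\Sigma^T)_{ii}\Phi''(y^i)=O(\psi(y^i))$, which must be dominated. Since $F$ is flat on $[1,\infty)$, the diffusion of the coupling term vanishes for $y^i\geq 1$, so the restoring must come from the jump part.

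For the restoring I would use the last inequality of Assumption \ref{assumption:qr_limits}: the volume-weighted rate of ask-side liquidity consumption is at least $y^i\xi(y^i)$. Every such event removes its volume from $A_i$, so the coupling term receives a jump drift of order $-f_\infty\,y^i\xi(y^i)$, which dwarfs $O(\psi(y^i))$ because $\xi\to\infty$. The delicate point is that a consumption emptying the best ask pile also raises the reference price and shifts the observation window: this injects fresh ask volume (finite mean, by Assumption \ref{assumption:QR_main_assumption}), raising $A_i$, but simultaneously lowers $y^i$ by at least $\delta^i$, so $\Phi$ drops by $\Phi'(y^i)\delta^i=O(y^i\psi(y^i))$. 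Summing per jump, each price-up event changes $V$ by at most $-c\,y^i\psi(y^i)+O(1)<0$ for large $y^i$, \emph{whatever} the jump rate; hence the window-shift contributions are harmless and can only help. The events acting in the wrong direction — ask limit orders refilling $A_i$ and best-bid consumption pushing the price down — are precisely the quantities appearing in the first two inequalities of Assumption \ref{assumption:qr_limits}, which together with Assumption \ref{assumption:non_exploding_queue} keep $F(y^i)\sum n\tilde{\Lambda}^{i,\cdot}$ bounded. Collecting everything yields $\mathcal{L}V\leq -c'\,y^i\xi(y^i)+O(\psi(y^i))+O(1)\leq C$ on this region.

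The main obstacle is exactly this coupling between the price gap and the book through the window-shift mechanism. The guaranteed restoring force is stated as a volume-weighted consumption rate summed over \emph{all} ask piles and order sizes, whereas the reference price moves only when the best pile is emptied exactly, and each such move reshuffles the book and discovers new volume. The argument must show that in every regime — small best pile (frequent price jumps, restoring carried by the drop of $\Phi$) or large best pile (rare jumps, restoring carried by the decay of $F(y^i)A_i$ under fast consumption) — one mechanism beats the diffusion growth, while the moderate-$y$ contributions of the coupling term (where $F''\neq 0$ multiplies the possibly large $A_i$) are absorbed by the pure-volume term $\eta\sum_{i,j}q_i^j$ and the bounds of Assumptions \ref{assumption:non_exploding_queue}–\ref{assumption:qr_local_boundedness}. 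Calibrating $\psi$, $F$, $G$ and $\eta$ so that these estimates hold simultaneously and uniformly in the book state $q$ is the technical heart of the proof.
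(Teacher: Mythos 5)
Your ansatz has the same architecture as the paper's Lyapunov function (price-gap term $(y^i)^2\psi(y^i)$, plus a coupling that activates the ask volumes when $y^i>0$ and the bid volumes when $y^i<0$, plus a pure-volume term, with the drop of $\Phi$ at price-up jumps neutralizing the freshly discovered volume), but two steps in your argument fail.

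First, you couple $F(y^i)$ to the \emph{uncapped} total ask volume $A_i(q)$. On the transition zone where $F''\neq 0$ (a subset of $[0,1]$), the diffusion part of the generator contains $\tfrac{1}{2}(\Sigma\Sigma^T)_{ii}F''(y^i)A_i(q)$, which is unbounded over $q$. You claim this is absorbed by $\eta\sum_{i,j}q_i^j$, but that term acts on $\mathcal{L}V$ only through jumps, and Assumptions \ref{assumption:non_exploding_queue} and \ref{assumption:finite_expectation_size_orders} only bound the volume \emph{inflow} from above; no assumption provides an outflow (negative jump drift) growing with $A_i(q)$ --- the guaranteed consumption rate $y\xi(y)$ from Assumption \ref{assumption:qr_limits} is independent of $q$ and bounded for $y^i\in[0,1]$. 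So at a point $y^*$ with $F''(y^*)>0$, taking $A_i(q)\to\infty$ makes $\mathcal{L}V(y^*,q)\to\infty$, and the inequality $\mathcal{L}V\leqslant C$ fails. This is precisely why the paper couples $\rho(\pm y^i)$ only to the \emph{capped} volumes $\min(q_i^j,Q+Q')$, so that the offending diffusion term is bounded by $\|\rho''\|_{\infty}\,2K(Q+Q')$.

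Second, your final estimate $\mathcal{L}V\leqslant -c'y\xi(y)+O(\psi(y))+O(1)$ omits the price-\emph{down} events (full consumption of the best bid pile, ask limit orders or modifications inside the spread). For $y^i\geqslant y_2$ these have rate at most $M$ by the first inequality of Assumption \ref{assumption:qr_limits}, but each one increases $y^i$ by a bounded amount and hence increases $\Phi(y^i)=(y^i)^2\psi(y^i)$ by $\Theta(y^i\psi(y^i))$, since $\Phi'\sim y\psi(y)$; this is an order-$My\psi(y)$ term, not $O(\psi)$. The correct bound is therefore of the form $-c'y\xi(y)+CMy\psi(y)+O(\psi)+O(1)$ (compare the paper's $C_0+M(1+M')-y\xi(y)+yM\psi(y)$), and boundedness requires $\xi$ to dominate $\psi$ at infinity. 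With your fixed choice $\psi(y)=\log(e+y^2)$ this fails whenever $\xi$ grows more slowly than logarithmically, which Assumption \ref{assumption:qr_limits} permits ($\xi$ is only required to tend to infinity). Repairing this forces $\psi$ to be built \emph{from} $\xi$ so that $\xi/\psi\to\infty$ while keeping $\psi''\leqslant 0$ and $|\psi'(y)|\leqslant\frac{1}{1+y}$ (so the diffusion contribution $1+2y\psi'+\tfrac{y^2}{2}\psi''$ stays bounded). That construction is exactly Lemma \ref{lem:bound_from_below} and Corollary \ref{corol:bound_from_below}, the technical heart of the paper's proof, and it is missing from your proposal.
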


By Proposition \ref{prop:qr_lyapunov} and \parencite[Theorem 2.1]{meyntweedieIII}, $(S-P, X)$ is non-explosive. The purpose of Assumption \ref{assumption:qr_local_boundedness_all} is to ensure the non-explosiveness of $N$.
 Proposition \ref{prop:non_explosiveness_lyapunov} then gives us Corollary \ref{corol:qr_non_explosive}.

\begin{assumption}
    \label{assumption:qr_local_boundedness_all}
    For $i \in \stocks$, $\sum_{e \in \mathbb{T}}\tilde{\Lambda}^{i,e}$ is locally bounded.
\end{assumption}

\begin{corollary}
    \label{corol:qr_non_explosive}
    Under Assumptions \ref{assumption:QR_main_assumption}, \ref{assumption:non_exploding_queue}, \ref{assumption:finite_expectation_size_orders}, \ref{assumption:qr_local_boundedness}, \ref{assumption:qr_limits} and \ref{assumption:qr_local_boundedness_all}, the Markov process $(S,P,X,N)$ is non-explosive.
\end{corollary}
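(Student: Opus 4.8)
The plan is to deduce the statement directly from Proposition \ref{prop:non_explosiveness_lyapunov}, whose two hypotheses are (i) non-explosiveness of the autonomous component $(S,P,X)$ and (ii) local boundedness of the total jump intensity of the counting process $N$. The whole argument then reduces to verifying these two inputs under the present assumptions: the genuine analytic work, namely the Lyapunov estimate, has already been carried out in Proposition \ref{prop:qr_lyapunov}, so no new computation is needed here.

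For (i), I would feed the output of Proposition \ref{prop:qr_lyapunov} — a norm-like function $V$, $C^2$ in its first variable, satisfying $\mathcal{L}V(y,q) \leqslant C$ for all $(y,q)$ — into \parencite[Theorem 2.1]{meyntweedieIII}, which yields non-explosiveness of $(S-P, X)$. Since $S = S_0 + \Sigma W$ is a Brownian motion, hence itself non-explosive, and $P = S - (S-P)$ with the jumps of $P$ coinciding with those of $S-P$, the triple $(S,P,X)$ is non-explosive as well: on any finite horizon $[0,t]$ the path of $(S-P, X)$ stays in a (random) compact set and undergoes only finitely many jumps almost surely.

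For (ii), I would identify the total intensity explicitly. Under Assumption \ref{assumption:QR_main_assumption} one has $\Lambda^{i,e}(q,y) = \tilde{\Lambda}^{i,e}(q_i, y^i)$, so the total intensity of $N$ at state $(q,y)$ is $\sum_{i \in \stocks}\sum_{e \in \mathbb{T}} \tilde{\Lambda}^{i,e}(q_i, y^i)$. Assumption \ref{assumption:qr_local_boundedness_all} asserts precisely that $\sum_{e \in \mathbb{T}} \tilde{\Lambda}^{i,e}$ is locally bounded for each $i \in \stocks$, hence so is the full sum over $i$. Combining (i) with this local boundedness, Proposition \ref{prop:non_explosiveness_lyapunov} delivers non-explosiveness of $(S,P,X,N)$.

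The delicate point — and the reason Assumption \ref{assumption:qr_local_boundedness_all} is isolated from the earlier assumptions — is that the total $N$-intensity is in general only locally, and not globally, bounded: Assumption \ref{assumption:qr_limits} forces the liquidity-consuming intensities to blow up as $|y| = |S-P| \to \infty$, so a uniform bound on the intensity is unavailable. The resolution lies entirely in the order of the argument: one must first secure non-explosiveness of $(S,P,X)$, which confines the path $(S_s - P_s, X_s)_{s \leqslant t}$ to a compact set on each finite horizon almost surely; on that compact the locally bounded total intensity is bounded, so $N$ cannot accumulate infinitely many jumps in finite time. This is exactly the mechanism packaged inside Proposition \ref{prop:non_explosiveness_lyapunov}, and checking its two hypotheses is all that the corollary requires.
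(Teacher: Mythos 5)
Your proposal is correct and follows exactly the paper's own route: Proposition \ref{prop:qr_lyapunov} combined with \parencite[Theorem 2.1]{meyntweedieIII} gives non-explosiveness of $(S-P,X)$, and Assumption \ref{assumption:qr_local_boundedness_all} supplies the locally bounded total intensity needed to invoke Proposition \ref{prop:non_explosiveness_lyapunov}. Your closing remark on why the argument must be ordered this way (local, not global, boundedness of the intensities) is also precisely the mechanism inside the paper's Proposition \ref{prop:non_explosiveness_lyapunov}.
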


As in the model of Section \ref{subsec:signal_modulated}, we have the consequences Proposition \ref{prop:lyapunov_convergence_proba_general} and its Corollary \ref{corol:skorokhod_convergence}, assuring that $P$ behaves like $S$ at the macroscopic scale.

\begin{proposition}
    \label{prop:qr_modulated_convergence_proba}
    Suppose that $S_0 - P_0$ is bounded. Under Assumptions \ref{assumption:QR_main_assumption}, \ref{assumption:non_exploding_queue}, \ref{assumption:finite_expectation_size_orders}, \ref{assumption:qr_local_boundedness} and \ref{assumption:qr_limits},, for all $T, \epsilon > 0$,
    \begin{equation*}
        \lim_{n\to \infty} \Proba[][\big| \tilde{S}^{(n)}-\tilde{P}^{(n)}\big|_{\infty, [0,T]} \geqslant \epsilon] = 0.
    \end{equation*}
\end{proposition}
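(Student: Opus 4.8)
The plan is to reduce the statement to a maximal inequality for the Markov process $Z \defeq (S-P, X)$, whose generator is the operator $\mathcal{L}$ of Subsection \ref{subsec:queue_reactive}, and then feed it the Lyapunov function supplied by Proposition \ref{prop:qr_lyapunov}. Writing $Y \defeq S - P$, one has $\tilde{S}^{(n)}_t - \tilde{P}^{(n)}_t = \tfrac{1}{\sqrt n}(Y_{nt} - Y_0)$, so that $\big|\tilde{S}^{(n)} - \tilde{P}^{(n)}\big|_{\infty,[0,T]} = \tfrac{1}{\sqrt n}\sup_{s \in [0,nT]} |Y_s - Y_0|$. Since $Y_0 = S_0 - P_0$ is bounded, say $|Y_0| \leqslant M_0$, the reverse triangle inequality gives, for $n$ large enough, $\{\,|\tilde S^{(n)} - \tilde P^{(n)}|_{\infty,[0,T]} \geqslant \epsilon\,\} \subseteq \{\sup_{s \leqslant nT} |Y_s| \geqslant \tfrac{\epsilon}{2}\sqrt n\}$. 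Hence it suffices to prove that for every $c > 0$, $\Proba[][\sup_{s \leqslant nT} |Y_s| \geqslant c\sqrt n] \to 0$ as $n \to \infty$. This is exactly the content of the general Proposition \ref{prop:lyapunov_convergence_proba_general}, applied with the objects $V$, $C$ and $\psi$ produced by Proposition \ref{prop:qr_lyapunov}; what follows sketches that application.

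First I would establish the Lyapunov maximal inequality. By Proposition \ref{prop:qr_lyapunov}, $\mathcal{L}V \leqslant C$, so by Dynkin's formula $t \mapsto V(Z_t) - V(Z_0) - \int_0^t \mathcal{L}V(Z_s)\,\diff s$ is a local martingale and, $V$ being non-negative, $t \mapsto V(Z_t) - Ct$ is a local supermartingale. Fix $\lambda > 0$ and let $\tau_\lambda \defeq \inf\{s \geqslant 0 : V(Z_s) \geqslant \lambda\}$. Localizing with a sequence of stopping times reducing the local martingale, applying optional stopping at $t \wedge \tau_\lambda$, and passing to the limit by Fatou's lemma (the process being non-explosive by Proposition \ref{prop:qr_lyapunov} together with \parencite[Theorem 2.1]{meyntweedieIII}), one gets $\E[][V(Z_{t \wedge \tau_\lambda})] \leqslant \E[][V(Z_0)] + Ct$. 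Since $V(Z_{\tau_\lambda}) \geqslant \lambda$ on $\{\tau_\lambda \leqslant t\}$ and $V \geqslant 0$, this yields the maximal inequality
\begin{equation*}
    \Proba[][\sup_{s \leqslant t} V(Z_s) \geqslant \lambda] = \Proba[][\tau_\lambda \leqslant t] \leqslant \frac{\E[][V(Z_0)] + Ct}{\lambda}.
\end{equation*}

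Then I would combine this with the lower bound on $V$. For the max-norm, $|Y_s| \geqslant c\sqrt n$ means $|Y_s^i| \geqslant c\sqrt n$ for some $i$; since $\psi$ is even, non-negative and non-decreasing on $[0,\infty)$, Proposition \ref{prop:qr_lyapunov} gives $V(Z_s) \geqslant (Y_s^i)^2 \psi(Y_s^i) \geqslant c^2 n\,\psi(c\sqrt n)$. Therefore $\{\sup_{s \leqslant nT}|Y_s| \geqslant c\sqrt n\} \subseteq \{\sup_{s \leqslant nT} V(Z_s) \geqslant c^2 n\,\psi(c\sqrt n)\}$, and the maximal inequality with $t = nT$ and $\lambda = c^2 n\,\psi(c\sqrt n)$ gives
\begin{equation*}
    \Proba[][\sup_{s \leqslant nT}|Y_s| \geqslant c\sqrt n] \leqslant \frac{\E[][V(Z_0)] + CnT}{c^2 n\,\psi(c\sqrt n)} \xrightarrow[n \to \infty]{} 0,
\end{equation*}
because $\E[][V(Z_0)]$ is a fixed finite constant (the initial order book being given and $S_0 - P_0$ bounded) and $\psi(c\sqrt n) \to \infty$. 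This is the desired convergence.

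The delicate point is not the computation but the martingale bookkeeping: justifying Dynkin's formula and optional stopping for the jump-diffusion $Z$ (whose first coordinate carries both the Brownian part of $S$ and the jumps of $P$, while $X$ is pure-jump), which requires the localization argument together with the non-explosiveness established above, and checking that the overshoot of $V$ at the jump time $\tau_\lambda$ is harmless (only $V(Z_{\tau_\lambda}) \geqslant \lambda$ is needed). Conceptually, the structural feature to keep in mind is that Proposition \ref{prop:qr_lyapunov} only guarantees the constant drift bound $\mathcal{L}V \leqslant C$, which alone yields merely linear control $\E[][V(Z_t)] \leqslant \E[][V(Z_0)] + Ct$; the super-quadratic factor $\psi$ in the lower bound $V \geqslant \sum_{i \in \stocks} (y^i)^2\psi(y^i)$ is precisely what makes the denominator grow faster than the numerator at the diffusive scale $|Y| \sim \sqrt n$ and forces the probability to vanish.
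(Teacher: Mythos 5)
Your proof is correct and takes essentially the same route as the paper: the paper deduces Proposition \ref{prop:qr_modulated_convergence_proba} by feeding the Lyapunov function of Proposition \ref{prop:qr_lyapunov} into the general Proposition \ref{prop:lyapunov_convergence_proba_general}, whose proof is exactly your scaling argument (bound the sup of $|S-P|$ by the sup of $V$, apply a maximal inequality, take $\lambda = \epsilon^2 n\,\psi(\epsilon\sqrt{n})$ and use $\psi(\epsilon\sqrt n)\to\infty$). The only difference is that the paper cites the supermartingale maximal inequality directly from \parencite[Chapter III, Corollary 1.1]{kushner1967stochastic}, whereas you re-derive it by hand via Dynkin's formula, optional stopping and Fatou.
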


\begin{corollary}
    Under Assumptions \ref{assumption:QR_main_assumption}, \ref{assumption:non_exploding_queue}, \ref{assumption:finite_expectation_size_orders}, \ref{assumption:qr_local_boundedness} and \ref{assumption:qr_limits}, $(\tilde{P}^{(n)})_{n \in \mathbb{N}^*}$ converges in distribution in $D([0, \infty), \mathbb{R}^d)$ to $\Sigma B$ where $B$ is a $d$-dimensional Brownian motion.
\end{corollary}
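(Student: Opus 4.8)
The plan is to combine the two facts already in hand: that the rescaled efficient price $\tilde{S}^{(n)}$ converges in distribution to $\Sigma B$, and that the difference $\tilde{S}^{(n)} - \tilde{P}^{(n)}$ vanishes in probability uniformly on compacts, and then to invoke the general Skorokhod convergence result.

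First I would exploit the explicit form of the efficient price. Since $S_t = S_0 + \Sigma W_t$,
\[
    \tilde{S}^{(n)}_t = \frac{1}{\sqrt{n}}(S_{nt} - S_0) = \Sigma\, \frac{1}{\sqrt{n}} W_{nt}.
\]
By the scaling invariance of Brownian motion, for each fixed $n$ the process $t \mapsto \frac{1}{\sqrt{n}} W_{nt}$ is again a standard $d$-dimensional Brownian motion. Hence $\tilde{S}^{(n)}$ has, for every $n$, exactly the law of $\Sigma B$ in $D([0,\infty), \mathbb{R}^d)$, so the sequence $(\tilde{S}^{(n)})_{n \in \mathbb{N}^*}$ trivially converges in distribution to $\Sigma B$. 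Second, Proposition \ref{prop:qr_modulated_convergence_proba} provides, under the stated assumptions, that for all $T, \epsilon > 0$ one has $\Proba[][|\tilde{S}^{(n)} - \tilde{P}^{(n)}|_{\infty, [0,T]} \geqslant \epsilon] \to 0$ as $n \to \infty$; that is, $\tilde{S}^{(n)} - \tilde{P}^{(n)} \to 0$ in probability, locally uniformly in time.

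I would then feed these two inputs into the general Corollary \ref{corol:skorokhod_convergence}, exactly as in the signal-driven case: since $\tilde{S}^{(n)} \to \Sigma B$ in distribution with a continuous limit, and $\tilde{S}^{(n)} - \tilde{P}^{(n)} \to 0$ in probability uniformly on every compact, the corollary yields $\tilde{P}^{(n)} \to \Sigma B$ in distribution in $D([0,\infty), \mathbb{R}^d)$, which is the claim.

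The only delicate point is the passage from convergence in probability of the difference, uniformly on compacts, to convergence in distribution of $\tilde{P}^{(n)} = \tilde{S}^{(n)} - (\tilde{S}^{(n)} - \tilde{P}^{(n)})$ in the Skorokhod space, since addition is not jointly continuous for the $J_1$ topology in general. What rescues the argument is that the limit $\Sigma B$ has continuous paths, so that $J_1$-convergence to it coincides with local uniform convergence, and a perturbation vanishing uniformly on compacts does not affect the limiting law. This Slutsky-type statement in $D([0,\infty), \mathbb{R}^d)$ is precisely what Corollary \ref{corol:skorokhod_convergence} encapsulates, so the remaining work reduces to checking its two hypotheses, both of which are established above.
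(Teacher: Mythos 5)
Your proposal is correct and follows essentially the same route as the paper: the paper also obtains this corollary by combining the in-probability closeness of $\tilde{S}^{(n)}$ and $\tilde{P}^{(n)}$ (from the Lyapunov function of Proposition \ref{prop:qr_lyapunov} fed into Proposition \ref{prop:lyapunov_convergence_proba_general}, i.e.\ Proposition \ref{prop:qr_modulated_convergence_proba}) with the fact that $\tilde{S}^{(n)}$ has the law of $\Sigma B$ for every $n$, concluding via the Slutsky-type ``convergence together'' theorem on $D([0,T],\mathbb{R}^d)$ and the extension to $D([0,\infty),\mathbb{R}^d)$, which is exactly the content of Corollary \ref{corol:skorokhod_convergence}. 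Your discussion of why the $J_1$ subtlety is harmless (continuity of the limit) matches the role played by Billingsley's theorem in the paper's proof of that corollary.
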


\section{Estimation procedure}
\label{sec:estimation}
In this section, we describe a procedure to estimate the intensities $\Lambda^k$ and the volatility $\Sigma$ in a set of parameters $\Theta$. We use a maximum likelihood approach. Specifically we build processes $(S^{\theta}, P, X, N)$ and probabilities $\Proba[\theta]$, all equivalent to some probability $\Proba$ in finite time such that under $\Proba[\theta]$, $(S^{\theta}, P, X, N)$ is a Markov process with infinitesimal generator $\hat{\mathcal{A}}_{\theta}$ defined by
\begin{equation}
    \label{eq:generator_theta}
    \begin{split}
        \hat{\mathcal{A}}_{\theta}\phi(y, p, x, n)
        &=\frac{1}{2} \tr \big(\Sigma_{\theta} \Sigma_{\theta}^T \nabla^2_y \phi(y, p, x, n)\big)\\
        &\phantom{=}+ \sum_{k \in \mathbb{M}}\bigg(\int_{\mathcal{X} \times \mathcal{P}} \phi(y, p', x', n+e_k)\mu^k(\diff x', \diff p') - \phi(y, p, x, n) \bigg) \Lambda_{\theta}^k(x, y-p)\\
        &\phantom{=}+
        \bigg(\int_{\mathcal{X} \times \mathcal{P}} \phi(y, p', x', n)\mu(\diff x', \diff p') - \phi(y, p, x, n) \bigg) \Lambda(x)
    \end{split}
\end{equation}
for $(y, p, x, n) \in \mathbb{R}^d \times \mathcal{P} \times \mathcal{X} \times \mathbb{N}^{\mathbb{M}}$ and $\phi: \mathbb{R}^d \times \mathcal{P} \times \mathcal{X} \times \mathbb{N}^{\mathbb{M}} \to \mathbb{R}$ bounded and measurable, two times differentiable with respect to its first variable with bounded second derivatives.

Proving the convergence of the maximum likelihood estimator of models with a hidden diffusion process has been show to be a hard task, and is often done on a case-by-case basis \parencite{kutoyants2019parameter,nadtochiy2024consistencymlepartiallyobserved,ekren2025consistencymlepartiallyobserved}. Here, we rather give a practical algorithm to compute the likelihood, and hint the convergence with numerical tests.

In Subsection \ref{subsec:likelihood}, we derive a likelihood function for our model and express with the help of PDEs. In Subsection \ref{subsec:estimator_implementation}, we describe an approximation procedure to compute the likelihood. In Subsection \ref{subsec:estimator_validation}, we discuss the convergence of our estimator on numerical examples.

\subsection{The likelihood}
\label{subsec:likelihood}

We start by an assumption that will hold for this whole section. It states that a jump of $N^k$ can occur (under an observed state $X_{t-}$) at $t$ for a model $\theta$ if and only if it can hold for all the other models. This assumption is crucial for the equivalence of the laws of each model under our observation.
\begin{assumption}
    \label{assumption:equivalence_intensities}
    The set $\mathbb{M}$ is finite.
    For $k \in \mathbb{M}$, there exists a measurable subset $\mathcal{X}^{k}_+$ of $\mathcal{X}$ such that for all $(\theta, x, y) \in \Theta \times \mathcal{X}^{k}_+ \times \mathbb{R}^d$, $\Lambda^k_{\theta}(x,y) > 0$ and for all $(\theta, x, y) \in \Theta \times (\mathcal{X} \setminus\mathcal{X}^{k}_+) \times \mathbb{R}^d$, $\Lambda^k_{\theta}(x,y) = 0$ 
\end{assumption}

The finiteness of $\mathbb{M}$ does not correspond exactly to the queue-reactive model of Subsection \ref{subsec:queue_reactive}. However, we can still estimate queue-reactive models with a finite number of order types, if for example we do not consider separately orders with a different volume.

Let $(\Omega, \mathcal{F}, \Proba)$ be a probability space supporting a d-dimensional Brownian motion $W$, a square integrable random variable $S_0$, and a Markov process $(X, P, N)$ with generator $\mathcal{A}_0$ defined by
\begin{equation*}
    \begin{split}
        \mathcal{A}_0 \phi(x, p, n)
        &= \sum_{k \in \mathbb{M}}\bigg(\int_{\mathcal{X} \times \mathcal{P}} \phi( p', x', n+e_k)\mu^k(x,p,\diff x', \diff p') - \phi(p, x, n) \bigg) \mathds{1}_{\mathcal{X}^k_+}(x) \\
        &\phantom{=}+
        \bigg(\int_{\mathcal{X} \times \mathcal{P}} \phi(p', x', n)\mu(x,p,\diff x', \diff p') - \phi(p, x, n) \bigg) \Lambda(x),
    \end{split}
\end{equation*}
for $(y,p,x,n) \in \mathbb{R}^d \times \mathcal{P} \times \mathcal{X} \times \mathbb{N}^{\mathbb{M}}$ and any measurable bounded function $\phi:\mathbb{R}^d \times \mathcal{P} \times \mathcal{X} \times \mathbb{N}^{\mathbb{M}}\to \mathbb{R}$. The random variables $W$, $S_0$ and  $(X, P, N)$ are supposed independent. For $\theta \in \Theta$, and $t > 0$ we define $S^{\theta}_t \defeq S_0 + \Sigma W_t$. We define the filtrations $(\mathcal{F}^W_t)_t$ and $(\mathcal{F}^{obs}_t)_t$ to be the completed natural filtrations of $W$ and $(X, P, N)$ respectively. In practice, we only observe events in $(\mathcal{F}^{obs}_t)_t$ since $W$ is hidden to us. Unless stated otherwise, terms referring to a filtration such as \enquote{martingale} or \enquote{adapted} will always refer to the filtration $(\mathcal{F}_t)_t$ defined by $\mathcal{F}_t \defeq \mathcal{F}_t^{W} \vee \mathcal{F}_t^{obs}$ fo all $t \geqslant 0$.

Assumption \ref{assumption:non-explosiveness} below assures that all the processes in consideration are well-defined at all times.
\begin{assumption}
    \label{assumption:non-explosiveness}
    The process $(X, P, N)$ is non-explosive under $P$. For all $\theta \in \Theta$, a Markov process with generator $\hat{\mathcal{A}}_{\theta}$ defined by Equation \eqref{eq:generator_theta} is non-explosive under any initial condition.
\end{assumption}

\begin{assumption}
    \label{assumption:locally_bounded_intensity}
    For all $\theta \in \Theta$, the function $\sum_{k \in \mathbb{M}} \Lambda^k_{\theta}$ is locally bounded.
\end{assumption}

For $\theta \in \Theta$, under Assumptions \ref{assumption:equivalence_intensities} and \ref{assumption:non-explosiveness}, we define the process
\begin{equation*}
    Z_{t}^{\theta} \defeq
    \exp\bigg(\sum_{k \in \mathbb{M}} \int_{(0,t]} \ln\big(\Lambda^k_{\theta}(X_{u-}, S^{\theta}_u - P_{u-})\big) \diff N^k_{u} + \sum_{k \in \mathbb{M}} \int_0^{t} \big(1-\Lambda_{\theta}^k(X_{u-}, S^{\theta}_u - P_{u-})\big)\mathds{1}_{\mathcal{X}^k_+}(X_{u-}) \diff u \bigg)
\end{equation*}
for $t \geqslant 0$. Being the exponential process of a local martingale, $(Z_t^{\theta})_{t\geqslant 0}$ is a positive local martingale (and therefore also a supermartingale). Proposition \ref{prop:martingale} below, proved in Section \ref{subsec:proof_martingale}, shows that it is actually a true martingale, thanks to the non-explosiveness property of the considered processes.

\begin{proposition}
    \label{prop:martingale}
    Under Assumptions \ref{assumption:equivalence_intensities}, \ref{assumption:non-explosiveness} and \ref{assumption:locally_bounded_intensity}, $(Z_t^{\theta})_{t\geqslant 0}$ is martingale for all $\theta \in \Theta$.
\end{proposition}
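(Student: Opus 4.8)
The plan is to show that the positive local martingale $(Z_t^{\theta})_{t \geqslant 0}$, which is already known to be a supermartingale with $Z_0^{\theta} = 1$, satisfies $\E[Z_t^{\theta}] = 1$ for every $t \geqslant 0$; this suffices to upgrade it to a true martingale, since a nonnegative supermartingale with constant expectation is a martingale. The key structural fact I would exploit is that $Z^{\theta}$ is precisely the Radon--Nikodym density associated with a change of the jump intensities of $N$ from the baseline $\mathds{1}_{\mathcal{X}^k_+}(X_{u-})$ (under which $(X,P,N)$ evolves with generator $\mathcal{A}_0$) to the target intensities $\Lambda_{\theta}^k(X_{u-}, S^{\theta}_u - P_{u-})$. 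The martingale property is therefore equivalent to the absence of mass loss under the candidate measure $\Proba[\theta]$ defined by $\diff \Proba[\theta]/\diff \Proba = Z_t^{\theta}$ on $\mathcal{F}_t$, which is exactly non-explosiveness of the process under the target dynamics.

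The steps I would carry out are as follows. First, I would fix $T > 0$ and introduce a localizing sequence of stopping times $(\tau_m)_{m}$ — for instance $\tau_m$ the $m$-th jump time of $N$, or the exit time of $(S^{\theta}-P, X)$ from a suitable compact — so that the stopped process $Z^{\theta}_{\cdot \wedge \tau_m}$ is a genuine martingale; this uses Assumption \ref{assumption:locally_bounded_intensity} (local boundedness of $\sum_k \Lambda_{\theta}^k$) to control the compensator on $[0, \tau_m]$ and guarantees integrability of the stochastic-exponential increments. By construction $\E[Z^{\theta}_{T \wedge \tau_m}] = 1$ for each $m$. Second, I would verify that $\tau_m \to \infty$ almost surely: this is where non-explosiveness enters, via Assumption \ref{assumption:non-explosiveness}, which guarantees that a Markov process with generator $\hat{\mathcal{A}}_{\theta}$ — the dynamics induced once the intensities are tilted to $\Lambda_{\theta}^k$ — does not explode, so the counting process $N$ accumulates only finitely many jumps on $[0,T]$ under the tilted measure. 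Third, I would pass to the limit. Because $Z^{\theta}$ is a nonnegative supermartingale, $\E[Z_T^{\theta}] \leqslant 1$ automatically; the reverse inequality $\E[Z_T^{\theta}] \geqslant 1$ is the content that must be extracted from $\E[Z^{\theta}_{T \wedge \tau_m}] = 1$ together with $\tau_m \to \infty$.

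For the limiting argument I would use the standard device for exponential martingales associated with point processes (as in the Girsanov/Jacod framework, e.g. following the approach behind the results of \textcite{meyntweedieIII} or classical semimartingale theory): define $\Proba[\theta]$ on each $\mathcal{F}_{T \wedge \tau_m}$ by the density $Z^{\theta}_{T \wedge \tau_m}$, observe these are consistent, and note that under $\Proba[\theta]$ the process $(S^{\theta}, P, X, N)$ has exactly generator $\hat{\mathcal{A}}_{\theta}$. Then $\E[Z^{\theta}_{T \wedge \tau_m}] = 1$ translates into $\Proba[\theta][\tau_m > T] + \E[Z^{\theta}_{T \wedge \tau_m} \mathds{1}_{\{\tau_m \leqslant T\}}] = 1$, and I would show the defect $1 - \E[Z_T^{\theta}]$ equals $\lim_m \Proba[\theta][\tau_m \leqslant T]$, which is the probability of explosion before $T$ under $\Proba[\theta]$. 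Non-explosiveness under $\hat{\mathcal{A}}_{\theta}$ forces this limit to be zero, yielding $\E[Z_T^{\theta}] = 1$.

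The main obstacle I anticipate is making the identification ``defect of the martingale property $=$ explosion probability under the tilted law'' fully rigorous without circularity: one must construct $\Proba[\theta]$ consistently on the filtration generated up to the explosion time and argue that the process it induces genuinely has generator $\hat{\mathcal{A}}_{\theta}$, rather than merely asserting it. Care is also needed because $S^{\theta}$ depends on $\theta$ only through the additive Brownian term $\Sigma W$, which is not affected by the measure change (the tilt acts only on the jump part), so I would keep the Brownian component fixed and apply the change of measure solely to the point-process intensities, invoking Assumption \ref{assumption:equivalence_intensities} to ensure that the support condition $\mathds{1}_{\mathcal{X}^k_+}$ makes the logarithms in the definition of $Z^{\theta}$ well-defined (the intensity is strictly positive exactly where jumps of $N^k$ can occur). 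Once the explosion-probability identification is in place, the conclusion is immediate.
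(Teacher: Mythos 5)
Your proposal is correct and takes essentially the same approach as the paper's proof: localize with the exit times of $(X, S^{\theta}-P)$ from an exhausting sequence of compacts so that the stopped density is a true martingale (via Assumption \ref{assumption:locally_bounded_intensity}), define the tilted measures from the stopped densities, verify by a generator computation that they induce the dynamics of $\hat{\mathcal{A}}_{\theta}$, and invoke Assumption \ref{assumption:non-explosiveness} to conclude that the explosion probability before $T$ under the tilted law vanishes, whence $\mathbb{E}[Z_T^{\theta}]=1$ and the nonnegative supermartingale is a martingale. The only caveat is that of your two suggested localizations, the compact-exit times are the ones that work (and are what the paper uses): stopping at the $m$-th jump of $N$ does not make the intensities bounded on the stopped interval, since $S^{\theta}-P$ keeps diffusing and $(X,P)$ can jump on their own between jumps of $N$.
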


The proof of Proposition \ref{prop:martingale} leads immediately to Corollary \ref{corol:existence_ptheta} which states the existence of the equivalent measure $\Proba[\theta]$ under which $(S, P, X, N)$ has the desired distribution.

\begin{corollary}
    \label{corol:existence_ptheta}
    Under Assumptions \ref{assumption:equivalence_intensities}, \ref{assumption:non-explosiveness} and \ref{assumption:locally_bounded_intensity}, for all $\theta \in \Theta$ and $T > 0$, there exists a probability $\Proba[\theta]$ on $\mathcal{F}_T$ equivalent to $\Proba$ such that $\frac{\diff \Proba[\theta]}{\diff \Proba} = Z^{\theta}_T$, such that $(S^{\theta}, P, X, N)_{\cdot \wedge T}$ is a Markov process of infinitesimal generator $\hat{\mathcal{A}}_{\theta}$ under $\Proba[\theta]$ and
    \begin{equation*}
        \Proba \circ (S_0, P_0, X_0, N_0)^{-1} = \Proba[\theta] \circ (S_0, P_0, X_0, N_0)^{-1}.
    \end{equation*}
\end{corollary}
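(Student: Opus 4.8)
The plan is to use Proposition~\ref{prop:martingale} to turn $Z^\theta$ into a genuine density, and then read off the law of $(S^\theta,P,X,N)$ under the new measure by a Girsanov argument that treats the jump part and the continuous part separately.

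First I would fix $T>0$ and invoke Proposition~\ref{prop:martingale}: under Assumptions~\ref{assumption:equivalence_intensities}, \ref{assumption:non-explosiveness} and \ref{assumption:locally_bounded_intensity}, $(Z^\theta_t)_{t\in[0,T]}$ is a true martingale with $Z^\theta_0=1$, and being a stochastic exponential it is strictly positive. Hence $\E[][Z^\theta_T]=1$, so $\frac{\diff\Proba[\theta]}{\diff\Proba}=Z^\theta_T$ defines a probability measure on $\mathcal{F}_T$ equivalent to $\Proba$, the reciprocal density being $1/Z^\theta_T$. Since $Z^\theta_0=1$, the two measures agree on $\mathcal{F}_0$, which immediately gives $\Proba\circ(S_0,P_0,X_0,N_0)^{-1}=\Proba[\theta]\circ(S_0,P_0,X_0,N_0)^{-1}$.

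Next I would identify the jump dynamics. Under $\Proba$, the generator $\mathcal{A}_0$ shows that each $N^k$ has $\Proba$-intensity $\mathds{1}_{\mathcal{X}^k_+}(X_{u-})$, the joint $(X,P)$-moves being governed by $\mu^k$, while the remaining $(X,P)$-jumps occur at rate $\Lambda(X_{u-})$ through $\mu$. Writing $Z^\theta=\mathcal{E}(L)$, the driving local martingale $L$ is purely discontinuous, carrying the factor $\Lambda^k_\theta(X_{u-},S^\theta_u-P_{u-})$ on each jump of $N^k$ together with the compensating drift $(1-\Lambda^k_\theta)\mathds{1}_{\mathcal{X}^k_+}$. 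By the Girsanov theorem for marked point processes this change of measure multiplies the $\Proba$-intensity of $N^k$ by $\Lambda^k_\theta/\mathds{1}_{\mathcal{X}^k_+}$ evaluated at the pre-jump state; Assumption~\ref{assumption:equivalence_intensities} guarantees that $N^k$ jumps only where $\mathds{1}_{\mathcal{X}^k_+}=1$ (so that $\ln\Lambda^k_\theta$ is finite and the ratio well defined) and that $\Lambda^k_\theta$ vanishes off $\mathcal{X}^k_+$, whence the $\Proba[\theta]$-intensity of $N^k$ is exactly $\Lambda^k_\theta(X_{u-},S^\theta_u-P_{u-})$. Because the density depends only on the pre-jump state, the kernels $\mu^k$ are preserved, and the $\Lambda$–$\mu$ part, which $Z^\theta$ does not touch, is unchanged as well.

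Finally I would treat the continuous part and assemble the generator. As $L$ is purely discontinuous and $W$ continuous, $[W,L]=0$, so by Girsanov $W$ stays a continuous local martingale under $\Proba[\theta]$ with unchanged bracket $\langle W^i,W^j\rangle_t=\delta_{ij}t$; Lévy's characterization then makes $W$ a $\Proba[\theta]$-Brownian motion, and $S^\theta=S_0+\Sigma_\theta W$ contributes the diffusive term of \eqref{eq:generator_theta}. Collecting the three pieces shows that for every bounded test function $\phi$ the process $\phi(S^\theta_t,P_t,X_t,N_t)-\int_0^t\hat{\mathcal{A}}_\theta\phi\,\diff u$ is a $\Proba[\theta]$-local martingale on $[0,T]$, so $(S^\theta,P,X,N)_{\cdot\wedge T}$ solves the martingale problem for $\hat{\mathcal{A}}_\theta$; with the non-explosiveness of Assumption~\ref{assumption:non-explosiveness} and well-posedness of this martingale problem, this yields the claimed Markov property and generator. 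The genuinely hard part lies upstream, in Proposition~\ref{prop:martingale}: since $\Lambda^k_\theta$ is state-dependent and possibly unbounded, the change of intensity is a priori valid only up to localizing times, and non-explosiveness together with local boundedness (Assumptions~\ref{assumption:non-explosiveness} and \ref{assumption:locally_bounded_intensity}) is what rules out loss of mass on $[0,T]$ and upgrades $Z^\theta$ to a true martingale. Given that, the corollary is essentially bookkeeping, the one remaining subtlety being the uniqueness needed to pass from the martingale problem to the Markov property.
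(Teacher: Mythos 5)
Your proposal is correct, and it reaches the conclusion by a genuinely different route from the paper's. In the paper the corollary is not proved by invoking Girsanov-type theorems under $\mathbb{P}^{\theta}$: all of the work is already contained in the proof of Proposition \ref{prop:martingale}, where, for the localized measures $\mathbb{Q}^{n}$ with density $Z^{\theta}_{\tau_n}$, an explicit product-rule computation shows that the stopped process $(S^{\theta},P,X,N)_{\cdot\wedge\tau_n}$ satisfies the martingale problem for $\hat{\mathcal{A}}_{\theta}$ and that the initial law is unchanged; once $Z^{\theta}$ is known to be a true martingale, $\mathbb{P}^{\theta}$ agrees with $\mathbb{Q}^{n}$ on $\mathcal{F}_{\tau_n}$ (optional sampling) and $\mathbb{Q}^{n}[\tau_n<T]\to 0$ by non-explosiveness, so the corollary is read off from work already done. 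You instead take Proposition \ref{prop:martingale} as a black box and re-derive the $\mathbb{P}^{\theta}$-dynamics: writing $Z^{\theta}=\mathcal{E}(L)$ with $L$ purely discontinuous, you apply the Girsanov theorem for marked point processes (correctly noting that Assumption \ref{assumption:equivalence_intensities} makes the intensity ratio meaningful, that the kernels $\mu^{k}$ and $\mu$ are untouched because the density depends only on the pre-jump state, and that the $\Lambda$--$\mu$ part is unaffected) and L\'evy's characterization for the continuous part (using $[W,L]=0$). Your reading of $Z^{\theta}$ as exactly the density for the intensity change $\mathds{1}_{\mathcal{X}^k_+}\rightarrow\Lambda^k_{\theta}$ is accurate. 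What your route buys is modularity and appeal to standard results; what the paper's route buys is self-containedness --- the corollary costs nothing beyond the proposition's proof, and no external point-process Girsanov theorem is needed.

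The one real weak point is shared by both texts, and you at least flag it: passing from \enquote{solves the martingale problem for $\hat{\mathcal{A}}_{\theta}$} to \enquote{is a Markov process with infinitesimal generator $\hat{\mathcal{A}}_{\theta}$} via well-posedness of the martingale problem appeals to a uniqueness statement that is neither proved nor assumed in the paper (the paper makes the same silent leap when it asserts that under $\mathbb{Q}^{n}$ the stopped process \enquote{has the law of a Markov process with infinitesimal generator $\hat{\mathcal{A}}_{\theta}$}). You can close this gap without any uniqueness theorem: the Markov property of $(S^{\theta},P,X,N)$ under $\mathbb{P}^{\theta}$ follows from the multiplicative structure of the density, since for $0 \leqslant t \leqslant T$ the ratio $Z^{\theta}_{T}/Z^{\theta}_{t}$ is a measurable functional of the path $(S^{\theta}_{u},P_{u},X_{u},N_{u})_{u\in[t,T]}$ only, while $(S^{\theta},P,X,N)$ is Markov under $\mathbb{P}$, being assembled from the independent Markov pieces $S^{\theta}$ and $(X,P,N)$; hence for any bounded functional $F$ of the future path, $\mathbb{E}^{\mathbb{P}^{\theta}}[F\mid\mathcal{F}_{t}] = \mathbb{E}^{\mathbb{P}}[(Z^{\theta}_{T}/Z^{\theta}_{t})F\mid\mathcal{F}_{t}]$ is a function of the time-$t$ state. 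Combined with your martingale-problem computation, which identifies the extended generator as $\hat{\mathcal{A}}_{\theta}$, this yields the statement of the corollary in full, with no well-posedness needed.
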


Under our observation, we have that for all $\theta \in \Theta$, the likelihood is given by
\begin{equation*}
    \frac{\Proba[\theta]|_{\mathcal{F}_T^{obs}}}{\Proba[]|_{\mathcal{F}_T^{obs}}}
     = \E[][Z_T^{\theta}| \mathcal{F}_T^{obs}].
\end{equation*}

Assumption \ref{assumption:intensities_regularity} below states mild regularity and growth conditions on the intensities for the likelihood $\E[][Z_T^{\theta}| \mathcal{F}_T^{obs}]$ to be expressed as the solution of a linear PDE jumping at the jump times of $(P, X)$. We define the class $\mathcal{C}_e$ of functions by 
\begin{equation*}
    \mathcal{C}_e \defeq \big\{f:\mathbb{R}^d \to \mathbb{R}: \exists (A, B) \in(0, \infty) \times [1, 2), \forall y \in \mathbb{R}^d |f(y)| \leqslant A e^{|y|^B} \big\}.
\end{equation*}
\begin{assumption}
    \label{assumption:intensities_regularity}
    (Regularity) For every compact $K$ of $\mathbb{R}^d$ and $x \in \mathcal{X}$ and $\theta \in \Theta$, there exists $\alpha \in (0, 1)$ such that $\sum_{k \in \mathbb{M}} \Lambda^k_{\theta}(x, \cdot)$ is $\alpha$-Hölder continuous on $K$.\\
    (Growth) For all $x \in \mathcal{X}$ and $\theta \in \Theta$, 
        $\sum_{k \in \mathbb{M}} \Lambda^k_{\theta}(x, \cdot)  \in \mathcal{C}_e$.
\end{assumption}

Lemma \ref{lem:parabolic_pde_existence} below, proved in Section \ref{subsec:proof_parabolic_pde_existence}, states the existence of regular solutions of a linear parabolic PDE of interest.
\begin{lemma}
    \label{lem:parabolic_pde_existence}
    Let $\Sigma$ be an invertible $d \times d$ matrix. Let $c:\mathbb{R}^d\to \mathbb{R}$ be a function such that for every compact $K$, there exists $\alpha \in (0, 1)$ such that $c$ is $\alpha$-Hölder continuous on $K$. Let $f \in \mathcal{C}_e$. Then, for all $T > 0$, there exists a unique function $u \in C([0, T] \times \mathbb{R}^d) \cap C^{1, 2}((0, T) \times \mathbb{R}^d) \cap \{g:[0,T]\times \mathbb{R}^d \to \mathbb{R}: \forall t \in [0, T], g(t,\cdot) \in \mathcal{C}_e\}$ such that
    \begin{equation}
        \label{eq:linear_pde}
        \partial_t u(t,y) + \frac{1}{2}\tr\big(\Sigma \Sigma^T \nabla^2_y u(t, y)\big) - c(y)u(t,y) = 0,\quad (t, y) \in (0,T) \times \mathbb{R}^d
    \end{equation}
    and
    \begin{equation}
        \label{eq:linear_pde_final}
        u(T, y) = f(y),\quad y \in \mathbb{R}^d.
    \end{equation}
\end{lemma}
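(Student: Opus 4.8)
The plan is to construct the solution by the Feynman--Kac formula, and then to obtain its regularity from interior parabolic estimates, its growth from Gaussian tail bounds, and its uniqueness from a verification argument. First I would reduce to the case $c \geqslant 0$: if $c \geqslant -m$ for some constant $m$ (which holds in our application, where $c = \sum_{k} \Lambda^k_{\theta} \geqslant 0$), then writing $u(t,y) = e^{m(T-t)} \hat u(t,y)$ turns the equation into the same equation with potential $c + m \geqslant 0$ and the same terminal datum $f$, the bounded prefactor $e^{m(T-t)}$ preserving the class $\mathcal{C}_e$. Let $W$ be a $d$-dimensional Brownian motion and set $Y^{t,y}_s \defeq y + \Sigma(W_s - W_t)$ for $s \geqslant t$, a diffusion with generator $\frac{1}{2}\tr(\Sigma\Sigma^T \nabla^2_y \,\cdot\,)$. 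The natural candidate is
\[
    u(t,y) \defeq \mathbb{E}\left[\exp\left(-\int_t^T c(Y^{t,y}_s)\,\diff s\right) f(Y^{t,y}_T)\right].
\]

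Since $c \geqslant 0$, the exponential factor is at most $1$, so $|u(t,y)| \leqslant \mathbb{E}[|f(Y^{t,y}_T)|] \leqslant A\,\mathbb{E}[\exp(|y + \Sigma(W_T - W_t)|^B)]$. Writing $G \defeq \Sigma(W_T - W_t)$, a centered Gaussian vector, and using $(a+b)^B \leqslant 2^{B-1}(a^B + b^B)$ together with the finiteness of $\mathbb{E}[\exp(\kappa |G|^B)]$ for every $\kappa$ (which holds precisely because $B < 2$, the Gaussian tail dominating $e^{|\cdot|^B}$), I would obtain $|u(t,y)| \leqslant A' \exp(|y|^{B'})$ for some $B' \in (B, 2)$; hence $u(t, \cdot) \in \mathcal{C}_e$ uniformly in $t \in [0,T]$. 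Continuity of $u$ on $[0,T] \times \mathbb{R}^d$, including the terminal condition $u(T,\cdot) = f$, then follows from dominated convergence, using the same bound as dominating function and the continuity of $c$ and $f$.

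For the interior regularity I would argue by bootstrap. After the linear change of variables $z = \Sigma^{-1} y$ the principal part becomes $\frac{1}{2}\Delta_z$, so standard constant-coefficient parabolic theory applies. On any parabolic cylinder $c$ is bounded and $\alpha$-Hölder; rewriting the equation as $\partial_t u + \frac{1}{2}\tr(\Sigma\Sigma^T \nabla^2_y u) = c\,u$ with locally bounded right-hand side, interior $L^p$ estimates give $u \in W^{1,2}_{p,\mathrm{loc}}$ for all $p$, hence local parabolic Hölder continuity of $u$ by Sobolev embedding. Then $c\,u$ is locally Hölder, and interior Schauder estimates upgrade $u$ to $C^{1,2}_{\mathrm{loc}}$, so that $u$ solves \eqref{eq:linear_pde} classically on $(0,T) \times \mathbb{R}^d$. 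This bootstrap is the main obstacle: it must be carried out locally because $c$ is only assumed locally Hölder (and may be unbounded), and one must check that the candidate is a genuine classical solution and not merely a distributional one.

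Finally, uniqueness (and the identification of any solution with the Feynman--Kac formula) follows from a verification argument. If $u_1, u_2$ are two solutions in the stated class, then $w \defeq u_1 - u_2$ is $C^{1,2}$ on $(0,T) \times \mathbb{R}^d$, solves the homogeneous equation, satisfies $w(T, \cdot) = 0$, and obeys $|w(t,y)| \leqslant A e^{|y|^B}$ with $B < 2$. Applying Itô's formula to $M_s \defeq \exp(-\int_t^s c(Y^{t,y}_r)\,\diff r)\, w(s, Y^{t,y}_s)$ on $[t, T]$, the finite-variation part vanishes because $w$ solves the PDE, so $M$ is a local martingale; the subquadratic growth of $w$, the bound $c \geqslant 0$, and the Gaussian moment estimates (again using $B < 2$) allow one to localize and pass to the limit to conclude that $M$ is a true martingale. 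Therefore $w(t,y) = M_t = \mathbb{E}[M_T] = \mathbb{E}[\exp(-\int_t^T c(Y^{t,y}_r)\,\diff r)\, w(T, Y^{t,y}_T)] = 0$, whence $u_1 = u_2$; the same computation applied to the constructed $u$ confirms it is the unique solution. The delicate point here is the true-martingale justification, which is exactly where the growth restriction $B < 2$ (placing $w$ in a Tychonoff-type uniqueness class) is essential.
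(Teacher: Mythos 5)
Your route is genuinely different from the paper's. The paper constructs $u$ by PDE theory first: it solves Dirichlet problems on expanding balls $B_{2n}$ with cut-off terminal data (Friedman, Ch.~3, Thm.~9), shows via Feynman--Kac and Cauchy--Schwarz that the $u_n$ form a Cauchy sequence on compacts, and then uses interior Schauder estimates (Friedman, Ch.~3, Thm.~5) together with Arzel\`a--Ascoli to pass the $C^{1,2}$ regularity and the equation itself to the limit; the Feynman--Kac formula is used only as an estimation and identification tool. You instead define $u$ directly by the Feynman--Kac formula and try to establish its regularity a posteriori. Your growth bound, your continuity argument, and your uniqueness argument are sound; in particular the verification/localization step, with the observation that $B<2$ places the difference of two solutions in a Tychonoff-type uniqueness class, is exactly the content behind the paper's one-line ``uniqueness is a direct consequence of the Feynman--Kac representation.'' Your explicit reduction to $c \geqslant 0$ is also consistent with the paper: its inequalities $0 \leqslant u_n - u_m \leqslant \mathbb{E}\big[f(\cdot)\mathds{1}_{\{\cdot\}}\big]$ and its final growth estimate implicitly use $c \geqslant 0$ (and $f \geqslant 0$) as well, with the Remark following the lemma handling constant shifts.

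The genuine gap is in your interior-regularity step. Interior $L^p$ and Schauder estimates are \emph{a priori} estimates: they apply to a function already known to solve the equation in some sense (strong, distributional, or viscosity). At that stage of your argument, $u$ is only a continuous function defined by an expectation; ``rewriting the equation'' for it is circular, because there is no equation yet --- that the Feynman--Kac functional solves \eqref{eq:linear_pde} in any weak sense is precisely what must be proved. You flag the issue yourself (``one must check that the candidate is a genuine classical solution and not merely a distributional one'') but never supply the missing link. The standard way to close it is not through a priori estimates but through an existence theorem on small cylinders combined with the strong Markov property: on a cylinder $Q=(t_1,t_2)\times B$ one solves the Dirichlet problem classically with boundary data $u$ (possible since $c$ is H\"older on $\bar B$), represents that local solution probabilistically via the exit time of $Q$, and identifies it with $u$ by the Markov property, so that $u$ inherits the local $C^{1,2}$ regularity. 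Either this argument or a Duhamel (mild-solution) formulation requires invoking a solvability result for the PDE on bounded domains --- exactly the ingredient around which the paper's proof is organized, and which your proposal nowhere provides.
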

\begin{remark}
    We can replace $c$ by $(c-\alpha)$ for a constant $\alpha \in \mathbb{R}$ by doing the transformation $\tilde{u}(t,\cdot) = e^{(T-t)\alpha}u(t,\cdot)$. 
\end{remark}

Theorem \ref{thm:PDE_characterization_lik}, proved in Section \ref{subsec:proof_PDE_characterization_lik}, gives a PDE characterization of the likelihood, following a standard verification approach. This characterization will allow us to compute it numerically.
\begin{theorem}
    \label{thm:PDE_characterization_lik}
    Suppose $N_0 = 0$ and $S_0$ follows the law $m_{S_0} \defeq \mathbb{P} \circ S_0^{-1}$. Let $T > 0$ and $\theta \in \Theta$. Let $0=t_0<t_1<\dots< t_M = T$ be the jump times of $N$ and $(X,P)$ on $[0, T]$. Let $z_0,\dots,z_M$ be the corresponding event types: $z_m = k \in \mathbb{M}$ if and only if there is a jump of $N^k$ at $t_m$ (which can be simultaneously a jump of $(X,P)$), and $z_m = \emptyset$ if $t_m$ is a jump time of $(X,P)$ only. Suppose Assumptions \ref{assumption:equivalence_intensities}, \ref{assumption:non-explosiveness}, \ref{assumption:locally_bounded_intensity} and \ref{assumption:intensities_regularity} are satisfied. Let $u:[0,T] \times \mathbb{R}^d \to \mathbb{R}$ be the unique function such that $u(T,\cdot) \equiv 1$ for all $m \in \{0,\dots,M-1\}$:
    \begin{itemize}
        \item $u|_{[t_m, t_{m+1}) \times \mathbb{R}^d} \in C([t_m, t_{m+1}) \times \mathbb{R}^d) \cap C^{1, 2}((t_m, t_{m+1}) \times \mathbb{R}^d) \cap \{g:[t_m, t_{m+1}]\times \mathbb{R}^d \to \mathbb{R}: \forall t \in [t_m, t_{m+1}], g(t,\cdot) \in \mathcal{C}_e\}$,
        \item $u|_{[t_m, t_{m+1}) \times \mathbb{R}^d}$ solves \eqref{eq:linear_pde} with $c = \sum_{k \in \mathbb{M}} (1-\Lambda_{\theta}^k(X_{t_m}, \cdot - P_{t_m}))$ on $(t_m, t_{m+1}) \times \mathbb{R}^d$,
        \item $u|_{[t_m, t_{m+1}) \times \mathbb{R}^d}$ can be extended to a continuous function $v$ on $[t_m, t_{m+1}] \times \mathbb{R}^d$ such that $v(t_{m+1}, \cdot) = \Lambda_{\theta}^{z_{m+1}}(\cdot - P_{t_m}, X_{t_m}) u(t_{m+1}, \cdot)$,
    \end{itemize}
    where we used the convention $\Lambda^{\emptyset}_{\theta} \equiv 1$. Then,
    \begin{equation*}
        \E[][Z_T^{\theta}| \mathcal{F}_T^{obs}]
        = \int_{\mathbb{R}^d}u(0, y) m_{S_0}(\diff y)\quad \mathbb{P}-a.s.
    \end{equation*}
\end{theorem}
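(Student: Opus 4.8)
The plan is to condition on the observation, reduce the conditional expectation to a deterministic (frozen-path) expectation over the hidden diffusion alone, and then evaluate that expectation through a backward Feynman--Kac recursion whose correctness I verify with Itô's formula.

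First I would exploit the independence built into the construction: $\sigma(S_0,W)$ is independent of $\mathcal{F}_T^{obs}$ and $S^{\theta}=S_0+\Sigma W$. Conditionally on $\mathcal{F}_T^{obs}$ the whole trajectory of $(X,P,N)$ is known --- in particular the jump times $t_0<\dots<t_M$, the marks $z_1,\dots,z_M$, and the frozen values $X_{t_m},P_{t_m}$ --- while $S^{\theta}$ keeps its law. By the standard freezing lemma for independent $\sigma$-algebras, $\E[][Z_T^{\theta}\mid \mathcal{F}_T^{obs}]$ equals the expectation of $Z_T^{\theta}$ taken only over $(S_0,W)$ with the observed path held fixed. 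Since $(X,P)$ are piecewise constant between the $t_m$, I would factorize $Z_T^{\theta}$ as $\prod_{m=1}^{M}\Lambda_\theta^{z_m}(X_{t_{m-1}},S_{t_m}^{\theta}-P_{t_{m-1}})$ times $\exp\big(\int_0^T\sum_{k}(1-\Lambda_\theta^k(X_{u-},S_u^{\theta}-P_{u-}))\mathds{1}_{\mathcal{X}^k_+}(X_{u-})\,\diff u\big)$, with the convention $\Lambda_\theta^{\emptyset}\equiv 1$; on each $(t_m,t_{m+1})$ the integrand is the frozen potential built from $X_{t_m},P_{t_m}$.

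Next I would construct $u$ backward in time, interval by interval. Starting from $u(T,\cdot)\equiv 1$, on $[t_m,t_{m+1})$ the function $u$ is the unique $\mathcal{C}_e$-valued solution of \eqref{eq:linear_pde} with the frozen potential, whose existence and uniqueness come from Lemma \ref{lem:parabolic_pde_existence}, the terminal datum at $t_{m+1}$ being $\Lambda_\theta^{z_{m+1}}(X_{t_m},\cdot-P_{t_m})\,u(t_{m+1},\cdot)$. To run the recursion I must check the data stay admissible: the potential is locally Hölder by the regularity part of Assumption \ref{assumption:intensities_regularity}, and each terminal datum lies in $\mathcal{C}_e$ because $\mathcal{C}_e$ is stable under products --- a bound $Ae^{|y|^B}$ times a bound $A'e^{|y|^{B'}}$ is dominated by $A''e^{|y|^{B''}}$ for any $B''\in(\max(B,B'),2)$ --- while $\Lambda_\theta^{z_{m+1}}(X_{t_m},\cdot)\in\mathcal{C}_e$ by the growth part of the same assumption and $u(t_{m+1},\cdot)\in\mathcal{C}_e$ by induction. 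Iterating down to $[0,t_1)$ produces $u(0,\cdot)\in\mathcal{C}_e$. For the verification I would introduce, for the frozen path, $M_t\defeq\big(\prod_{l:t_l\leqslant t}\Lambda_\theta^{z_l}(X_{t_{l-1}},S_{t_l}^{\theta}-P_{t_{l-1}})\big)\exp\big(\int_0^t\sum_k(1-\Lambda_\theta^k)\mathds{1}_{\mathcal{X}^k_+}\,\diff u\big)\,u(t,S_t^{\theta})$. Applying Itô to $u(t,S_t^{\theta})$ with the generator $\tfrac12\tr(\Sigma\Sigma^T\nabla_y^2\cdot)$ of $S^{\theta}$, the drift part of $M$ cancels on each inter-jump interval precisely because $u$ solves \eqref{eq:linear_pde} with the matching potential, and the boundary condition $v(t_{m+1},\cdot)=\Lambda_\theta^{z_{m+1}}(X_{t_m},\cdot-P_{t_m})u(t_{m+1},\cdot)$ is chosen exactly so that the jump of the product factor offsets the jump of $u$ across $t_{m+1}$, rendering $M$ continuous. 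Hence $M$ is a continuous local martingale on $[0,T]$ with $M_0=u(0,S_0)$ and $M_T=Z_T^{\theta}$, so the martingale identity $\E[][M_0]=\E[][M_T]$ would give $\int_{\mathbb{R}^d}u(0,y)\,m_{S_0}(\diff y)=\E[][Z_T^{\theta}\mid \mathcal{F}_T^{obs}]$.

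The main obstacle is upgrading $M$ from a local to a genuine martingale, i.e. securing enough integrability. Here I would use that the exponential factor is bounded above by $e^{|\mathbb{M}|T}$ (because $\mathbb{M}$ is finite and $1-\Lambda_\theta^k\leqslant 1$), that there are a.s. finitely many jumps by non-explosiveness (Assumption \ref{assumption:non-explosiveness}), and that every remaining factor lies in $\mathcal{C}_e$; bounding $\sup_{t\leqslant T}|M_t|$ by a constant times $\exp\big(c\,(\sup_{t\leqslant T}|S_t^{\theta}|)^B\big)$ with $B<2$ reduces the question to the finiteness of $\E[][\exp(c(\sup_{t\leqslant T}|S_0+\Sigma W_t|)^B)]$, which holds because the running maximum of Brownian motion has Gaussian tails and $B<2$ --- this is exactly the purpose of the exponent restriction defining $\mathcal{C}_e$. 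Controlling the contribution of the initial datum $S_0$ on $[0,t_1)$ through the law $m_{S_0}$ is the delicate point that must be handled with care; once $\E[][\sup_{t\leqslant T}|M_t|]<\infty$ is established, $M$ is uniformly integrable and the claimed identity follows.
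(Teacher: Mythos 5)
Your overall architecture --- freezing the observed path by independence, factorizing $Z_T^{\theta}$ over the inter-jump intervals, constructing $u$ backwards interval by interval via Lemma \ref{lem:parabolic_pde_existence} (using stability of $\mathcal{C}_e$ under products, exactly as the paper does to get existence and uniqueness of $u$), and identifying the frozen-path expectation with $u$ --- is essentially the paper's proof. The only presentational difference is that you unfold the Feynman--Kac representation, which the paper invokes directly on each interval with backwards induction, into an explicit It\^o/local-martingale verification; that part is sound: the boundary condition $v(t_{m+1},\cdot)=\Lambda^{z_{m+1}}_{\theta}(X_{t_m},\cdot-P_{t_m})\,u(t_{m+1},\cdot)$ is exactly what makes your process $M$ continuous across jump times, and the PDE kills the drift in between.

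The genuine gap is in your final integrability step. You claim $\mathbb{E}\big[\exp\big(c\,(\sup_{t\leqslant T}|S_0+\Sigma W_t|)^B\big)\big]<\infty$ ``because the running maximum of Brownian motion has Gaussian tails and $B<2$''. This controls $W$ but not $S_0$: in the definition of $\mathcal{C}_e$ the exponent satisfies $B\geqslant 1$, while $S_0$ is only assumed square integrable, so $\mathbb{E}\big[e^{c|S_0|^B}\big]$ can be infinite (take $d=1$ and a density decaying like $|y|^{-4}$). Hence $\mathbb{E}\big[\sup_{t\leqslant T}|M_t|\big]$ may be infinite, $M$ need not be uniformly integrable, and the point you defer as ``delicate'' is actually fatal to the argument as formulated. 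The repair is to freeze $S_0$ together with the observed path rather than keeping it inside the martingale: since $Z_T^{\theta}\geqslant 0$ and $S_0$, $W$, $(X,P,N)$ are independent, Tonelli gives
\begin{equation*}
    \mathbb{E}\big[Z_T^{\theta}\,\big|\,\mathcal{F}_T^{obs}\big]
    = \int_{\mathbb{R}^d}\phi(y)\,m_{S_0}(\diff y)
    \quad \mathbb{P}\text{-a.s.},
\end{equation*}
where $\phi(y)$ is the expectation, over $W$ alone, of $Z_T^{\theta}$ with the observed path frozen and $S_0$ replaced by the deterministic point $y$; this step needs no integrability at all. You then run your It\^o/martingale argument for each fixed $y$, where $M^y_0=u(0,y)$ is deterministic and the required moment bound involves only $\sup_{t\leqslant T}|W_t|$ (harmless, since $B<2$), and you integrate in $y$ only at the very end. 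This order of operations --- freeze first, then argue pointwise in $y$ --- is precisely how the paper's proof proceeds, and it is the one place where your route genuinely diverges from it.
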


\begin{remark}
    In Theorem \ref{thm:PDE_characterization_lik}, $M$, $(t_m)$ and $u$ are all $\mathcal{F}_T^{obs}$-measurable random variables.
\end{remark}

\subsection{Approximation of the likelihood}
\label{subsec:estimator_implementation}
Fix $\theta \in \Theta$. Thanks to Theorem \ref{thm:PDE_characterization_lik}, computing the likelihood boils down to solving a linear second-order PDE with jumps. We now describe our approximation procedure. We assume the following exponential form of the tracked intensities:
\begin{equation*}
    \Lambda_{\theta}^k(x,y) = \exp\bigg(\sum_{\alpha \in \mathbb{N}^d, |\alpha| \leqslant D} b^k_{\alpha}(x) y^{\alpha}\bigg),\quad (k,x,y) \in \mathbb{M} \times \mathcal{X} \times \mathbb{R}^d
\end{equation*}
for some $D > 0$
and the power series expansion of their sum:
\begin{equation*}
    \sum_{k \in \mathbb{M}}\big(-1+\Lambda_{\theta}^k(x,y)\big) = \sum_{\alpha \in \mathbb{N}^d} b_{\alpha}(x) y^{\alpha},\quad (x,y) \in \mathcal{X} \times \mathbb{R}^d,
\end{equation*}
with $\sum_{\alpha \in \mathbb{N}^d}^{\infty}\sum_{i=1}^d |b_{\alpha}(x) y^{\alpha}| < \infty$ for all $(x,y) \in \mathcal{X} \times \mathbb{R}^d$. We fix $b_{\alpha}^k \equiv 0$ for $|\alpha| > D$. Suppose that $u$ is a smooth solution of \eqref{eq:linear_pde} on a set $(t_1,t_2)\times \mathbb{R}^d$ with $c =  \sum_{k \in \mathbb{M}}\big(-1+\Lambda_{\theta}^k(x,\cdot)\big)$ for a fixed $x \in \mathcal{X}$, and that $u$ has the form
\begin{equation*}
    u(t,y) = \exp\bigg(-\sum_{a \in \mathbb{N}} a_{\alpha}(t) y^{\alpha} \bigg),\quad (t,y) \in (t_1,t_2)\times \mathbb{R}^d,
\end{equation*}
$(a_{\alpha})_{\alpha \in \mathbb{N}^d}$ being a family of differentiable functions on $(t_1, t_2)$ with $\sup_{t \in (t_1,t_2)}\sum_{a \in \mathbb{N}} |a_{\alpha}(t) y^{\alpha}| < \infty$ for all $(t,y) \in (t_1,t_2) \times \mathbb{R}^d$. Then, for all $\alpha \in \mathbb{N}^d$ and $t \in (t_1,t_2)$,
\begin{equation}
    \label{eq:system_power_series}
    \begin{split}
        0 &= -a'_{\alpha}(t) - \sum_{1 \leqslant i < j \leqslant d}(\Sigma \Sigma^T)_{ij}(\alpha_i+1)(\alpha_j + 1) a_{\alpha + e_i+e_j}(t) - \frac{1}{2}\sum_{i=1}^d (\Sigma \Sigma^T)_{ii} (\alpha_i+1)(\alpha_i + 2) a_{\alpha + 2e_i}(t)\\
        &\phantom{=}+ \frac{1}{2}\sum_{1 \leqslant i , j \leqslant d}(\Sigma \Sigma^T)_{ij}\sum_{\beta \leqslant \alpha}(\alpha_j-\beta_j+1)(\beta_i+1) a_{\beta+e_i}(t)a_{\alpha-\beta+e_j}(t) - b_{\alpha}(x).
        \end{split}
\end{equation}
In practice, we fix a degree $n_{deg} \in 2\mathbb{N}^*$, and we solve the system, \eqref{eq:system_power_series} for $\alpha \in \mathbb{N}^*$ such that $|a| \leqslant n_{deg}$, fixing $a_{\alpha} \equiv 0$ if $|\alpha| > n_{deg}$. \textcite{derchu2020bayesian} used a similar approach with $n_{deg} = 2$ to approximate the forward equation, characterizing the conditional law of $S_t$ with respect to $\mathcal{F}_t^{obs}$. This allowed for explicit computations and a highly tractable Gaussian distribution for $S_t$. However, $n_{deg} = 2$ happened to be too small for our estimation purpose.

Let $u$ be the function given by Theorem \ref{thm:PDE_characterization_lik}. For $t \in [0,T]$, define $\hat{u}(t,y) = u(t, \cdot + P_t)$. We approximate $\hat{u}$ by
\begin{equation*}
    \hat{u}(t,y) \approx \exp\bigg(-\sum_{|\alpha| \leqslant n_{deg}}a_{\alpha}(t)y^{\alpha}\bigg),\quad (t,y) \in [0,T] \times \mathbb{R}^d.
\end{equation*}
We approximate our likelihood by $\hat{u}(0, 0)$, assuming $S_0 = P_0$ almost surely ($m_{S_0} = \delta_{P_0}$ with the notations of Theorem \ref{thm:PDE_characterization_lik}). The details of our iterative procedure to compute $\hat{u}$ are given in Algorithm \ref{algo:hat_u}. We use an explicit scheme to solve the system of ODEs \eqref{eq:system_power_series}.

\begin{algorithm}
    \caption{Computation of $\hat{u}$. Notations from Theorem \ref{thm:PDE_characterization_lik}.}
    \label{algo:hat_u}
    \begin{algorithmic}[1]
        \STATE Initialize $ a_{\alpha}(T) \gets 0 $, $a_{\alpha} \in \mathbb{N}^d \setminus \{0\}$.
        \STATE Initialize $ a_{0}(T) \gets 1 $.
        \FOR{\( m \) in \( \{M-1, \dots, 0\} \)}
            \STATE $\hat{a}_{\alpha} \gets a_{\alpha}(t_{m+1})$, $\alpha \in \mathbb{N}^d$
            \FOR{$i \in \stocks$}
                \STATE $\Delta P \gets P_{t_{m+1}}^i - P_{t_{m+1}-}^i$.
                \STATE $(\hat{a}_{\alpha})_{\alpha \in \mathbb{N}^d} \gets (\sum_{n=0}^{\infty}\hat{a}_{\alpha+n e_i}\binom{\alpha_i+n}{\alpha_i}(\Delta P)^n)_{\alpha \in \mathbb{N}^d}$.
            \ENDFOR
            \STATE Solve the system \eqref{eq:system_power_series} on $[t_m, t_{m+1}) $ for $|\alpha| \leqslant n_{deg}$, assuming $a_{\alpha} \equiv 0$ if $|\alpha| > n_{deg}$,  with terminal condition:
            \begin{equation*}
                \lim_{t \to t_{m+1}-} a_{\alpha}(t) =
                \hat{a}_{\alpha} - b^{z_{t_{m+1}}}_{\alpha}(X_{t_m}),\quad \alpha \in \mathbb{N}^d, |\alpha| \leqslant n_{deg}
            \end{equation*}
            with the convention $b^{\emptyset}_{\alpha} \equiv 0.$
        \ENDFOR
    \end{algorithmic}
    \end{algorithm}

\subsection{Numerical validation}
\label{subsec:estimator_validation}

\subsubsection{Models considered}

We now apply our estimation method to simulated data to check its convergence. We test the two following models.

\textbf{Model 1:} $d = 1$ stock. Queue-reactive model of Section \ref{subsec:queue_reactive} with $K = 1$ tracked queue, à la \textcite{cont2012order}. We track the limit, cancel and market orders and for $e \in \{\text{limit}, \text{cancel}, \text{market}\} \times \{\text{bid}, \text{ask}\}$, the intensity of arrival of an order of type $e$ is given by
\begin{equation*}
    \tilde{\Lambda}^{e}(q^b, q^a, y) = \exp\big(
        \alpha^e_0 + \alpha^e_1 y + \alpha^e_2 q^b + \alpha^e_3 q^a
    \big),\quad y \in \mathbb{R}, q^b \in \mathbb{R}, q^a \in \mathbb{R},
\end{equation*}
$q^b$ and $q^a$ being the volumes pending on the bid and ask piles respectively. We impose the following bid-ask symmetry conditions, to lower the dimension of the state space:
\begin{equation*}
    \begin{cases}
      \alpha_0^{f, \text{bid}} = \alpha_0^{f, \text{ask}}\\
      \alpha_1^{f, \text{bid}} = -\alpha_1^{f, \text{ask}}\\
      \alpha_2^{f, \text{bid}} = \alpha_3^{f, \text{ask}}\\
      \alpha_3^{f, \text{bid}} = \alpha_2^{f, \text{ask}}
    \end{cases},\quad f \in \{\text{limit}, \text{cancel}, \text{market}\}.
\end{equation*}
The order sizes are unitary, expect market which have a probability $p=0.3$ to wipe the whole queue. When the ask (resp. bid) queue is depleted the reference price jumps up (resp. down) by one tick $\delta = 0.01$. The true values used for simulation are given in Table \ref{tab:model_1_true_values}. A sample simulated path of the efficient and the reference prices is drawn on Figure \ref{fig:1_stock_sample_path}. We observe that at the macroscopic scale, the efficient price and the reference price are not very different from each other.

\begin{table}
    \centering
    \begin{tabular}{|c|c|c|c|c|c|c|c|c|}
    \hline
    \textbf{Coefficient} & $\alpha_0^{\text{limit}}$ & $\alpha_1^{\text{limit}}$ & $\alpha_2^{\text{limit}}$ & $\alpha_3^{\text{limit}}$ & $\alpha_0^{\text{cancel}}$ & $\alpha_1^{\text{cancel}}$ & $\alpha_2^{\text{cancel}}$ & $\alpha_3^{\text{cancel}}$ \\
    \hline
    \textbf{True Value} & $\ln(2)$ & $2.5$ & $-1.0$ & $0.2$ & $\ln(1.9)$ & $-2.5$ & $1.0$ & $-0.2$\\
    \hline \multicolumn{6}{c}{} \\[-2.5ex] \cline{1-6}
    \textbf{Coefficient} & $\alpha_0^{\text{market}}$ & $\alpha_1^{\text{market}}$ & $\alpha_2^{\text{market}}$ & $\alpha_3^{\text{market}}$ & $\Sigma$ \\
    \cline{1-6}
    \textbf{True Value} & $\ln(0.1)$ & $-2.5$ & $-1.0$ & $0.2$ & $0.01$\\
    \cline{1-6}
    \end{tabular}
    \caption{True values used for simulation in Model 1, on the bid side.}
    \label{tab:model_1_true_values}
    \end{table}

\begin{figure}
    \centering
    \includegraphics[width=0.95\textwidth]{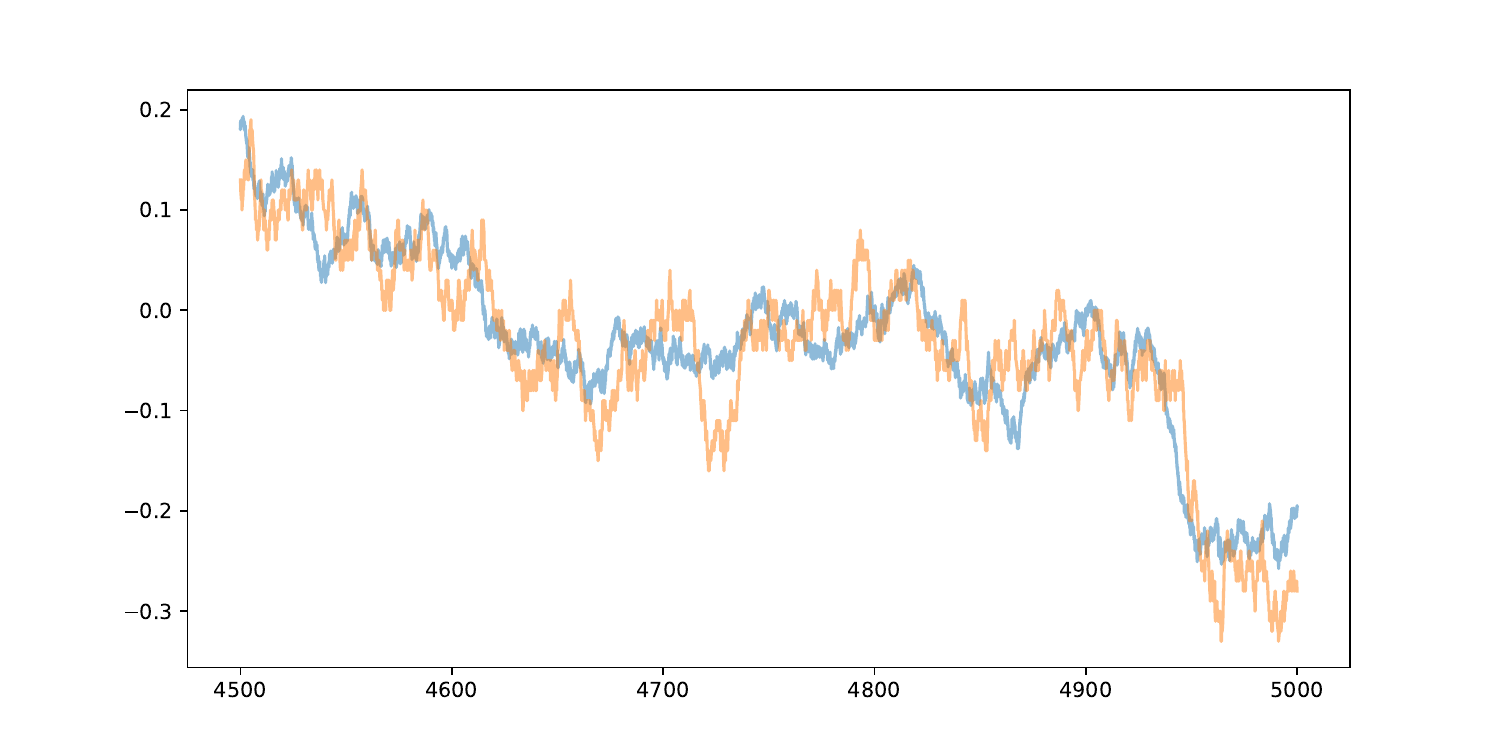}
    \includegraphics[width=0.95\textwidth]{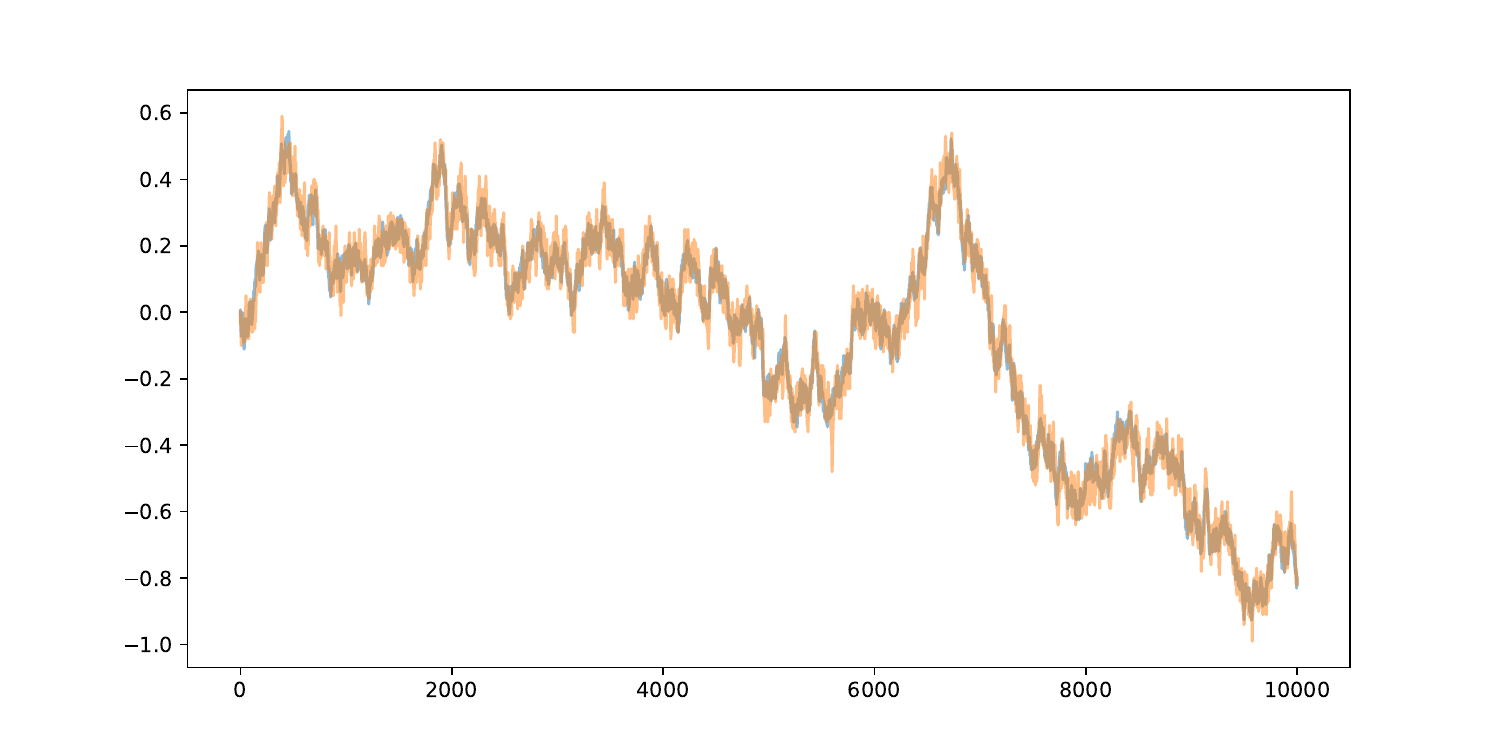}
    \caption{Simulated path of Model 1 with a horizon $T=10000$. In blue the efficient price $\Sigma W$, in orange the reference price $P$. Above: path on the time window [4500, 5000]. Below: whole path.}
    \label{fig:1_stock_sample_path}
\end{figure}

\textbf{Model 2:} $d = 2$ stocks. Signal-driven price model of Section \ref{subsec:signal_modulated}, with a state space $\mathcal{X} \defeq \mathbb{R}^2 \times \mathbb{R}^2$, the first two components being associated to Stock 1, the next ones to Stock 2. Denoting the state process by $X_t = (X^1_t, X^2_t)$, we have that $X^1$ and $X^2$ jump, independently at a Poisson rate of 1 and their value after the jump follows a standard normal distribution in two dimensions. Moreover, when there is a price jump of stock $i$, $X^i$ will immediately jump again following a standard normal distribution in two dimensions. A price jump of the stock $i$ and direction $e \in {-,+}$ will have an intensity function of arrival of
\begin{equation*}
    \tilde{\Lambda}^{i,e}(X^{i}_t, y) = \exp\big(
        \beta^{i,e}_0 + \beta^e_1 y + \beta^{i,e}_2 X^{i,1}_t + \beta^{i,e}_3 X^{i,2}_t
    \big),\quad y \in \mathbb{R}.
\end{equation*}
Again, we impose symmetry conditions
\begin{equation*}
    \begin{cases}
        \beta_0^{i, -} = \beta_0^{i, +}\\
        \beta_1^{i, -} = -\beta_1^{i, +}\\
        \beta_2^{i, -} = -\beta_2^{i, +}\\
        \beta_3^{i, -} = -\beta_3^{i, +}
      \end{cases},\quad i \in \{1,2\}.
\end{equation*}
Stock 1 has a tick size of $\delta^1 = 0.01$ and Stock 2 of $\delta^2=0.005$. The volatility of Stock $i$ is denoted by $\sigma_i$, and their correlation by $\rho$. The true values used for simulation are given in Table \ref{tab:model_2_true_values}.

\begin{table}
    \centering
    \begin{tabular}{|c|c|c|c|c|c|c|c|c|}
    \hline
    \textbf{Coefficient} & $\beta_0^{1,-}$ & $\beta_1^{1,-}$ & $\beta_2^{1,-}$ & $\beta_3^{1,-}$ & $\beta_0^{2,-}$ & $\beta_1^{2,-}$ & $\beta_2^{2,-}$ & $\beta_3^{2,-}$ \\
    \hline
    \textbf{True Value} & $\ln(2)$ & $-1$ & $-0.5$ & $1.0$ & $0.0$ & $-1.6$ & $2.0$ & $1.0$\\
    \hline \multicolumn{4}{c}{} \\[-2.5ex] \cline{1-4}
    \textbf{Coefficient} & $\sigma_1$ & $\sigma_2$ & $\rho$\\
    \cline{1-4}
    \textbf{True Value} & $0.01$ & 0.02 & 0.6\\
    \cline{1-4}
    \end{tabular}
    \caption{True values used for simulation in Model 2.}
    \label{tab:model_2_true_values}
    \end{table}

The exponential form of the intensities is chosen because it is computationally convenient and easily interpretable. It was also used in \parencite{muni2017modelling,sfendourakis_lob_2023}. The exponential part $e^{\alpha y}$ depending on the efficient price matches the development of Section \ref{sec:estimation} as it can be represented as a power series. The same choice was made by \textcite{derchu2020bayesian}.

\begin{remark}
    Both Models 1 and 2 do not satisfy the Assumptions for stability stated in Sections \ref{subsec:signal_modulated} and \ref{subsec:queue_reactive}. However, they do graphically seem stable, hinting that these assumptions may not be optimal.
\end{remark}

\subsubsection{Estimation results}

To estimate Models 1 and 2, we approximate the likelihood using Algorithm \ref{algo:hat_u}, with $n_{deg} = 10$ for Model 1 and $n_{deg} = 6$ for Model 2 (there is an ODE in $\frac{n_{deg}(n_{deg} + 1)}{2}$ dimensions to solve in the two-dimensional case, so we have to keep the degrees of the polynomial low for reasonable computation time). Between jump times, the system of ODEs \eqref{eq:system_power_series} is solved using an explicit Euler scheme.

The likelihood presents many local maxima. We use the CMAES optimization algorithm \parencite{hansen2016cma,nomura2024cmaes} which has been proven very robust in such cases. To increase robustness, since the algorithm is stochastic, we run it three times for Model 1 and six times for Model 2, and we keep the try with the highest likelihood.

For each horizon $T$, the estimation has been carried on 20 simulations. In Figures \ref{fig:1_stock_estimation_means} and \ref{fig:2_stocks_estimation_means}, we display the empirical average of each estimated coefficient. The empirical mean squared error is plotted on Figures \ref{fig:1_stock_estimation_mse} and \ref{fig:2_stocks_estimation_mse}. We see a clear decrease in the mean squared error as the horizon grows, with a slope of roughly -1 in log-log scale, hinting a standard deviation that decays in $\frac{1}{\sqrt{T}}$, as in most maximum likelihood frameworks.

\begin{figure}
    \centering
    \begin{tikzpicture}
    \begin{groupplot}[group style={group size=3 by 5,horizontal sep=1.5cm,
                vertical sep=1.5cm},
                       height=5cm, width=5.5cm,
                       grid=both,xmode=log,
                       log ticks with fixed point,]
    \nextgroupplot[title={$\alpha_0^{\text{limit},\text{bid}}$}]
    \addplot coordinates {
    (100.0, 0.7262237980054647) 
    (200.0, 0.7376702221678665) 
    (500.0, 0.6923238817286568) 
    (1000.0, 0.6941319089222667) 
    (2000.0, 0.6898626984885579) 
    (5000.0, 0.6976662525748334) 
    (10000.0, 0.6908593473476236) 
    };
    \addplot[very thick, black] coordinates {(100.0, 0.6931471805599453) (10000.0, 0.6931471805599453)};
    \nextgroupplot[title={$\alpha_1^{\text{limit},\text{bid}}$}]
    \addplot coordinates {
    (100.0, 0.5307012209672185) 
    (200.0, 2.3432168637928283) 
    (500.0, 2.8048795084460947) 
    (1000.0, 2.453820779803185) 
    (2000.0, 2.262028765903975) 
    (5000.0, 2.537527330894066) 
    (10000.0, 2.407238686193273) 
    };
    \addplot[very thick, black] coordinates {(100.0, 2.5) (10000.0, 2.5)};
    \nextgroupplot[title={$\alpha_2^{\text{limit},\text{bid}}$}]
    \addplot coordinates {
    (100.0, -1.0535454383259937) 
    (200.0, -1.0243949378708472) 
    (500.0, -0.9962416497786274) 
    (1000.0, -1.005529037130483) 
    (2000.0, -0.9999105334541524) 
    (5000.0, -0.99866664527252) 
    (10000.0, -0.9985592321199424) 
    };
    \addplot[very thick, black] coordinates {(100.0, -1) (10000.0, -1)};
    \nextgroupplot[title={$\alpha_3^{\text{limit},\text{bid}}$}]
    \addplot coordinates {
    (100.0, 0.2173205193739429) 
    (200.0, 0.188403190667393) 
    (500.0, 0.1906789211409617) 
    (1000.0, 0.2009066172477907) 
    (2000.0, 0.2029072960959222) 
    (5000.0, 0.195128554125783) 
    (10000.0, 0.2011168976255322) 
    };
    \addplot[very thick, black] coordinates {(100.0, 0.2) (10000.0, 0.2)};
    \nextgroupplot[title={$\alpha_0^{\text{cancel},\text{bid}}$}]
    \addplot coordinates {
    (100.0, 0.623166850980584) 
    (200.0, 0.6392653358863988) 
    (500.0, 0.6407396156160206) 
    (1000.0, 0.6387099467131295) 
    (2000.0, 0.6460673124873909) 
    (5000.0, 0.6424847724452751) 
    (10000.0, 0.6397921226716926) 
    };
    \addplot[very thick, black] coordinates {(100.0, 0.6418538861723947) (10000.0, 0.6418538861723947)};
    \nextgroupplot[title={$\alpha_1^{\text{cancel},\text{bid}}$}]
    \addplot coordinates {
    (100.0, -2.162860751533624) 
    (200.0, -2.4775330118605683) 
    (500.0, -2.6028173887161903) 
    (1000.0, -2.5382014826175427) 
    (2000.0, -2.499630657258341) 
    (5000.0, -2.4984945335952693) 
    (10000.0, -2.4830734685528) 
    };
    \addplot[very thick, black] coordinates {(100.0, -2.5) (10000.0, -2.5)};
    \nextgroupplot[title={$\alpha_2^{\text{cancel},\text{bid}}$}, yticklabel style={/pgf/number format/fixed, /pgf/number format/precision=4}]
    \addplot coordinates {
    (100.0, 1.0011822434886504) 
    (200.0, 1.0017941775152015) 
    (500.0, 1.0025818465371756) 
    (1000.0, 0.999986301299048) 
    (2000.0, 0.9995337173814868) 
    (5000.0, 1.0003454975790678) 
    (10000.0, 1.0002044532103498) 
    };
    \addplot[very thick, black] coordinates {(100.0, 1.0) (10000.0, 1.0)};
    \nextgroupplot[title={$\alpha_3^{\text{cancel},\text{bid}}$}]
    \addplot coordinates {
    (100.0, -0.1907009072996558) 
    (200.0, -0.2006337621213236) 
    (500.0, -0.2014584861188011) 
    (1000.0, -0.199270541710358) 
    (2000.0, -0.2009365826934617) 
    (5000.0, -0.2006054262227182) 
    (10000.0, -0.199025378770621) 
    };
    \addplot[very thick, black] coordinates {(100.0, -0.2) (10000.0, -0.2)};
    \nextgroupplot[title={$\alpha_0^{\text{market},\text{bid}}$}]
    \addplot coordinates {
    (100.0, 12.82218520837174) 
    (200.0, 0.9155515732187935) 
    (500.0, -2.1280407697559065) 
    (1000.0, -2.2183590871711347) 
    (2000.0, -2.300011839730033) 
    (5000.0, -2.353579921972025) 
    (10000.0, -2.321522490938867) 
    };
    \addplot[very thick, black] coordinates {(100.0, -2.3025850929940455) (10000.0, -2.3025850929940455)};
    \nextgroupplot[title={$\alpha_1^{\text{market},\text{bid}}$}]
    \addplot coordinates {
    (100.0, -2.334817368568466) 
    (200.0, -4.504285601369446) 
    (500.0, -2.619373161222433) 
    (1000.0, -1.716382458417376) 
    (2000.0, -2.5793664684202104) 
    (5000.0, -2.8674126578743686) 
    (10000.0, -2.516827731369304) 
    };
    \addplot[very thick, black] coordinates {(100.0, -2.5) (10000.0, -2.5)};
    \nextgroupplot[title={$\alpha_2^{\text{market},\text{bid}}$}]
    \addplot coordinates {
    (100.0, -15.369263938566684) 
    (200.0, -4.238454298338952) 
    (500.0, -1.2581161459047103) 
    (1000.0, -1.101126090966838) 
    (2000.0, -0.9859838309261608) 
    (5000.0, -1.0147284271357897) 
    (10000.0, -1.0024209418446188) 
    };
    \addplot[very thick, black] coordinates {(100.0, -1) (10000.0, -1)};
    \nextgroupplot[title={$\alpha_3^{\text{market},\text{bid}}$}]
    \addplot coordinates {
    (100.0, -0.8291728898495672) 
    (200.0, 0.1510457990761948) 
    (500.0, 0.2640233782789871) 
    (1000.0, 0.2260635387718753) 
    (2000.0, 0.1634430533517138) 
    (5000.0, 0.2282155708937723) 
    (10000.0, 0.2053895613093069) 
    };
    \addplot[very thick, black] coordinates {(100.0, 0.2) (10000.0, 0.2)};
    \nextgroupplot[title={$\Sigma$}]
    \addplot coordinates {
    (100.0, 0.0076934000740094) 
    (200.0, 0.0078078424188686) 
    (500.0, 0.0097275891514378) 
    (1000.0, 0.0097162023929248) 
    (2000.0, 0.009744246664904) 
    (5000.0, 0.0103028897071818) 
    (10000.0, 0.0098750142659512) 
    };
    \addplot[very thick, black] coordinates {(100.0, 0.01) (10000.0, 0.01)};
    \end{groupplot} \end{tikzpicture}
    \caption{Empirical mean of the estimated coefficients of Model 1 with respect to the simulation horizon. The true value is the black horizontal line. The $x$-axis is in log-scale.}
    \label{fig:1_stock_estimation_means}
\end{figure}

\begin{figure}
    \centering
    \begin{tikzpicture}
    \begin{groupplot}[group style={group size=3 by 5,horizontal sep=1.5cm,
                vertical sep=1.5cm},
                       height=5cm, width=5.5cm,
                       grid=both,xmode=log, ymode=log,
                       log ticks with fixed point,
        yticklabel={
            \pgfmathparse{int(round(\tick / ln(10)))}
            $10^{\pgfmathresult}$}
        ]
    \nextgroupplot[title={$\alpha_0^{\text{limit},\text{bid}}$},]
    \addplot coordinates {
    (100.0, 0.03146986434898492) 
    (200.0, 0.04773554608540021) 
    (500.0, 0.00877561037487799) 
    (1000.0, 0.005659749277006536) 
    (2000.0, 0.0021954142226528993) 
    (5000.0, 0.0021205890508302467) 
    (10000.0, 0.000481013570965259) 
    };
    \nextgroupplot[title={$\alpha_1^{\text{limit},\text{bid}}$}]
    \addplot coordinates {
    (100.0, 16.050766645566203) 
    (200.0, 1.5014973949186352) 
    (500.0, 1.8188317284572777) 
    (1000.0, 0.7688760523689351) 
    (2000.0, 0.44349428452362505) 
    (5000.0, 0.11658751644962151) 
    (10000.0, 0.08999798436092654) 
    };
    \nextgroupplot[title={$\alpha_2^{\text{limit},\text{bid}}$}]
    \addplot coordinates {
    (100.0, 0.03370134346590898) 
    (200.0, 0.017993518535729666) 
    (500.0, 0.007236967112311518) 
    (1000.0, 0.002615642409610791) 
    (2000.0, 0.0014436243507225024) 
    (5000.0, 0.0007242471076125116) 
    (10000.0, 0.00037304743563622414) 
    };
    \nextgroupplot[title={$\alpha_3^{\text{limit},\text{bid}}$}]
    \addplot coordinates {
    (100.0, 0.02335971348157384) 
    (200.0, 0.014060310015246553) 
    (500.0, 0.002450817216065685) 
    (1000.0, 0.0016685690767684383) 
    (2000.0, 0.0007947926246907667) 
    (5000.0, 0.0005605851229289066) 
    (10000.0, 0.0001704636670076447) 
    };
    \nextgroupplot[title={$\alpha_0^{\text{cancel},\text{bid}}$}]
    \addplot coordinates {
    (100.0, 0.004023551991898234) 
    (200.0, 0.0011170464911084694) 
    (500.0, 0.0011517130517324163) 
    (1000.0, 0.0003454817738748967) 
    (2000.0, 0.00010064832857413267) 
    (5000.0, 4.274487408766998e-05) 
    (10000.0, 4.721983610618981e-05) 
    };
    \nextgroupplot[title={$\alpha_1^{\text{cancel},\text{bid}}$}]
    \addplot coordinates {
    (100.0, 1.5904335762049961) 
    (200.0, 0.31931414991510193) 
    (500.0, 0.39895302375549213) 
    (1000.0, 0.08061038452791801) 
    (2000.0, 0.052039000058261724) 
    (5000.0, 0.022502363939549384) 
    (10000.0, 0.008064739608219693) 
    };
    \nextgroupplot[title={$\alpha_2^{\text{cancel},\text{bid}}$}]
    \addplot coordinates {
    (100.0, 0.0002633038271841697) 
    (200.0, 0.00018613816664217694) 
    (500.0, 8.864891298407199e-05) 
    (1000.0, 3.122299497969919e-05) 
    (2000.0, 1.309533450939061e-05) 
    (5000.0, 7.872291155897423e-06) 
    (10000.0, 3.3772437423887073e-06) 
    };
    \nextgroupplot[title={$\alpha_3^{\text{cancel},\text{bid}}$}]
    \addplot coordinates {
    (100.0, 0.0009235406849150732) 
    (200.0, 0.0003077538751345874) 
    (500.0, 0.00022638732406762515) 
    (1000.0, 7.444322102348983e-05) 
    (2000.0, 3.5011846076271375e-05) 
    (5000.0, 1.8116098704863104e-05) 
    (10000.0, 9.47142331383032e-06) 
    };
    \nextgroupplot[title={$\alpha_0^{\text{market},\text{bid}}$}]
    \addplot coordinates {
    (100.0, 896.2397536621813) 
    (200.0, 59.26883027827327) 
    (500.0, 0.26308742998017587) 
    (1000.0, 0.25624109770753833) 
    (2000.0, 0.06432064770560712) 
    (5000.0, 0.025332141658727943) 
    (10000.0, 0.011133114688857242) 
    };
    \nextgroupplot[title={$\alpha_1^{\text{market},\text{bid}}$}]
    \addplot coordinates {
    (100.0, 195.24840105905145) 
    (200.0, 100.83425032252505) 
    (500.0, 19.15152697886192) 
    (1000.0, 13.58959407607494) 
    (2000.0, 5.429108234290634) 
    (5000.0, 1.446061171575592) 
    (10000.0, 1.4932974922499709) 
    };
    \nextgroupplot[title={$\alpha_2^{\text{market},\text{bid}}$}]
    \addplot coordinates {
    (100.0, 806.7130190244856) 
    (200.0, 58.597503539221755) 
    (500.0, 0.23055357098251672) 
    (1000.0, 0.1274525104960458) 
    (2000.0, 0.030205060748339112) 
    (5000.0, 0.011092885155572411) 
    (10000.0, 0.004902751328181655) 
    };
    \nextgroupplot[title={$\alpha_3^{\text{market},\text{bid}}$}]
    \addplot coordinates {
    (100.0, 16.867610254638755) 
    (200.0, 0.11085426311599789) 
    (500.0, 0.06552978876828373) 
    (1000.0, 0.047137457417079386) 
    (2000.0, 0.016357490979499414) 
    (5000.0, 0.004067572105607004) 
    (10000.0, 0.0024679645528616224) 
    };
    \nextgroupplot[title={$\Sigma$}]
    \addplot coordinates {
    (100.0, 3.274647905870392e-05) 
    (200.0, 1.895478929150195e-05) 
    (500.0, 2.551332143924983e-06) 
    (1000.0, 1.2896207667926327e-06) 
    (2000.0, 9.536567824187423e-07) 
    (5000.0, 4.996121903220649e-07) 
    (10000.0, 2.1841639622072307e-07)
    };
    \end{groupplot} \end{tikzpicture}
    \caption{Empirical mean square error of the estimated coefficients of Model 1 with respect to the simulation horizon. In log-log scale.}
    \label{fig:1_stock_estimation_mse}
\end{figure}

\begin{figure}
    \centering
    \begin{tikzpicture}
    \begin{groupplot}[group style={group size=3 by 4,horizontal sep=1.5cm,
                vertical sep=2cm},
                       height=5cm, width=5.5cm,
                       grid=both,xmode=log,
                       log ticks with fixed point,]
    \nextgroupplot[title={$\beta_0^{1,-}$},yticklabel style={/pgf/number format/fixed, /pgf/number format/precision=3}]
    \addplot coordinates {
    (200.0, 0.6882552369499029) 
    (500.0, 0.6961478657732773) 
    (1000.0, 0.6917241750923242) 
    (2000.0, 0.6929684019273131) 
    };
    \addplot[very thick, black] coordinates {(100.0, 0.6931471805599453) (2000.0, 0.6931471805599453)};
    \nextgroupplot[title={$\beta_1^{1,-}$}]
    \addplot coordinates {
    (200.0, -1.0829855331857865) 
    (500.0, -1.0326140291015415) 
    (1000.0, -0.9822187479946076) 
    (2000.0, -1.0102494760416296) 
    };
    \addplot[very thick, black] coordinates {(100.0, -1) (2000.0, -1)};
    \nextgroupplot[title={$\beta_2^{1,-}$}, yticklabel style={/pgf/number format/fixed, /pgf/number format/precision=4}]
    \addplot coordinates {
    (200.0, -0.4935317631254633) 
    (500.0, -0.4970570440820004) 
    (1000.0, -0.503193128558894) 
    (2000.0, -0.4966955561099401) 
    };
    \addplot[very thick, black] coordinates {(100.0, -0.5) (2000.0, -0.5)};
    \nextgroupplot[title={$\beta_3^{1,-}$}, yticklabel style={/pgf/number format/fixed, /pgf/number format/precision=4}]
    \addplot coordinates {
    (200.0, 0.9911470057222092) 
    (500.0, 0.9931438305003576) 
    (1000.0, 0.9974057213682818) 
    (2000.0, 1.0022871107493392) 
    };
    \addplot[very thick, black] coordinates {(100.0, 1) (2000.0, 1)};
    \nextgroupplot[title={$\beta_0^{2,-}$}]
    \addplot coordinates {
    (200.0, 0.0006664783020552) 
    (500.0, 0.0083923244801105) 
    (1000.0, 0.0044412608141231) 
    (2000.0, -0.0043957055932243) 
    };
    \addplot[very thick, black] coordinates {(100.0, 0.0) (2000.0, 0.0)};
    \nextgroupplot[title={$\beta_1^{2,-}$}]
    \addplot coordinates {
    (200.0, -1.8481777258601184) 
    (500.0, -1.7523170033526874) 
    (1000.0, -1.6700701207774895) 
    (2000.0, -1.6523967645021655) 
    };
    \addplot[very thick, black] coordinates {(100.0, -1.6) (2000.0, -1.6)};
    \nextgroupplot[title={$\beta_2^{2,-}$}, yticklabel style={/pgf/number format/fixed, /pgf/number format/precision=3}]
    \addplot coordinates {
    (200.0, 1.987055338884296) 
    (500.0, 1.996526635591304) 
    (1000.0, 1.992843849149692) 
    (2000.0, 2.0015985016680133) 
    };
    \addplot[very thick, black] coordinates {(100.0, 2.0) (2000.0, 2.0)};
    \nextgroupplot[title={$\beta_3^{2,-}$}, yticklabel style={/pgf/number format/fixed, /pgf/number format/precision=3}]
    \addplot coordinates {
    (200.0, 1.0132856738235154) 
    (500.0, 0.9959493032886004) 
    (1000.0, 0.9984924982439196) 
    (2000.0, 1.0013703516413257) 
    };
    \addplot[very thick, black] coordinates {(100.0, 1.0) (2000.0, 1.0)};
    \nextgroupplot[title={$\sigma_1$}]
    \addplot coordinates {
    (200.0, 0.007295580309384) 
    (500.0, 0.0099231937538615) 
    (1000.0, 0.007833954597252) 
    (2000.0, 0.0091351797970221) 
    };
    \addplot[very thick, black] coordinates {(100.0, 0.01) (2000.0, 0.01)};
    \nextgroupplot[title={$\sigma_2$}]
    \addplot coordinates {
    (200.0, 0.0244844301307672) 
    (500.0, 0.0223715261800693) 
    (1000.0, 0.0190566142392045) 
    (2000.0, 0.0198516339921286) 
    };
    \addplot[very thick, black] coordinates {(100.0, 0.02) (2000.0, 0.02)};
    \nextgroupplot[title={$\rho$}]
    \addplot coordinates {
    (200.0, 0.686573207199462) 
    (500.0, 0.5805092034215853) 
    (1000.0, 0.7268345580997562) 
    (2000.0, 0.629425009617877) 
    };
    \addplot[very thick, black] coordinates {(100.0, 0.6) (2000.0, 0.6)};
    \end{groupplot} \end{tikzpicture}
    \caption{Empirical mean of the estimated coefficients of Model 2 with respect to the simulation horizon. The true value is the black horizontal line. The $x$-axis is in log-scale.}
    \label{fig:2_stocks_estimation_means}
\end{figure}

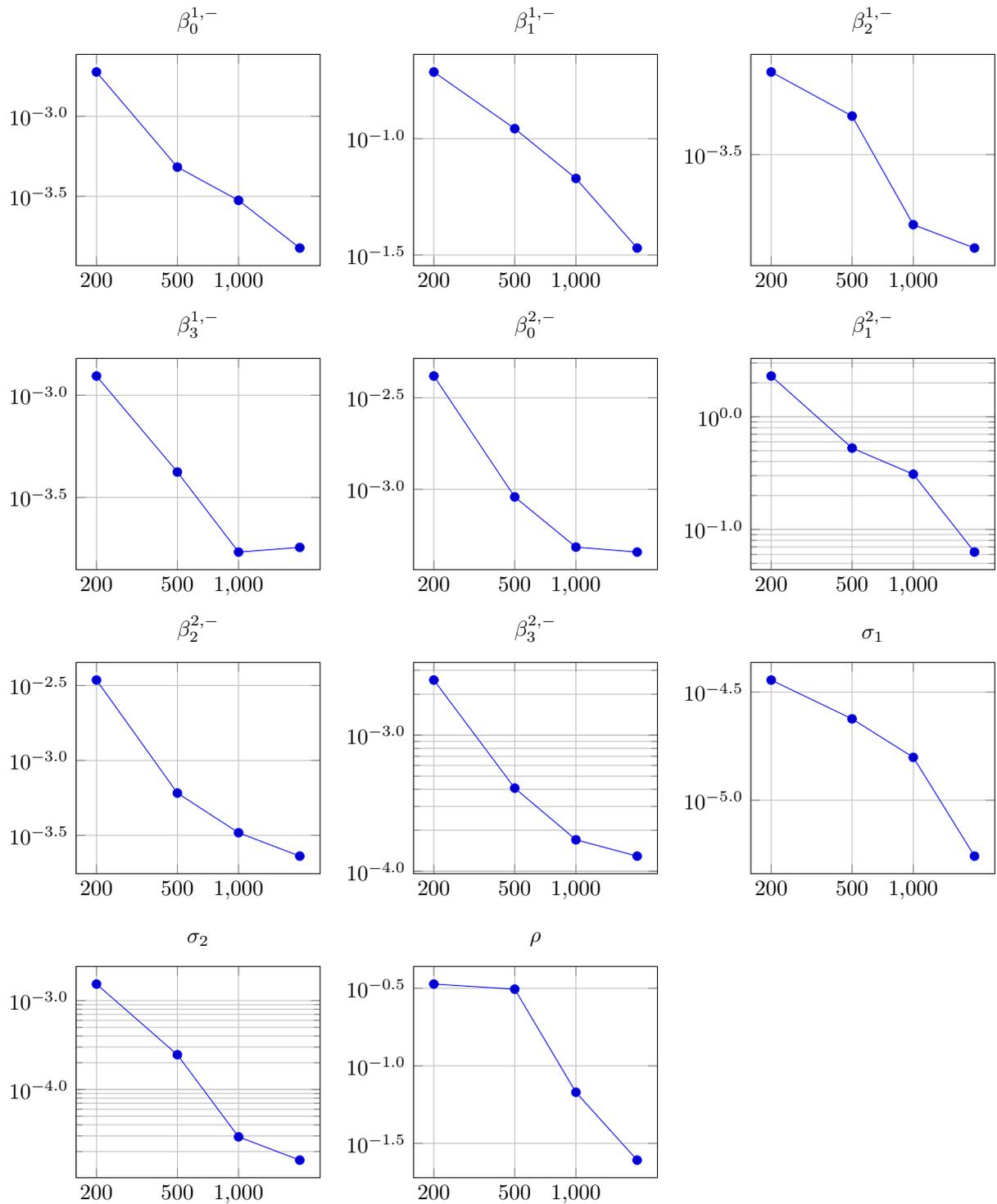
\begin{figure}
    \centering
    \begin{tikzpicture}
    \begin{groupplot}[group style={group size=3 by 4,horizontal sep=1.5cm,
                vertical sep=1.5cm},
                       height=5cm, width=5.5cm,
                       grid=both,xmode=log, ymode=log,
                       log ticks with fixed point,
                       yticklabel={
                           \pgfmathparse{round(\tick / ln(10) /0.1)/10}
                           $10^{\pgfmathresult}$},
                        xtick={200,500, 1000}]
    \nextgroupplot[title={$\beta_0^{1,-}$}]
    \addplot coordinates {
    (200.0, 0.0018924772218656444) 
    (500.0, 0.0004812641438969154) 
    (1000.0, 0.00029804514427985974) 
    (2000.0, 0.00015012763639198184) 
    };
    \nextgroupplot[title={$\beta_1^{1,-}$}]
    \addplot coordinates {
    (200.0, 0.19340304372769418) 
    (500.0, 0.11040137166745222) 
    (1000.0, 0.06743652142975301) 
    (2000.0, 0.033880365985427284) 
    };
    \nextgroupplot[title={$\beta_2^{1,-}$}]
    \addplot coordinates {
    (200.0, 0.0007294404876735305) 
    (500.0, 0.0004669375145732836) 
    (1000.0, 0.00015558120520271857) 
    (2000.0, 0.00012283958979535055) 
    };
    \nextgroupplot[title={$\beta_3^{1,-}$}]
    \addplot coordinates {
    (200.0, 0.0012411225378593023) 
    (500.0, 0.00042099142292727296) 
    (1000.0, 0.00017090710901080386) 
    (2000.0, 0.0001801894985757638) 
    };
    \nextgroupplot[title={$\beta_0^{2,-}$}]
    \addplot coordinates {
    (200.0, 0.004157911460731191) 
    (500.0, 0.0009089613468907792) 
    (1000.0, 0.00048346479429632236) 
    (2000.0, 0.00045424482595130336) 
    };
    \nextgroupplot[title={$\beta_1^{2,-}$}]
    \addplot coordinates {
    (200.0, 2.2995566483864436) 
    (500.0, 0.5275603748135516) 
    (1000.0, 0.30926994116530176) 
    (2000.0, 0.06299306863863217) 
    };
    \nextgroupplot[title={$\beta_2^{2,-}$}]
    \addplot coordinates {
    (200.0, 0.0034390650126892614) 
    (500.0, 0.0006058416158862767) 
    (1000.0, 0.000329159750329105) 
    (2000.0, 0.0002299314331834753) 
    };
    \nextgroupplot[title={$\beta_3^{2,-}$}]
    \addplot coordinates {
    (200.0, 0.0025422743635404948) 
    (500.0, 0.0004082300517970305) 
    (1000.0, 0.00017024835874145108) 
    (2000.0, 0.0001291403565998335) 
    };
    \nextgroupplot[title={$\sigma_1$}]
    \addplot coordinates {
    (200.0, 3.5945124326663035e-05) 
    (500.0, 2.377042106910968e-05) 
    (1000.0, 1.5787331035420633e-05) 
    (2000.0, 5.515339865544046e-06) 
    };
    \nextgroupplot[title={$\sigma_2$}]
    \addplot coordinates {
    (200.0, 0.001534367531038702) 
    (500.0, 0.0002458901941463744) 
    (1000.0, 2.9290571990153964e-05) 
    (2000.0, 1.602678138842647e-05) 
    };
    \nextgroupplot[title={$\rho$}]
    \addplot coordinates {
    (200.0, 0.33672652393399205) 
    (500.0, 0.31216512854912987) 
    (1000.0, 0.06755105795213334) 
    (2000.0, 0.024668122692650975) 
    };
    \end{groupplot} \end{tikzpicture}
    \caption{Empirical mean square error of the estimated coefficients of Model 2 with respect to the simulation horizon. In log-log scale.}
    \label{fig:2_stocks_estimation_mse}
\end{figure}

\section{Illustration on real data}
\label{sec:real_data}

Here, we apply our estimation method in a price model driven by the efficient price and the volume imbalance on the LOB of two stocks: BNP Paribas and Société Générale, traded at Euronext Paris in February 2022. We do one estimation per trading day, from 10 a.m. to 3 p.m.

Precisely, we record upwards (+) and downwards (-) jumps of the reference price (defined as in the queue-reactive model), and use as a signal the volume imbalance $X_t^i$ at best:
\begin{equation*}
    X_t^i = \frac{q^{i,b}_t - q^{i,a}_t}{q^{i,b}_t + q^{i,a}_t},\quad i \in \{BNP, SOGN\},
\end{equation*}
where $q^{i,b}$ is the volume pending at the first non-empty pile on the bid side of the LOB of Stock $i$, and $q^{i,a}$ is the equivalent on the ask side. It is known to be a reliable predictor of price moves \parencite{muni2017modelling,lehalle2021optimal,sfendourakis_lob_2023,pulido2023understanding}. We model the intensities of price jumps as
\begin{equation*}
    \tilde{\Lambda}^{i, e}(X_t^i, y)
    = \exp\big(\alpha_{intercept}^{i,e} + \alpha_{efficient}^{i,e} y + \alpha_{imbalance}^{i,e} X_t^{i} \big),\quad i \in \{BNP, SOGN\}, e \in \{-,+\}.
\end{equation*}
As is Section \ref{subsec:estimator_validation}, we impose the following bid-ask symmetry conditions
\begin{equation*}
    \begin{cases}
        \alpha_{intercept}^{i,-} = \alpha_{intercept}^{i,+}\\
        \alpha_{efficient}^{i,-} = -\alpha_{intercept}^{i,+}\\
        \alpha_{imbalance}^{i,-} = \alpha_{imbalance}^{i,+}
    \end{cases},\quad i \in \{BNP, SOGN\}.
\end{equation*}
The coefficient $\alpha_{efficient}^{i,-}$ is constrained to be negative to ensure stability. Again, the volatilities of the stocks are denoted by $\sigma_{BNP}$, $\sigma_{SOGN}$ and their correlation $\rho$. The time unit is the second, and the price unit is the hundredth of a tick. The estimation results are reported in Figure \ref{fig:bnp_sogn}.

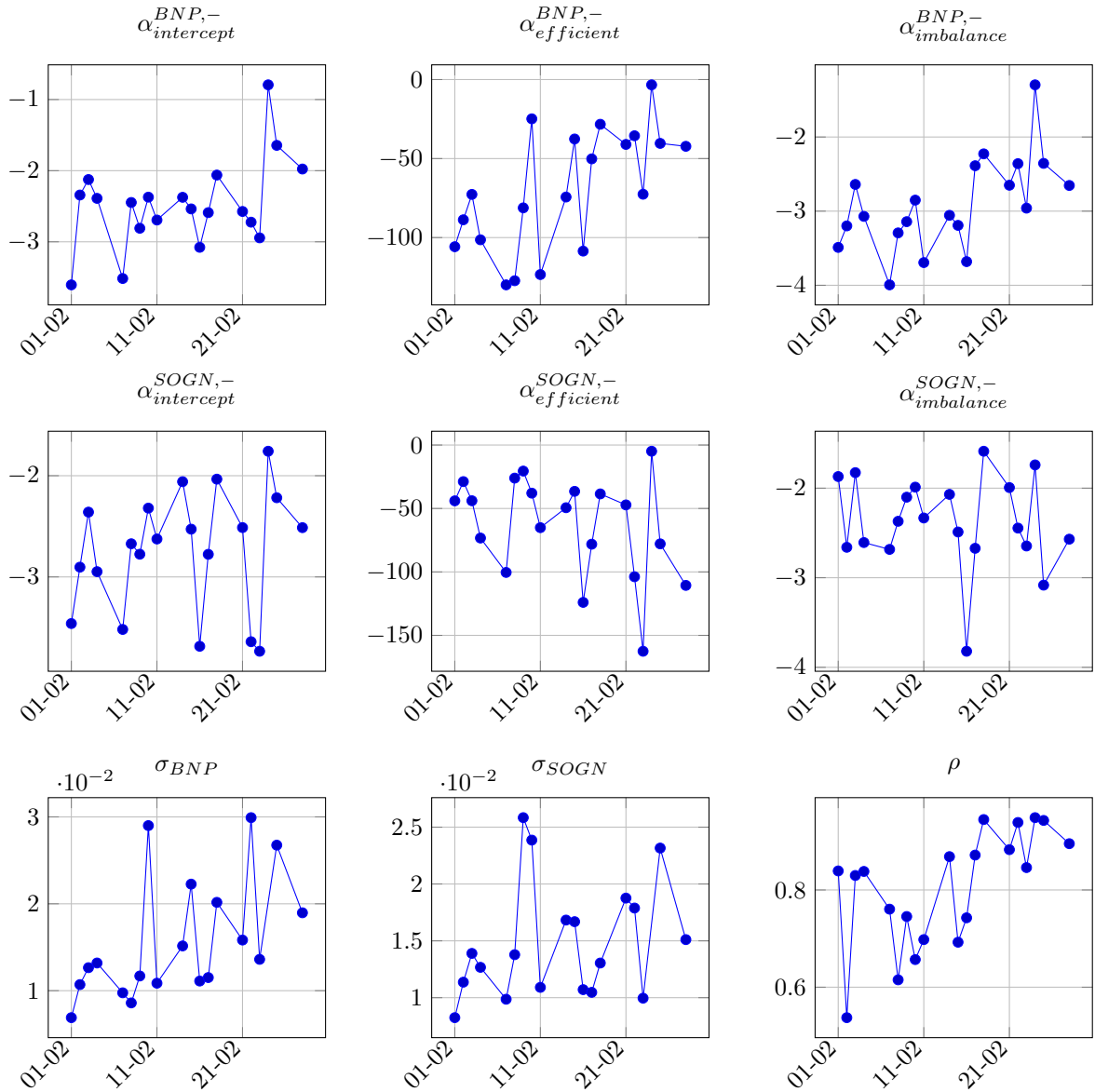
\begin{figure}
    \centering
    \begin{tikzpicture}
    \begin{groupplot}[
        group style={group size=3 by 3,horizontal sep=1.5cm,
        vertical sep=1.8cm},
        height=5cm, width=5.5cm,
        grid=major,
        xticklabel style={rotate=45, anchor=east},
        date coordinates in=x,
        xticklabel={\day-\month},
        x tick label style={/pgf/number format/1000 sep={}}
    ]
    \nextgroupplot[title={$\alpha_{intercept}^{BNP,-}$}] 
    \addplot coordinates {
    (2022-02-01, -3.605044156418763) 
    (2022-02-02, -2.342143467281692) 
    (2022-02-03, -2.124595459118932) 
    (2022-02-04, -2.3889826618329177) 
    (2022-02-07, -3.515570223022696) 
    (2022-02-08, -2.446950906066736) 
    (2022-02-09, -2.8094002907504105) 
    (2022-02-10, -2.372093353948999) 
    (2022-02-11, -2.693606519157913) 
    (2022-02-14, -2.374405294954577) 
    (2022-02-15, -2.537349481636921) 
    (2022-02-16, -3.0773012583599284) 
    (2022-02-17, -2.589482793128974) 
    (2022-02-18, -2.06187968168504) 
    (2022-02-21, -2.573418355407239) 
    (2022-02-22, -2.723944710925025) 
    (2022-02-23, -2.9442409299431125) 
    (2022-02-24, -0.7934948714823848) 
    (2022-02-25, -1.6441180410199885) 
    (2022-02-28, -1.9767788256973808) 
    };
    \nextgroupplot[title={$\alpha_{efficient}^{BNP,-}$}] 
    \addplot coordinates {
    (2022-02-01, -105.797594534538) 
    (2022-02-02, -88.74627456247396) 
    (2022-02-03, -72.67614675695741) 
    (2022-02-04, -101.40918743755448) 
    (2022-02-07, -129.92770966425448) 
    (2022-02-08, -127.3122665365391) 
    (2022-02-09, -81.1990305032183) 
    (2022-02-10, -24.842473110548497) 
    (2022-02-11, -123.400849441462) 
    (2022-02-14, -74.34048004575175) 
    (2022-02-15, -37.610165133771325) 
    (2022-02-16, -108.5611148229816) 
    (2022-02-17, -50.18629478963212) 
    (2022-02-18, -28.29777890339352) 
    (2022-02-21, -41.040493426679895) 
    (2022-02-22, -35.55273661854573) 
    (2022-02-23, -72.56398082142177) 
    (2022-02-24, -3.342080123842215) 
    (2022-02-25, -40.38719751067003) 
    (2022-02-28, -42.241907668674735) 
    };
    \nextgroupplot[title=$\alpha_{imbalance}^{BNP,-}$]
    \addplot coordinates {
    (2022-02-01, -3.4891548520462914) 
    (2022-02-02, -3.1990238521360497) 
    (2022-02-03, -2.6398756373314574) 
    (2022-02-04, -3.0706417007877818) 
    (2022-02-07, -3.9944120667151153) 
    (2022-02-08, -3.293398202079349) 
    (2022-02-09, -3.1419635945159703) 
    (2022-02-10, -2.851463346540972) 
    (2022-02-11, -3.6937468206778505) 
    (2022-02-14, -3.0561382389886083) 
    (2022-02-15, -3.191845441957652) 
    (2022-02-16, -3.6808831551664496) 
    (2022-02-17, -2.3874712718484066) 
    (2022-02-18, -2.227164170888276) 
    (2022-02-21, -2.6502881537639715) 
    (2022-02-22, -2.359370951591103) 
    (2022-02-23, -2.9593273062333414) 
    (2022-02-24, -1.296406764546621) 
    (2022-02-25, -2.3555670731779887) 
    (2022-02-28, -2.653565411646205) 
    };
    \nextgroupplot[title=$\alpha_{intercept}^{SOGN,-}$]
    \addplot coordinates {
    (2022-02-01, -3.4604680040848628) 
    (2022-02-02, -2.903503519522067) 
    (2022-02-03, -2.3596577811869635) 
    (2022-02-04, -2.9478406956025163) 
    (2022-02-07, -3.5187539439161366) 
    (2022-02-08, -2.672402387802042) 
    (2022-02-09, -2.7755730814065056) 
    (2022-02-10, -2.320038537238001) 
    (2022-02-11, -2.62514963423052) 
    (2022-02-14, -2.058985165369339) 
    (2022-02-15, -2.5280713912728348) 
    (2022-02-16, -3.685333750520164) 
    (2022-02-17, -2.7766656633156828) 
    (2022-02-18, -2.033986733622041) 
    (2022-02-21, -2.5124783595297315) 
    (2022-02-22, -3.640146471534006) 
    (2022-02-23, -3.734454858080214) 
    (2022-02-24, -1.7582446090482704) 
    (2022-02-25, -2.217590245318134) 
    (2022-02-28, -2.5130739747520527) 
    };
    \nextgroupplot[title=$\alpha_{efficient}^{SOGN,-}$]
    \addplot coordinates {
    (2022-02-01, -43.992459366422615) 
    (2022-02-02, -28.820765696801573) 
    (2022-02-03, -43.90601582579448) 
    (2022-02-04, -73.29968030750352) 
    (2022-02-07, -100.39810114922396) 
    (2022-02-08, -26.063095390139427) 
    (2022-02-09, -20.52825735471278) 
    (2022-02-10, -37.92823579764363) 
    (2022-02-11, -65.1128039277222) 
    (2022-02-14, -49.40241663386655) 
    (2022-02-15, -36.50604641772914) 
    (2022-02-16, -123.9956987667828) 
    (2022-02-17, -78.15001329647032) 
    (2022-02-18, -38.51380671189079) 
    (2022-02-21, -47.19346969828605) 
    (2022-02-22, -103.79524029625236) 
    (2022-02-23, -162.5316683158241) 
    (2022-02-24, -4.898334144289394) 
    (2022-02-25, -77.8868408726806) 
    (2022-02-28, -110.61736456125162) 
    };
    \nextgroupplot[title=$\alpha_{imbalance}^{SOGN,-}$]
    \addplot coordinates {
    (2022-02-01, -1.870567280251321) 
    (2022-02-02, -2.660073138848625) 
    (2022-02-03, -1.826664035960188) 
    (2022-02-04, -2.6071971472558166) 
    (2022-02-07, -2.6836849282470685) 
    (2022-02-08, -2.3699139664143245) 
    (2022-02-09, -2.100801846820775) 
    (2022-02-10, -1.988727739289712) 
    (2022-02-11, -2.3330231708172504) 
    (2022-02-14, -2.0694169191638943) 
    (2022-02-15, -2.4892293282500755) 
    (2022-02-16, -3.820944832279259) 
    (2022-02-17, -2.671442410009375) 
    (2022-02-18, -1.5880118333991502) 
    (2022-02-21, -1.993539989055911) 
    (2022-02-22, -2.4458780264390305) 
    (2022-02-23, -2.645608404563028) 
    (2022-02-24, -1.740590736505356) 
    (2022-02-25, -3.082691973970829) 
    (2022-02-28, -2.5695543603203714) 
    };
    \nextgroupplot[title=$\sigma_{BNP}$]
    \addplot coordinates {
    (2022-02-01, 0.0068940146608903) 
    (2022-02-02, 0.0107013090913912) 
    (2022-02-03, 0.0126465309524018) 
    (2022-02-04, 0.0131855212747643) 
    (2022-02-07, 0.0097548970978176) 
    (2022-02-08, 0.0085918246112148) 
    (2022-02-09, 0.0116915954804105) 
    (2022-02-10, 0.0290004852543987) 
    (2022-02-11, 0.0108615267395422) 
    (2022-02-14, 0.0151515917369264) 
    (2022-02-15, 0.0222631465058065) 
    (2022-02-16, 0.0111098642717372) 
    (2022-02-17, 0.0115108713432797) 
    (2022-02-18, 0.0201664336059374) 
    (2022-02-21, 0.0158284390278017) 
    (2022-02-22, 0.0299147794414503) 
    (2022-02-23, 0.0136053885335245) 
    (2022-02-25, 0.0267499876547561) 
    (2022-02-28, 0.0189649083641511) 
    };
    \nextgroupplot[title=$\sigma_{SOGN}$]
    \addplot coordinates {
    (2022-02-01, 0.0082441800343962) 
    (2022-02-02, 0.0113648066015285) 
    (2022-02-03, 0.0138966721748639) 
    (2022-02-04, 0.0126710249950731) 
    (2022-02-07, 0.0098648125453407) 
    (2022-02-08, 0.0137828458999092) 
    (2022-02-09, 0.025835823278771) 
    (2022-02-10, 0.0238598546914603) 
    (2022-02-11, 0.0109080610690931) 
    (2022-02-14, 0.0168267642061554) 
    (2022-02-15, 0.0166863233745525) 
    (2022-02-16, 0.0107095913040069) 
    (2022-02-17, 0.0104706254311458) 
    (2022-02-18, 0.0130491003083546) 
    (2022-02-21, 0.0187601137623567) 
    (2022-02-22, 0.0178856785614653) 
    (2022-02-23, 0.0099577395859236) 
    (2022-02-25, 0.0231592241738659) 
    (2022-02-28, 0.0151084724357174) 
    };
    \nextgroupplot[title=$\rho$]
    \addplot coordinates {
    (2022-02-01, 0.8391683633805607) 
    (2022-02-02, 0.5373091668362019) 
    (2022-02-03, 0.8298197803842217) 
    (2022-02-04, 0.8381350614964974) 
    (2022-02-07, 0.760530592130854) 
    (2022-02-08, 0.6151477406500342) 
    (2022-02-09, 0.7453934848348341) 
    (2022-02-10, 0.6568245652948629) 
    (2022-02-11, 0.697913521833949) 
    (2022-02-14, 0.8685829925397651) 
    (2022-02-15, 0.6925304201698222) 
    (2022-02-16, 0.7427449925548881) 
    (2022-02-17, 0.8717745591594935) 
    (2022-02-18, 0.9448576196012696) 
    (2022-02-21, 0.8829677917714823) 
    (2022-02-22, 0.9388425109816948) 
    (2022-02-23, 0.8460453469185575) 
    (2022-02-24, 0.948765074144774) 
    (2022-02-25, 0.9430560476239016) 
    (2022-02-28, 0.8952187147153097) 
    };
    \end{groupplot}\end{tikzpicture}
    \caption{Estimated coefficients for the dynamics of market orders for the stocks BNP Paribas and Société Générale. One estimation per trading day, over February 2022. We removed February 24 from the plots of $\sigma_{BNP}$ (value: 0.31085) and $\sigma_{SOGN}$ (value: 0.37024) for clarity.}
    \label{fig:bnp_sogn}
\end{figure}

We observe that the coefficients stay in the same orders of magnitude across February, except on the \nth{24} where the volatilities reported are a hundred times higher than the other days. It is probably due to the optimizer being stuck in a local maximum, behavior that was also observed on simulated data.

The imbalance coefficients $\alpha_{imbalance}^{BNP,-}$ and $\alpha_{imbalance}^{SOGN,-}$ are consistently negative, which corresponds to the know empirical findings: when there is more volume on the best ask pile than on the bid, the price is likely to go down in the near future. The correlation coefficient $\rho$ is always reported positive (and high), validating that the price move of the two stocks are, in a macroscopic scale, positively correlated.

\section{Application: market impact and liquidation}
\label{sec:liquidation}

We use, as in \parencite{huang2015simulating}, our model in the context of execution of a fixed volume of some assets. Typically, see \parencite{almgren2001optimal}, a trader wants to buy or sell a fixed quantity of an asset and looks to do so for the most advantageous price. Trading too fast is suboptimal because of transient market impact: the cost of trading is an increasing function of trading speed.

In general, the market impact of a trade is defined as the difference between the actual price of the asset and the price the asset would have if the said trade never occurred (the definition of the \enquote{price} can vary: mid-price, last traded price or reference price to cite a few). It consists of two components: a transient part that vanishes over time, and a permanent part.

The fact that a huge trade costs more is directly observable in the LOB framework. Say that an agent wants to buy 100 units of a stock. In the ask side of the LOB, there are 50 units pending at 100\euro{}, 20 units at 100.5\euro{} and 40 units at 101\euro{}. If the agent places a buy market order for 100 units, she pays 100.4\euro{} per unit. Immediately after the trade, the best ask price jumps to 101\euro{}. Instead, she could buy 50 units and wait for the pile at 100\euro{} to be replenished to buy again at 100\euro{}, hence reducing her costs, with the drawback that there is no guarantee that someone will post ask limit orders at this price again, introducing some variance in her strategy.

Having a reliable LOB model is thus of importance for the agent, who can then backtest her execution strategies. The model of \textcite{huang2015simulating} exhibits mainly permanent market impact (see Figure 12 in said paper). On the contrary, the market impact in the models presented here can only be transient: we saw that the reference price will always come back close to the efficient price. In Figure \ref{fig:impact}, we plot the market impact (difference between the reference price right before the market order and the reference price at time $t$) over 4 minutes of a buy market order of size 100. The model used is the efficient price-driven queue-reactive model of a single stock with $K=3$, tick size 0.01, volatility 0.2, and the limit, cancel and market orders of unit size. A limit order of at the $j$th ask pile has intensity $A^j e^{B(S_t-P_{t-}) + C q_t^{a,j}}$ where $q_t^{a,j}$ is the volume pending on the said pile, with $(A^1, A^2, A^3) = (1.5, 1.2, 1.0)$, $B = -1.0$ and $C=-0.1$. A liquidity consuming order (market or cancel) has intensity $A'^j e^{B'(S_t-P_{t-}) + C' q_t^{a,j}}\mathds{1}_{\{q_t^{a,j} > 0\}}$ where $(A'^1, A'^2, A'^3) = (1.2, 1.0, 0.8)$, $B' = 1.5$ and $C'=0.15$. The parameters are symmetric on the bid side. We observe a sharp increase of the reference price after the large market order: a big portion of the ask side of the LOB is depleted. Then, over time, the reference price goes back to normal, close to the efficient price.

\begin{figure}
    \centering
    \includegraphics[height=6cm]{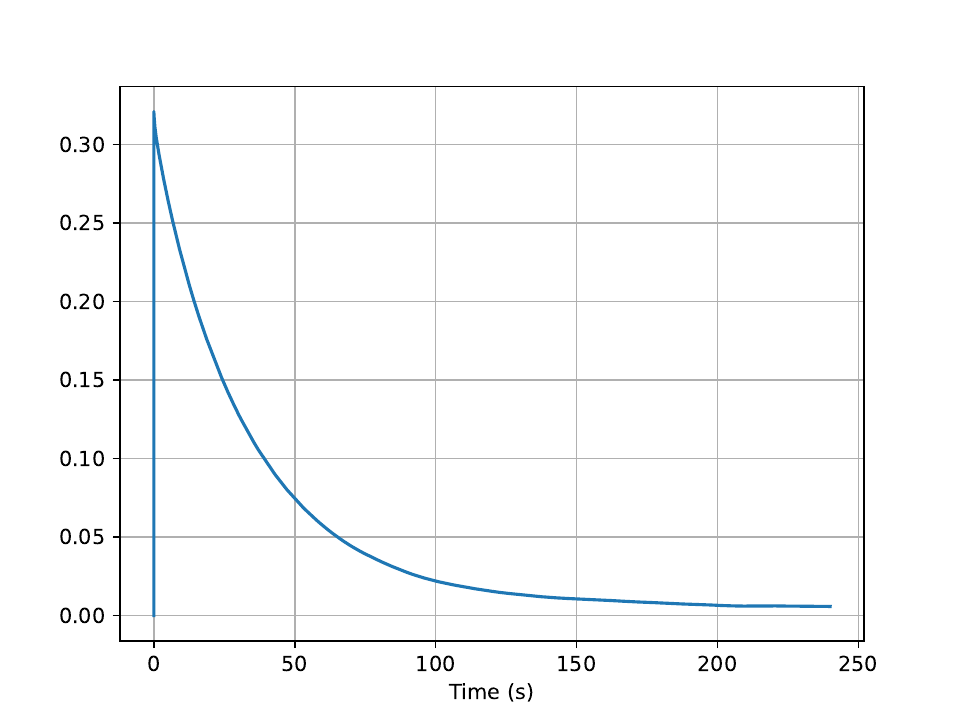}
    \caption{Market impact of a buy market order of size 100 over 4 minutes. Monte-Carlo simulation with 500000 samples.}
    \label{fig:impact}
\end{figure}

When designing execution strategies over multiple assets, taking into account their correlation is important, as the distribution of their profit and loss is dependent on it. To illustrate it, we plot in Figure \ref{fig:execution_pnl} the execution cost per share of buying 500 units of Stock 1 and 300 units of Stock 2 over 10 minutes, placing a market order every 30 seconds (of size 25 for Stock 1 and 15 for Stock 2). The volatilities are 0.02 and 0.01 and three different correlation parameters have been chosen: -0.8, 0.0 and 0.8. The other parameters (intensities and tick size) are the same as in the previous paragraph. We observe that with this strategy, the agent overpays on average 17 cents per share, however the variance decreases with the correlation: with a negative correlation, price rises of Stock 1 are often compensated by price drops of Stock 2.

\begin{figure}
    \centering
    \includegraphics[height=6cm]{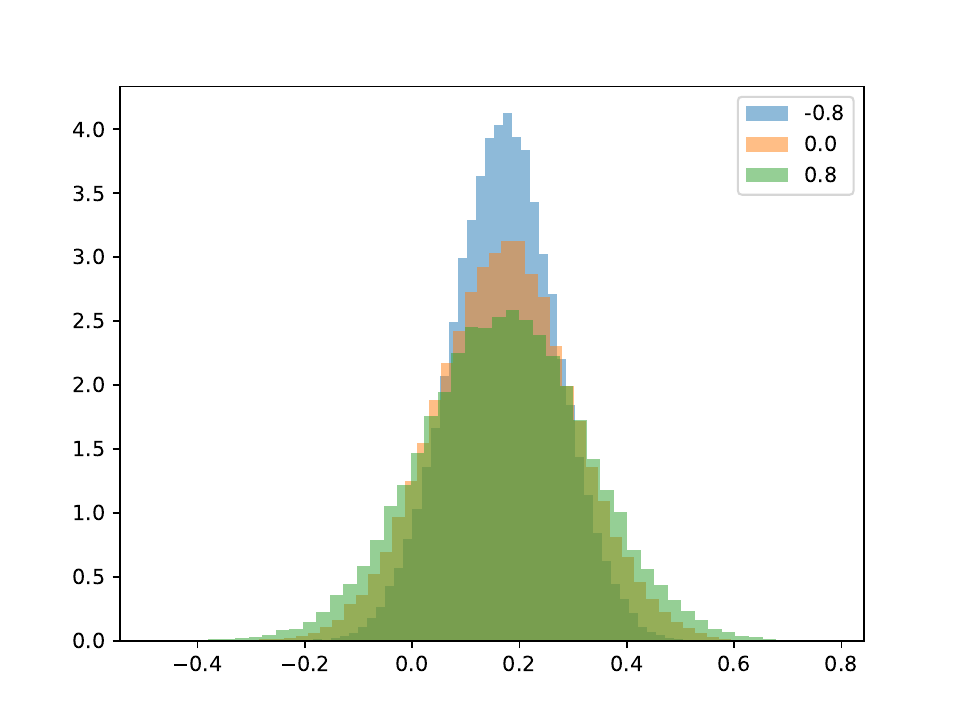}
    \caption{Histogram of the difference of the price paid per share by the buyer and their initial price using her execution strategy. Monte-Carlo simulation with 50000 samples. Plotted for three correlation parameters: -0.8 (in blue), 0.0 (in orange), 0.8 (in green).}
    \label{fig:execution_pnl}
\end{figure}

The agent can use our model to design her execution strategies over multiple assets. However, our model does not capture permanent market impact, nor cross impact (a market order on a stock does not influence the LOB of the others in any way), while these phenomena are observed in real markets. A possible improvement of the model in this regard is allowing market orders to act on the efficient price, for example through a constant jump.

\section{Proofs of stability}
\label{section:proofs_stability}
In this section, we present the proofs for the results stated in Section \ref{sec:models}. In Subsections \ref{subsec:proofs_signal_modulated} and \ref{subsec:proofs_queue_reactive}, we build Lyapunov functions and show upper bounds for them., for the signal-driven price model and the queue-reactive model respectively. In Subsection \ref{subsec:proofs_general_results}, we prove some consequences of the Lyapunov inequalities applicable to both models.

We use the notations and processes defined in Section \ref{sec:models}.

\subsection{General results}
\label{subsec:proofs_general_results}

We start by deducing the non-explosiveness of $(S,P,X,N)$ given $(S-P, X)$ is non-explosive.

\begin{proposition}
    \label{prop:non_explosiveness_lyapunov}
    Suppose that the Markov process $(S-P, X)$ is non-explosive and that $\sum_{k \in \mathbb{K}} \Lambda^{k}$ is locally bounded. Then, the Markov process $(S,P,X,N)$ is non-explosive.
\end{proposition}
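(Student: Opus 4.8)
The plan is to separate the two possible sources of explosion for $(S,P,X,N)$: the motion of $(S,P,X)$ itself, and the accumulation of recorded events counted by $N$. Non-explosiveness of $(S-P,X)$ will take care of the first, while the local boundedness of $\sum_{k\in\mathbb{M}}\Lambda^k$ is exactly what is needed for the second. The reason both hypotheses are genuinely used is that some jumps of $N$ may leave $(S-P,X)$ unchanged (those whose kernel $\mu^k$ does not move $(x,p)$), so non-explosiveness of $(S-P,X)$ alone cannot bound the number of $N$-jumps.

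First I would argue that $(S,P,X)$ is non-explosive. Since $S=S_0+\Sigma W$ is a Brownian motion it is continuous and globally defined, so it contributes no jumps and never escapes to infinity in finite time. Writing $P=S-(S-P)$, the process $P$ inherits non-explosiveness from $S$ and from the assumed non-explosiveness of $S-P$; moreover, because $S$ is continuous, the jump times of $P$ coincide with those of $S-P$, which do not accumulate in finite time. The same non-explosiveness assumption controls the jumps of $X$. Hence $(S,P,X)$ does not explode.

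Next I would show that $N$ does not explode by localizing along the non-exploding trajectory of $(S-P,X)$. Set $\bar{N}\defeq\sum_{k\in\mathbb{M}}N^k$, a counting process with intensity $t\mapsto \sum_{k\in\mathbb{M}}\Lambda^k(X_{t-},S_t-P_{t-})$. I would fix an exhaustion of the state space by relatively compact sets $C_m\uparrow \mathbb{R}^d\times\mathcal{X}$ and let $\tau_m\defeq\inf\{t\ge 0:(S_t-P_t,X_t)\notin C_m\}$; non-explosiveness of $(S-P,X)$ is precisely the statement $\tau_m\to\infty$ a.s. On $[0,\tau_m)$ the pair $(X_{t-},S_t-P_{t-})$ stays in $\overline{C_m}$, where by local boundedness $\sum_{k\in\mathbb{M}}\Lambda^k\le B_m<\infty$ for some constant $B_m$. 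Using the compensator of $\bar N$ together with the standard localization of the associated local martingale (Fatou/monotone convergence), I would obtain, for every $t\ge 0$,
\[
    \E[][\bar{N}_{\tau_m\wedge t}]
    = \E[][\int_0^{\tau_m\wedge t}\sum_{k\in\mathbb{M}} \Lambda^k(X_{s-},S_s-P_{s-})\,\diff s]
    \le B_m\, t <\infty,
\]
so that $\bar N_{\tau_m\wedge t}<\infty$ a.s. Letting $m\to\infty$: for any fixed $T$, on the event $\{\tau_m>T\}$ one has $\bar N_T=\bar N_{\tau_m\wedge T}<\infty$, and $\Proba[][\tau_m>T]\to 1$, whence $\bar N_T<\infty$ a.s. Thus $N$ is non-explosive, and combined with the previous step this gives non-explosiveness of $(S,P,X,N)$.

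The main obstacle is the control of $N$ in the third step. Non-explosiveness of $(S-P,X)$ does not by itself bound how many events $N$ records, since events may occur without moving $(S-P,X)$; it is the local boundedness of $\sum_{k\in\mathbb{M}}\Lambda^k$, evaluated along the trajectory that stays in $\overline{C_m}$ before $\tau_m$, that prevents these events from accumulating. Making this rigorous requires only the routine compensator bound before the localizing time $\tau_m$, but that comparison is the conceptual heart of the proof.
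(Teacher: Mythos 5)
Your proof is correct, and it follows the same overall decomposition as the paper's: $S$ is harmless (an affine image of Brownian motion), $(S-P,X)$ is non-explosive by hypothesis, so the whole matter reduces to showing that $\tilde N \defeq \sum_{k\in\mathbb{M}} N^k$ cannot accumulate infinitely many jumps, which both arguments obtain by confining the trajectory of $(S-P,X)$ to a region where the total intensity is bounded. Where you genuinely diverge is in the mechanism of that last step. You localize with the exit times $\tau_m$ of $(S-P,X)$ from an exhausting sequence of relatively compact sets and use the compensator identity to get the first-moment bound
\begin{equation*}
    \mathbb{E}\big[\tilde N_{\tau_m \wedge t}\big]
    \leqslant B_m t < \infty,
\end{equation*}
then send $m \to \infty$. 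The paper instead fixes the horizon $t$, works on the pathwise event $A = \{\sup_{u \in [0,t]}(|S_u - P_u| + |X_u|) \leqslant M\}$, and proves by induction on $m$ the Poisson-type tail bound
\begin{equation*}
    \mathbb{P}\big[\tilde N_u \geqslant m,\ A\big]
    \leqslant 1 - e^{-Cu}\sum_{n=0}^{m-1}\frac{(Cu)^n}{n!},
\end{equation*}
with $C$ a bound on the total intensity on the corresponding bounded region; in other words, it reconstructs by hand a stochastic domination of $\tilde N$ by a rate-$C$ Poisson process on $A$, and lets $m \to \infty$ to contradict $\mathbb{P}[\tilde N_t = \infty, A] > 0$. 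Your route is the more elementary one and, in one respect, the cleaner one: stopping at $\tau_m$ keeps the localization adapted, so the compensator/optional-stopping step is entirely routine, whereas the paper's event $A$ depends on the whole path up to $t$ and the integral equation it writes for $f_m(u) = \mathbb{P}[\tilde N_u \geqslant m, A]$ needs more care to justify. What the paper's induction buys in exchange is a quantitative Poisson tail estimate rather than a bare a.s.-finiteness statement, though that extra precision is never used. In a final write-up you should just expand the parenthetical remark you make: apply the compensator identity to $\tilde N$ stopped additionally at its own $n$-th jump time, so that all quantities are integrable, and obtain the displayed bound by monotone convergence in $n$.
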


\begin{proof}
    Since $S$ is non-explosive, being an affine transformation of a Brownian motion, it is sufficient to show that $(S-P, X, N)$ is non-explosive. Since $(S-P, X)$ is non-explosive by assumption, it remains to show that it is also the case for $N$. Suppose this does not hold, in particular
\begin{equation*}
    \mathbb{P}\bigg[\sum_{k \in \mathbb{M}}N_t^k = \infty\bigg] > 0
\end{equation*}
for some $t > 0$. Since $(S-P, X)$ is non-explosive there exists $M > 0$ such that $\mathbb{P}[\tilde{N}^k_t = \infty, A] > 0$ where $\tilde{N} \defeq \sum_{k \in \mathbb{M}}N^k$ and $A \defeq \{\sup_{u\in[0,t]} (|S_u-P_u| + |X_u|) \leqslant M\}$. For $u \in [0, \infty)$ and $m \in \mathbb{N}$, define the function $f_m(u) \defeq \mathbb{P}[\tilde{N}^k_u \geqslant m, A]$. The function $f_m$ is right-continuous. We show that $\lim_{m\to \infty}f_m(u) = 0$, which contradicts $\mathbb{P}[\tilde{N}^k_t = \infty, A] > 0$.

Specifically, we show by induction that \begin{equation*}
    f_m(u) \leqslant 1 - e^{-Cu}\sum_{n=0}^{m-1}\frac{(Cu)^n}{n!}
\end{equation*}
where $C\defeq \sup_{(x,y)\in\mathcal{X} \times [-M, M]^{\mathbb{M}}} \sum_{k \in \mathbb{M}}\Lambda^k(x, y)$. The property holds for $m = 0$. Suppose now it holds for some $m \in \mathbb{N}$. For $u \geqslant 0$,
\begin{equation*}
    f_{m+1}(u) = \sum_{k \in \mathbb{M}}\int_0^u \E[][\mathds{1}_A \big( \mathds{1}_{\{\tilde{N}_r \geqslant m+1\}} - \mathds{1}_{\{\tilde{N}_r \geqslant m\}}\big) \Lambda^k(X_r, S_r - P_r)] \diff r.
\end{equation*}
Thus, we have, for $u \geqslant 0$,
\begin{equation*}
    e^{Cu}f_{m+1}(u) = \sum_{k \in \mathbb{M}}\int_0^u C e^{Cr} f_{m+1}(r) \diff r + \sum_{k \in \mathbb{M}}\int_0^u e^{Cr} \E[][\mathds{1}_A \big( \mathds{1}_{\{\tilde{N}_r \geqslant m\}} - \mathds{1}_{\{\tilde{N}_r \geqslant m+1\}}\big) \Lambda^k(X_r, S_r - P_r)] \diff r.
\end{equation*}
By the definition of $A$ and $C$, we find an upper bound for the second terms which \textit{in fine} gives
\begin{equation*}
    e^{Cu}f_{m+1}(u) \leqslant C \int_0^u e^{Cr} f_{m-1}(r) \diff r
    \leqslant C \int_0^u e^{Cr} \diff r - \sum_{n=0}^{m-1}\frac{C^{n+1}}{n!}\int_0^u r^n \diff r.
\end{equation*}
The result follows after computing the integrals.
\end{proof}

We give a condition of Lyapunov function for the convergence in probability of $\tilde{S}^{(n)}-\tilde{P}^{(n)}$ to 0.

\begin{proposition}
    \label{prop:lyapunov_convergence_proba_general}
    Suppose there exists a constant $y_0 \in \mathbb{R}$, a non-decreasing $C^{2}$ function $\psi:[y_0, \infty) \to \mathbb{R}$, a constant $C \in (0, \infty)$, and a locally bounded function $V:\mathbb{R}^d \times \mathcal{X} \to [0,\infty)$, $C^2$ in its first variable $y$, such that for $x \in \mathcal{X}$ and $y = (y^i)_{i \in \stocks}$,
    \begin{equation*}
        V(y,x) \geqslant \sum_{i \in \stocks} (y^i)^2 \psi(y^i) \mathds{1}_{\{y^i \geqslant y_0\}}
        \text{ and } \mathcal{L}V(y,x) \leqslant C.
    \end{equation*}
    Suppose moreover that $S_0 - P_0$ and $X_0$ are bounded. Then, for all $T, \epsilon > 0$.
    \begin{equation*}
        \lim_{n\to \infty} \Proba[][\big| \tilde{S}^{(n)}-\tilde{P}^{(n)}\big|_{\infty, [0,T]} \geqslant \epsilon] = 0.
    \end{equation*}
\end{proposition}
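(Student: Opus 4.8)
The plan is to reduce the statement to a quantitative first-passage estimate for the process $Y \defeq S - P$, and to extract that estimate from the supermartingale produced by the Lyapunov inequality $\mathcal{L}V \le C$. First I would rewrite the quantity of interest: since
\begin{equation*}
    \tilde{S}^{(n)}_t - \tilde{P}^{(n)}_t = \tfrac{1}{\sqrt{n}}\big(Y_{nt} - Y_0\big)
\end{equation*}
and $Y_0 = S_0 - P_0$ is bounded, the term $\tfrac{1}{\sqrt n}|Y_0|$ is deterministic and vanishes, so it suffices to control $\tfrac{1}{\sqrt n}\sup_{s \le nT}|Y_s|$. As $|\cdot|$ is the max-norm, a union bound over the finitely many coordinates $i \in \stocks$ (and over the two signs) reduces the problem to showing, for each $i$, that $\Proba[][\sup_{s\le nT} Y^i_s \ge \epsilon\sqrt n] \to 0$, the lower tail being treated identically by symmetry of the concrete Lyapunov functions.

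Next I would convert $\mathcal{L}V \le C$ into a supermartingale. Using that $(Y,X)$ is non-explosive (as established in the applications via \parencite[Theorem 2.1]{meyntweedieIII}) and that $V$ is $C^2$ in its first variable, Dynkin's formula shows that $V(Y_t,X_t) - \int_0^t \mathcal{L}V(Y_s,X_s)\,\diff s$ is a local martingale; since $\mathcal{L}V \le C$ and $V \ge 0$, the process $t \mapsto V(Y_t,X_t) - Ct$ is a genuine supermartingale after localisation by exit times of balls and a Fatou passage to the limit. I would then introduce the first-passage time $\tau_n \defeq \inf\{s\ge 0 : Y^i_s \ge \epsilon\sqrt n\}$ and apply optional stopping at the bounded time $\tau_n \wedge nT$, obtaining
\begin{equation*}
    \E[][V(Y_{\tau_n\wedge nT}, X_{\tau_n\wedge nT})] \le V_0 + C\,nT, \qquad V_0 \defeq \sup V(Y_0,X_0) < \infty,
\end{equation*}
the finiteness of $V_0$ coming from $S_0-P_0$, $X_0$ bounded and $V$ locally bounded.

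Finally I would invoke the growth lower bound. On $\{\tau_n \le nT\}$, for $n$ large enough that $\epsilon\sqrt n \ge y_0$, the càdlàg path satisfies $Y^i_{\tau_n} \ge \epsilon\sqrt n$, whence $V(Y_{\tau_n},X_{\tau_n}) \ge (Y^i_{\tau_n})^2\psi(Y^i_{\tau_n}) \ge \epsilon^2 n\,\psi(\epsilon\sqrt n)$, using that $\psi$ is non-decreasing and non-negative for large arguments. Combining with the previous bound yields
\begin{equation*}
    \Proba[][\tau_n \le nT] \le \frac{V_0 + C\,nT}{\epsilon^2 n\,\psi(\epsilon\sqrt n)} = \frac{V_0/n + CT}{\epsilon^2\,\psi(\epsilon\sqrt n)} \xrightarrow[n\to\infty]{} 0,
\end{equation*}
the numerator converging to $CT$ while the denominator blows up. Summing over coordinates and signs then gives the claim.

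The main obstacle, and the step where the structure of the problem is decisive, is this last estimate: a fixed-time second-moment bound would be useless, since a genuine diffusion satisfies $\sup_{s\le nT}|Y_s|/\sqrt n = O(1)$. The argument collapses the rescaled gap to $0$ precisely because $\psi(\epsilon\sqrt n) \to \infty$, i.e.\ because $V$ penalises large excursions of $Y$ \emph{super}quadratically; this divergence of $\psi$ is the crucial property (and holds in both models of interest). The remaining technical care lies in justifying the true (not merely local) supermartingale property and the optional-stopping step through non-explosiveness, and in reducing the two-sided max-norm control to the one-sided hypothesis, which is legitimate because the Lyapunov functions of Sections~\ref{subsec:signal_modulated} and~\ref{subsec:queue_reactive} are even in each $y^i$ and hence control the lower tail as well.
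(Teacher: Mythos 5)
Your proposal is correct and takes essentially the same approach as the paper: the paper controls $\Proba[][\sup_{u \in [0,nT]} V(S_u - P_u, X_u) \geqslant A^2\psi(A)]$ with $A = \epsilon\sqrt{n}$ via the supermartingale maximal inequality of \parencite[Chapter III, Corollary 1.1]{kushner1967stochastic}, which is exactly the estimate you re-derive by hand through Dynkin's formula, localisation, and optional stopping at the first-passage time. The only difference is that you prove this maximal inequality inline rather than citing it, and you are in fact more explicit than the paper about the two points it glosses over (discarding the bounded $S_0 - P_0$ centering, and reducing the two-sided max-norm event to the one-sided hypothesis); both arguments then conclude from the divergence of $\psi(\epsilon\sqrt{n})$.
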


\begin{remark}
    The assumptions of Proposition \ref{prop:lyapunov_convergence_proba_general} are satisfied for the $V$ defined in Section \ref{subsec:signal_modulated}, taking $\psi(y) = \frac{1}{d} y^4$ and $y_0$ being such that for all $y \geqslant y_0$, $\frac{1}{d}y^4 < e^y$.
\end{remark}

\begin{proof}
    Let $T > 0$. Let $A > y_0$. Then, we have
    \begin{equation*}
        \Proba[][|S-P|_{\infty, [0,T]}  \geqslant A] \leqslant
        \Proba[][\sup_{u \in [0,T]} V(S_u - P_u, X_u)  \geqslant A^2 \psi(A)]. 
    \end{equation*}
    By \parencite[Chapter III, Corollary 1.1]{kushner1967stochastic},
    \begin{equation*}
        \Proba[][\sup_{u \in [0,T]} V(S_u - P_u, X_u)  \geqslant A^2 \psi(A)]
        \leqslant \frac{\E[][V(S_0 - P_0, X_0)]}{A^2 \psi(A)} + \frac{CT}{A^2 \psi(A)}.
    \end{equation*}
    Note that $\E[][V(S_0 - P_0, X_0)]$ is finite thanks to our assumptions. Let $\epsilon > 0$ and $n \in \mathbb{N}^*$. We deduce from the above that
    \begin{equation*}
        \Proba[][|\tilde{S}^{(n)}-\tilde{P}^{(n)}|_{\infty, [0,T]}  \geqslant \epsilon] = \Proba[][|S - P|_{\infty, [0,nT]}  \geqslant \sqrt{n}\epsilon]
        \leqslant
        \frac{\E[][V(S_0 - P_0, X_0)]}{n \epsilon^2 \psi(\sqrt{n}\epsilon)} + \frac{CT}{\epsilon^2 \psi(\sqrt{n}\epsilon)}.
    \end{equation*}
    We conclude observing that the right-hand side tends to zero as $n$ tends to infinity.
\end{proof}

We now deduce the following convergence in law from the convergence in probability.
\begin{corollary}
    \label{corol:skorokhod_convergence}
    Under the assumptions of Proposition \ref{prop:lyapunov_convergence_proba_general}, $\tilde{P}^{(n)}$ converges in distribution in $D([0, \infty), \mathbb{R}^d)$ to $\Sigma B$ where $B$ is a $d$-dimensional Brownian motion.
\end{corollary}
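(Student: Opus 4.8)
The plan is to exploit two facts: that the rescaled efficient price $\tilde{S}^{(n)}$ is, for every $n$, exactly a copy of the target law $\Sigma B$, and that Proposition \ref{prop:lyapunov_convergence_proba_general} makes $\tilde{P}^{(n)}$ and $\tilde{S}^{(n)}$ locally uniformly close in probability. A Slutsky-type (``convergence together'') argument in the Skorokhod space then transfers the weak convergence from $\tilde{S}^{(n)}$ to $\tilde{P}^{(n)}$.

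First I would identify the law of $\tilde{S}^{(n)}$. Since $S_t = S_0 + \Sigma W_t$, we have $\tilde{S}^{(n)}_t = \frac{1}{\sqrt{n}}(S_{nt}-S_0) = \Sigma\,\big(\tfrac{1}{\sqrt{n}} W_{nt}\big)$. By the scaling invariance of Brownian motion, the process $W^{(n)}_t \defeq \tfrac{1}{\sqrt{n}} W_{nt}$ is again a standard $d$-dimensional Brownian motion, so $\tilde{S}^{(n)} \overset{d}{=} \Sigma W$ as random elements of $D([0,\infty),\mathbb{R}^d)$, \emph{for every} $n$. In particular $\tilde{S}^{(n)}$ converges in distribution to $\Sigma B$ (each term already has that law), and the limit has continuous paths.

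Next I would upgrade the local-uniform closeness to closeness in the Skorokhod metric. Let $\rho$ metrize the $J_1$ topology on $D([0,\infty),\mathbb{R}^d)$, e.g. $\rho(x,y) = \sum_{m\geqslant 1} 2^{-m}\big(\rho_m(x,y)\wedge 1\big)$ with $\rho_m$ a Skorokhod $J_1$ metric on $[0,m]$. Choosing the identity time change in the definition of $\rho_m$ gives the bound $\rho_m(x,y) \leqslant |x-y|_{\infty,[0,m]}$; this is exactly where the continuity of the limit matters, as no genuine time change is needed. Proposition \ref{prop:lyapunov_convergence_proba_general} states that $|\tilde{S}^{(n)}-\tilde{P}^{(n)}|_{\infty,[0,T]} \to 0$ in probability for every $T$, so $\rho_m(\tilde{S}^{(n)},\tilde{P}^{(n)}) \to 0$ in probability for each $m$, and by dominated convergence across the series $\rho(\tilde{S}^{(n)},\tilde{P}^{(n)}) \to 0$ in probability.

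Finally I would conclude. The space $(D([0,\infty),\mathbb{R}^d),\rho)$ is separable, so for any bounded $\rho$-Lipschitz functional $F$ one has $\big|\E[F(\tilde{P}^{(n)})]-\E[F(\Sigma B)]\big| = \big|\E[F(\tilde{P}^{(n)})]-\E[F(\tilde{S}^{(n)})]\big| \leqslant \E\big[|F(\tilde{P}^{(n)})-F(\tilde{S}^{(n)})|\big] \to 0$, using that $F$ is bounded Lipschitz and $\rho(\tilde{S}^{(n)},\tilde{P}^{(n)})\to 0$ in probability. Since convergence of $\E[F(\cdot)]$ for all bounded Lipschitz $F$ characterizes weak convergence on a separable metric space (equivalently, by Billingsley's convergence-together theorem), this gives $\tilde{P}^{(n)} \Rightarrow \Sigma B$. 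The only delicate point is the topological bookkeeping of the third step — verifying that local-uniform convergence in probability genuinely upgrades to Skorokhod-metric convergence on the whole half-line — and this rests entirely on the inequality $\rho_m \leqslant |\cdot|_{\infty,[0,m]}$ together with the continuity of the limit; the remainder is a routine application of standard weak-convergence machinery.
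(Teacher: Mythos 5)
Your proof is correct, and its core is the same as the paper's: exploit the fact that $\tilde{S}^{(n)}$ has \emph{exactly} the law of $\Sigma B$ for every $n$ (Brownian scaling), then transfer weak convergence to $\tilde{P}^{(n)}$ by a convergence-together (Slutsky-type) argument fed by Proposition \ref{prop:lyapunov_convergence_proba_general}. The difference is purely in execution. The paper disposes of the Skorokhod-space bookkeeping with two citations to Billingsley: the convergence-together theorem (Theorem 3.1 there) applied on $D([0,T],\mathbb{R}^d)$ for each fixed $T$, and then Lemma 16.3 there to glue the finite-horizon convergences into convergence on $D([0,\infty),\mathbb{R}^d)$. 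You instead work directly on the half-line with an explicit metric and reprove the transfer by hand via bounded Lipschitz functionals. Your version is more self-contained and makes explicit the inequality that carries everything, namely $\rho_m(x,y)\leqslant |x-y|_{\infty,[0,m]}$ obtained from the identity time change; the paper's version is shorter but outsources exactly this point to the two cited results.

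One imprecision should be flagged, though it does not sink the argument. The formula $\rho(x,y)=\sum_{m\geqslant 1}2^{-m}\big(\rho_m(x,y)\wedge 1\big)$, with $\rho_m$ the $J_1$ distance between the restrictions to $[0,m]$, does \emph{not} metrize the $J_1$ topology of $D([0,\infty),\mathbb{R}^d)$: restriction to $[0,m]$ is not $J_1$-continuous at paths with jumps at or near the cut point $m$, so your $\rho$ induces a strictly finer topology (this boundary defect is precisely what the paper's second citation, and the tapering in Billingsley's actual metric on $D([0,\infty))$, are designed to handle). Two repairs, neither changing your structure: (i) keep your $\rho$, note that its Borel $\sigma$-algebra coincides with the $J_1$ Borel $\sigma$-algebra (both equal the cylinder $\sigma$-algebra, and $(D,\rho)$ is separable so $\rho(\tilde S^{(n)},\tilde P^{(n)})$ is measurable), and observe that weak convergence for the finer $\rho$-topology implies $J_1$ weak convergence, since every bounded $J_1$-continuous functional is $\rho$-continuous; or (ii) use Billingsley's tapered metric, for which your key bound $\rho_m\leqslant|\cdot|_{\infty,[0,m]}$ still holds because the tapering is a pointwise contraction. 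Separately, your parenthetical that continuity of the limit is ``exactly where it matters'' for that bound is off: the bound holds for arbitrary c\`adl\`ag paths, and continuity of $\Sigma B$ plays no role anywhere in your chain of estimates — what you actually use is that $\tilde{S}^{(n)}$ has the limit law exactly, not merely asymptotically.
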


\begin{proof}
    Since $\tilde{S}^{(n)}$ has the law of $\Sigma B$ for all $n$, by \parencite[Theorem 3.1]{BillingsleyPatrick1925-20111999Copm}, $(\tilde{P}^{(n)})_{n \in \mathbb{N}^*}$ converges in distribution in $D([0, T], \mathbb{R}^d)$ to $\Sigma B$.

    This being true for all $T > 0$ allows us to conclude with \parencite[Lemma 16.3]{BillingsleyPatrick1925-20111999Copm}.
\end{proof}

\subsection{Signal-driven price}
\label{subsec:proofs_signal_modulated}
Now, we prove Proposition \ref{prop:exp_inequality}. Let $y=(y^i)_{i\in \stocks} \in \mathbb{R}^d$ and $x \in \mathcal{X}$. For $i, j \in \stocks$, we have
\begin{equation*}
    \partial^2_{y^i y^j} V(y, x) = \begin{cases}
        U'(y^i)U'(y^j)V(y, x), & \text{if } i \neq j\\
        (U''(y^i) + U'(y^i)^2 )V(y, x),& \text{if } i = j.
    \end{cases}
\end{equation*}
Since when $|y^i| \geqslant 1$, $U''(y^i) = 0$ and $|U'(y^i)|=1$, $\nabla^2_y V = V \cdot f$ where $f$ is a bounded matrix-valued function. Thus, there exists a constant $M > 0$ such that $\frac{1}{2}\tr(\Sigma \Sigma^T \nabla^2_y V) \leqslant M V$. Thus,
\begin{equation*}
\mathcal{L}V(y,x) \leqslant V(y, x)\bigg(
    M + \sum_{i \in \stocks}\tilde{\Lambda}^{i, +}(x, y^i)\bigg( \frac{V(y-\delta^i e_i,x)}{V(y, x)}  - 1\bigg)
    + \sum_{i \in \stocks}\tilde{\Lambda}^{i, -}(x, y^i)\bigg( \frac{V(y+\delta^i e_i,x)}{V(y, x)}  - 1\bigg)
\bigg).
\end{equation*}
Define $\Delta \defeq \max(1, \max_{i \in \stocks}\delta^i)$. By Assumption \ref{assumption:limits}, there exists $M' > 0$ such that for all $i \in \stocks$, $\tilde{\Lambda}^{i, +}\mathds{1}_{\mathcal{X} \times (-\infty,  \Delta)} \leqslant M'$ and $\tilde{\Lambda}^{i, -}\mathds{1}_{\mathcal{X} \times (-\Delta, \infty)} \leqslant M'$.
Hence, the following inequality holds for some constant $M'' > 0$:
\begin{equation*}
\mathcal{L}V(y,x)
\leqslant
V(y, x)\bigg(
M'' + \sum_{i \in \stocks}\tilde{\Lambda}^{i, +}(x, y^i)\big(e^{-\delta^i} - 1\big) \mathds{1}_{\{y^i \geqslant \Delta\}}
+ \sum_{i \in \stocks}\tilde{\Lambda}^{i, -}(x, y^i)\big(e^{-\delta^i} - 1\big) \mathds{1}_{\{y^i \leqslant -\Delta\}}
\bigg).
\end{equation*}
Let $m > \Delta$ such that for all $i \in \stocks$, $(e^{-\delta^i} - 1)\tilde{\Lambda}^{i, +}\mathds{1}_{\mathcal{X} \times (m,  \infty)} \leqslant -(1+M'') \mathds{1}_{\mathcal{X} \times (m,  \infty)}$ and $(e^{-\delta^i} - 1)\tilde{\Lambda}^{i, +}\mathds{1}_{\mathcal{X} \times (-\infty, -m)} \leqslant - (1+M'')\mathds{1}_{\mathcal{X} \times (-\infty, -m)}$. If $y \in \mathbb{R}^d \setminus [-m,m]^d$, then $\mathcal{L}V(y,x) \leqslant - V(y,x)$. We conclude taking
\begin{equation*}
K \defeq  \sup_{x \in \mathcal{X}} \sup_{y \in [-m,m]^d} \big(|\mathcal{L}V(y,x)| + V(y,x)\big)
\end{equation*}
which is finite because the intensity functions are assumed locally bounded here.

\subsection{Queue-reactive model}
\label{subsec:proofs_queue_reactive}

\subsubsection{A preliminary lemma}

\begin{lemma}
    \label{lem:bound_from_below}
    Let $\xi:[0, \infty) \to [0,\infty)$ be a function such that $\lim_{y \to \infty}\xi(y)=\infty$. There exists a non-decreasing $C^{\infty}$ function $\psi:[0, \infty) \to [0,\infty)$ such that $\lim_{y \to \infty}\psi(y)=\infty$, $\lim_{y \to \infty} \frac{\xi(y)}{\psi(y)} = \infty$, $\psi'' \leqslant 0$ and for all $y \in [0, \infty)$, $\psi'(y) \leqslant \frac{1}{1+y}$.
\end{lemma}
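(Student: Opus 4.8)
The plan is to reduce the construction to choosing a single non-increasing ``slope profile'' and then to exploit the fact that the constraint $\psi'(y) \leqslant \frac{1}{1+y}$ bounds the slope only from above, leaving us free to make $\psi$ grow as slowly as we wish. First I would replace $\xi$ by its non-decreasing minorant $\tilde{\xi}(y) \defeq \inf_{z \geqslant y} \xi(z)$; since $\xi \to \infty$ we still have $\tilde{\xi} \to \infty$, $\tilde{\xi}$ is non-decreasing, and $\tilde{\xi} \leqslant \xi$, so it suffices to build $\psi$ with $\psi/\tilde{\xi} \to 0$ (this gives $\xi/\psi \geqslant \tilde{\xi}/\psi \to \infty$). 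Next I would look for $\psi$ of the form $\psi(y) = \int_0^y \frac{h(u)}{1+u}\,\diff u$ for a $C^{\infty}$, non-increasing $h:[0,\infty) \to (0,1]$ with $h \to 0$. Such a $\psi$ is automatically $C^{\infty}$ and non-decreasing, satisfies $\psi'(y) = \frac{h(y)}{1+u} \leqslant \frac{1}{1+y}$, and has $\psi' $ equal to a product of two nonnegative non-increasing functions, hence non-increasing, so $\psi'' \leqslant 0$. The only two conditions left are $\int_0^\infty \frac{h(u)}{1+u}\,\diff u = \infty$ (forcing $\psi \to \infty$) and $\int_0^y \frac{h(u)}{1+u}\,\diff u = o(\tilde{\xi}(y))$.

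The substitution $s = \ln(1+u)$ turns this into a clean one-dimensional problem. Writing $H(s) \defeq h(e^s - 1)$ and $\Xi(s) \defeq \tilde{\xi}(e^s - 1)$, I need a $C^\infty$ non-increasing $H:[0,\infty)\to(0,1]$ with $H \to 0$ such that $\int_0^\infty H = \infty$ and $\int_0^s H = o(\Xi(s))$, where $\Xi$ is non-decreasing with $\Xi \to \infty$. This is the core of the argument: producing a concave, increasing, unbounded primitive that is nevertheless dominated by an $\Xi$ that may grow arbitrarily slowly.

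To construct $H$, I would choose a non-decreasing sequence $\tau_n$ with $\Xi \geqslant 2^n$ on $[\tau_n, \infty)$ (possible since $\Xi \to \infty$), set $T_0 \defeq 0$, and define inductively $d_n \defeq \max\{d_{n-1}, n+1, \tau_{n+1}-T_n\}$ (convention $d_{-1} \defeq 1$) and $T_{n+1} \defeq T_n + d_n$, so that $(d_n)$ is non-decreasing, $d_n \geqslant n+1 \to \infty$, $d_n \geqslant 1$, and $T_{n+1} \geqslant \tau_{n+1}$. The step function $H_0 \defeq \sum_n \frac{1}{d_n}\mathds{1}_{[T_n, T_{n+1})}$ (extended by $1/d_0$ on $(-\infty, 0)$) is non-increasing, valued in $(0,1]$, tends to $0$, and satisfies $\int_0^{T_{n+1}} H_0 = \sum_{k \leqslant n} \frac{d_k}{d_k} = n+1$; hence for $s \in [T_n, T_{n+1})$ we get $\int_0^s H_0 \leqslant n+1$ while $\Xi(s) \geqslant 2^n$, giving $\int_0^s H_0 = o(\Xi(s))$, together with $\int_0^\infty H_0 = \infty$. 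Finally I would replace $H_0$ by its mollification $H \defeq H_0 * \rho_{\varepsilon}$ with a fixed nonnegative kernel supported in $[-\varepsilon, \varepsilon]$: convolution with a nonnegative kernel preserves monotonicity, keeps $0 < H \leqslant 1$, makes $H$ of class $C^\infty$, and alters $\int_0^s H_0$ by at most a bounded amount, so $\int_0^\infty H = \infty$ and $\int_0^s H = o(\Xi(s))$ both persist. Undoing the substitution via $h(u) \defeq H(\ln(1+u))$ yields $\psi(y) = \int_0^{\ln(1+y)} H$ with all the required properties.

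The main obstacle is purely conceptual and lives in the second paragraph: reconciling $\psi \to \infty$ with $\psi = o(\tilde{\xi})$ when $\tilde{\xi}$ may increase arbitrarily slowly. This is resolved by the two observations above, namely that the inequality $\psi' \leqslant \frac{1}{1+y}$ never prevents us from flattening $\psi$ further, and that unboundedness requires only the divergence of an integral, which can be arranged to occur as slowly as desired (here the primitive reaches only $n+1$ across the dyadic scale $2^n$ of $\Xi$). Everything else — the preservation of monotonicity, positivity, the sup bound, and the integral asymptotics under mollification, and the elementary differentiation showing $\psi'$ non-increasing — is routine verification.
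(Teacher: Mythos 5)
Your proof is correct, and its core mechanism differs from the paper's. Both arguments share the same skeleton: first replace $\xi$ by its non-decreasing minorant, then build a concave function whose derivative is a non-increasing step profile multiplied by $\frac{1}{1+y}$, and finally smooth that profile (the paper uses explicit smooth transition functions $\varphi_n$ with a summable error budget $\sum_n 2^{-(n+1)}=1$; you use mollification, with error at most $2\varepsilon$ in the primitive). The divergence is in how the ratio condition $\xi/\psi \to \infty$ is obtained. The paper constructs $g$ \emph{adaptively}: the slopes $a_n$ are chosen interval by interval so that $g \leqslant \xi + 1$, unboundedness of $g$ is proved via a dichotomy (either the slopes stabilize, giving logarithmic growth, or $g$ touches $\xi$ along a sequence tending to infinity), and only at the very end is the additive closeness converted into ratio divergence by composing with a logarithm, $\psi \defeq \ln(1+g)$. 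You instead fix the comparison scales in advance --- thresholds $\tau_n$ beyond which $\tilde{\xi} \geqslant 2^n$ --- and arrange block lengths $d_n$ so that the primitive has grown only to $n+1$ by the time $\tilde{\xi}$ has reached $2^n$, which yields $\psi = o(\tilde{\xi})$ directly, with the logarithm appearing as a change of variables $s = \ln(1+u)$ at the outset rather than as a final composition. What your route buys is the elimination of both the adaptive tracking and the case analysis: the estimate $(n+1)/2^n \to 0$ is explicit and quantitative, and you never need the slightly delicate verification that concavity and the derivative bound survive the composition $g \mapsto \ln(1+g)$ (a step where the paper in fact misstates the sign condition on $g''$, writing $g'' \geqslant 0$ for what must be $g'' \leqslant 0$). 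What the paper's route buys is that no scale sequence has to be chosen a priori; the construction responds to $\xi$ as it goes. Both proofs are complete and of comparable length.
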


\begin{proof}
    Replacing $\xi$ by $\hat{\xi}$ defined by $\hat{\xi}(y)\defeq \inf \{\xi(z): z \geqslant y\}$, we see that it is sufficient to prove the result if $\xi$ is non-decreasing, which we assume from now on.

    Suppose we have proved that there exists a $C^{\infty}$ function $g:[0, \infty) \to [0,\infty)$ such that  $\lim_{y \to \infty}g(y)=\infty$, $g'' \geqslant 0$ and for all $y \in [0, \infty)$, $0 \leqslant g'(y) \leqslant \frac{1}{1+y}$ and $\xi(y) - g(y) \geqslant -1$. Then, setting $\psi \defeq \ln(1+g)$, we get the desired result. The goal is now to find such a $g$.

    \textbf{Step 1:} Piecewise construction of $f$, an approximation of the target $g$.

    Let $(x_n)_{n \in \mathbb{N}}$ a strictly increasing sequence of non-negative reals such that $y_0$ = 0, $\lim_{n \to \infty} y_n = \infty$, and for all $n \in \mathbb{N}$, $\xi(y_{n+1}) - \xi(y_n) > 1$.

    If $\sup_{y \in [0, y_1]} \int_0^y \frac{\diff z}{1+z} - \xi(y) < 1$ and $\xi(y_1) \geqslant \int_0^{y_1} \frac{\diff z}{1+z}$, set $a_0 \defeq 1$. Otherwise, let $a_0 \in (0,1)$ be such that $\sup_{y \in [0, y_1]} a_0\int_0^y \frac{\diff z}{1+z} - \xi(y) < 1$, $\xi(y_1) \geqslant a_0\int_0^{y_1} \frac{\diff z}{1+z}$ and there exists $y \in [0,y_1]$ such that $\xi(y) \leqslant a_0\int_0^{y} \frac{\diff z}{1+z}$. Set $f(y) \defeq a_0\int_0^{y} \frac{\diff z}{1+z}$ for $y \in [0, y_1]$.

    Suppose we have built $a_0,\dots,a_{n-1}$ and $f$ on $[0,y_n]$, and suppose $f(y_n) \leqslant \xi(y_n)$.
    
    If $\sup_{y \in [y_n, y_{n+1}]} f(y_n)+\int_{y_n}^{y}  a_{n-1}\frac{\diff z}{1+z} - \xi(y) < 1$ and $\xi(y_{n+1}) \geqslant f(y_n)+\int_{y_n}^{y_{n-1}}  a_{n-1}\frac{\diff z}{1+z}$, set $a_n \defeq a_{n-1}$. Otherwise, let $a_n \in (0,a_{n-1})$ be such that $\sup_{y \in [y_n, y_{n+1}]} f(y_n)+\int_{y_n}^{y}  a_{n}\frac{\diff z}{1+z} - \xi(y) < 1$, $\xi(y_{n+1}) \geqslant f(y_n)+ \int_{y_n}^{y_{n-1}}  a_{n}\frac{\diff z}{1+z}$ and there exists $y \in [y_n,y_{n+1}]$ such that $\xi(y) \leqslant f(y_n)+\int_{y_n}^{y}  a_{n}\frac{\diff z}{1+z}$. Set $f(y) \defeq f(y_n) + \int_{y_n}^y a_n \frac{\diff z}{1+z}$ for $y \in [y_n, y_{n+1}]$.

    We have built a non-increasing sequence $(a_n)_{n \in \mathbb{N}} \in (0, 1)^{\mathbb{N}}$ and an increasing function $f:[0,\infty) \to [0, \infty)$ such that
    \begin{equation*}
        f(y) = \int_0^y \sum_{n=0}^{\infty}a_n\mathds{1}_{[y_n, y_{n+1})}(z)\frac{1}{1+z} \diff z,\quad y \in [0, \infty)
    \end{equation*}
    and for all $y \in [0, \infty)$, $\xi(y) - f(y) \geqslant -1$. If there exists $n_0 \in \mathbb{N}$ such that for all $n \geqslant n_0$, $a_n = a_{n_0}$, then for all $y \in [y_{n_0 - 1}, \infty)$,
    \begin{equation*}
        f(y) \geqslant a_{n_0} \int_{y_{n_0-1}}^y \frac{\diff z}{1 + z}
        = a_{n_0} \ln(y+1) - a_{n_0} \ln(y_{n_0-1}+1) \xrightarrow[y \to \infty]{} \infty.
    \end{equation*}
    Otherwise, by construction, there exists a sequence $(y'_n)_{n \in \mathbb{N}}$ such that $\lim_{n \to \infty}y'_n = \infty$ and, for all $n \in \mathbb{N}$, $f(y'_n) \geqslant \xi(y'_n)$. Thus, $\lim_{y \to \infty} f(y) = \infty$.

    \textbf{Step 2:} Construction of $g$ by smoothing $f$.

    Let $(\epsilon_n)_{n \in \mathbb{N}}$ be a sequence of strictly positive reals such that for all $n \in \mathbb{N}$, $\epsilon_n < \min(\frac{1}{2^{n+1}}, y_{n+1} - y_n)$. For $n \in \mathbb{N}$, let $\varphi_n:[y_n, y_{n+1}]\to \mathbb{R}$ be a non-increasing $C^{\infty}$ function such that $\varphi_n \equiv a_n$ on $[y_n, y_{n-1} - \epsilon_n]$, and $\varphi_n \equiv a_{n+1}$ in a neighborhood of $y_{n+1}$. We have that $\sum_{n=0}^{\infty}\varphi_n\mathds{1}_{[y_n, y_{n+1})}$ is non-increasing, $C^{\infty}$ and lower or equal than $\sum_{n=0}^{\infty}a_n\mathds{1}_{[y_n, y_{n+1})}$. Define
    \begin{equation*}
        g(y) \defeq \int_0^y \sum_{n=0}^{\infty}\varphi_n(z)\mathds{1}_{[y_n, y_{n+1})}(z)\frac{1}{1+z} \diff z,\quad y \in [0, \infty).
    \end{equation*}
    The function $g$ is $C^{\infty}$, $g'(y) = \sum_{n=0}^{\infty}\varphi_n(y)\mathds{1}_{[y_n, y_{n+1})}(y)\frac{1}{1+y} \leqslant \frac{1}{1+y}$ for $y \in [0, \infty)$, and is second derivative is non-positive. Since $f \geqslant g$, $\xi - g \geqslant - 1$ holds. For $y \in [0, \infty)$,
    \begin{equation*}
        f(y) - g(y) \leqslant \int_{0}^y \sum_{n= 0}^{\infty} \mathds{1}_{y_{n+1}-\epsilon_n, y_{n+1}} \frac{\diff z}{1 + z} \leqslant \sum_{n = 0}^{ \infty} \frac{1}{2^{n+1}} = 1.
    \end{equation*}
    Since $f$ tends to $\infty$ at infinity, so does $g$. The function $g$ satisfies all the desired properties.
\end{proof}

We immediately deduce Corollary \ref{corol:bound_from_below} from Lemma \ref{lem:bound_from_below}, extending the result to both sides of the real line.
\begin{corollary}
    \label{corol:bound_from_below}
    Let $\xi:\mathbb{R} \to [0,\infty)$ be a function such that $\lim_{y \to \pm\infty}\xi(y)=\infty$.
    There exists an even $C^{\infty}$ function $\psi: \mathbb{R} \to [0, \infty)$, non-decreasing on $[0, \infty)$ such that $\lim_{y \to \pm\infty}\psi(y)=\infty$, $\lim_{y \to \pm\infty} \frac{\xi(y)}{\psi(y)} = \infty$, and, for $y \in (-\infty, -1]\cup[1, \infty)$, $\psi''(y) \leqslant 0$ and $|\psi'(y)| \leqslant \frac{1}{1+y}$.
\end{corollary}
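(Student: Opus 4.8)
The plan is to reduce the two-sided statement to the one-sided Lemma~\ref{lem:bound_from_below} by a symmetrization, and then to repair the only genuine difficulty, namely the smoothness of the even extension at the origin. Since $\lim_{y\to\pm\infty}\xi(y)=\infty$, the function $\bar\xi:[0,\infty)\to[0,\infty)$ defined by $\bar\xi(y)\defeq\min(\xi(y),\xi(-y))$ also tends to $\infty$ at $+\infty$. Applying Lemma~\ref{lem:bound_from_below} to $\bar\xi$, I obtain a non-decreasing $C^\infty$ function $\psi_0:[0,\infty)\to[0,\infty)$ with $\lim_{y\to\infty}\psi_0(y)=\infty$, $\lim_{y\to\infty}\bar\xi(y)/\psi_0(y)=\infty$, $\psi_0''\leqslant 0$ and $\psi_0'(y)\leqslant\frac{1}{1+y}$ on $[0,\infty)$.

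The naive candidate $y\mapsto\psi_0(|y|)$ already has all the tail properties, but it is in general not differentiable at $0$, since the construction of Lemma~\ref{lem:bound_from_below} produces $\psi_0 = \ln(1+g)$ with $\psi_0'(0)=a_0>0$. I therefore flatten $\psi_0$ near the origin while leaving it untouched on $[1,\infty)$. Fix a non-decreasing $C^\infty$ cutoff $\phi:[0,\infty)\to[0,1]$ with $\phi\equiv 0$ on a neighborhood of $0$ and $\phi\equiv 1$ on $[1,\infty)$, and set
\begin{equation*}
\eta(y)\defeq \psi_0(0)+\int_0^1(1-\phi(t))\psi_0'(t)\,\diff t+\int_0^y\phi(t)\psi_0'(t)\,\diff t,\quad y\in[0,\infty).
\end{equation*}
Then $\eta'=\phi\,\psi_0'\geqslant 0$, so $\eta$ is non-decreasing; $\eta'\equiv 0$ near $0$, so $\eta$ is constant there, hence flat to all orders at $0$; the additive constant is chosen so that $\eta(1)=\psi_0(1)$, whence $\eta\equiv\psi_0$ on $[1,\infty)$ because $\phi\equiv 1$ there. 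Finally I define $\psi(y)\defeq\eta(|y|)$.

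It then remains to verify the claimed properties. Evenness is built in. Since $\eta$ is constant in a neighborhood of $0$, $\psi$ is constant there and hence $C^\infty$ across $0$; elsewhere $\psi$ inherits the smoothness of $\eta$, so $\psi\in C^\infty(\mathbb{R})$. Monotonicity of $\eta$ and $\eta(0)=\psi_0(0)+\int_0^1(1-\phi)\psi_0'\geqslant 0$ give that $\psi$ is non-decreasing on $[0,\infty)$ and nonnegative. For $|y|\geqslant 1$ one has $\psi(y)=\psi_0(|y|)$, so $\lim_{y\to\pm\infty}\psi(y)=\infty$, $\psi''(y)=\psi_0''(|y|)\leqslant 0$ and $|\psi'(y)|=\psi_0'(|y|)\leqslant\frac{1}{1+|y|}$, which coincides with the stated bound on $[1,\infty)$. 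The ratio condition follows from $\xi(y)\geqslant\bar\xi(|y|)$: for $y\to+\infty$, $\xi(y)/\psi(y)=\xi(y)/\psi_0(y)\geqslant\bar\xi(y)/\psi_0(y)\to\infty$, and symmetrically as $y\to-\infty$.

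The only delicate point — and the reason the even extension cannot be taken verbatim — is the smoothness at $0$, which the flattening step resolves without perturbing the tail behaviour of $\psi_0$ on $[1,\infty)$ that is actually used in the Lyapunov estimate of Proposition~\ref{prop:qr_lyapunov}.
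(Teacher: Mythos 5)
Your proof is correct and takes essentially the route the paper intends: the paper gives no argument at all for Corollary~\ref{corol:bound_from_below}, merely asserting it is ``immediately deduced'' from Lemma~\ref{lem:bound_from_below} by extending to both sides of the real line, which is exactly your symmetrization via $\bar\xi(y)=\min(\xi(y),\xi(-y))$ followed by the even extension. Your flattening step near the origin fills in a real detail the paper glosses over, since the naive even extension $y\mapsto\psi_0(|y|)$ is genuinely not differentiable at $0$ (the lemma's construction yields $\psi_0'(0)=a_0>0$), and your modification preserves all the tail properties on $[1,\infty)$ that the Lyapunov argument of Proposition~\ref{prop:qr_lyapunov} actually uses.
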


\subsubsection{Proof of Proposition \ref{prop:qr_lyapunov}}

Here, $(x)_+$ denotes $\max(0,x)$ for any real $x$. Let $\rho:\mathbb{R} \to [0,1]$ be a non-decreasing $C^{\infty}$ function such that for $x \in (-\infty, 0]$, $\rho(x) = 0$ and for $x \in [1, \infty)$, $\rho(x) = 1$. Let $Q, Q' > 0$ be the constants given by Assumptions \ref{assumption:non_exploding_queue} and \ref{assumption:finite_expectation_size_orders}. Let $\xi$ be the function given by Assumption \ref{assumption:qr_limits}, and $\psi$ the associated function given by Corollary \ref{corol:bound_from_below}.

For $q=(q_i^j)_{i \in \stocks,j \in \mathbb{J}} \in \mathcal{Q}^{\stocks}$ and $y = (y^i)_{i \in \stocks}$, define
\begin{equation*}
    V(y, q) \defeq
    \sum_{i \in \stocks} \bigg((y^i)^2 \psi(y^i) + \sum_{j \in \mathbb{J}}(q_i^j -Q -Q')_+ + \sum_{\substack{j=(s,l)\in \mathbb{J} \\ s=a}} \rho(y^i) \min(q_i^j, Q+Q') + \sum_{\substack{j=(s,l)\in \mathbb{J} \\ s=b}} \rho(-y^i) \min(q_i^j, Q+Q')\bigg).
\end{equation*}
We define, for $q=(q^j)_{j \in \mathbb{J}} \in \mathcal{Q}$ and $y \in \mathbb{R}$,
\begin{align*}
    \tilde{W}(y,q) &\defeq
    \sum_{j \in \mathbb{J}}(q^j -Q -Q')_+ + \sum_{\substack{j=(s,l)\in \mathbb{J} \\ s=a}} \rho(y^i) \min(q^j, Q+Q') + \sum_{\substack{j=(s,l)\in \mathbb{J} \\ s=b}} \rho(-y^i) \min(q^j, Q+Q'),\\
    W(q) &\defeq
    \sum_{\substack{j=(s,l)\in \mathbb{J} \\ s=b}}(q^j -Q -Q')_+ + \sum_{\substack{j=(s,l)\in \mathbb{J} \\ s=a}} q^j.
\end{align*}
Note that $W(q)=\tilde{W}(y,q)$ for any $y \geqslant 1$.
We have
\begin{equation*}
    \mathcal{L}V(y,q) = \sum_{i \in \stocks} \mathcal{L}_i V (y^i, q_i),\quad
    q=(q_i^j)_{i \in \stocks,j \in \mathbb{J}} \in \mathcal{Q}^{\stocks}, y = (y^i)_{i \in \stocks}
\end{equation*}
where, for $i \in \stocks$, $q=(q^j)_{j \in \mathbb{J}} \in \mathcal{Q}$ and $y \in \mathbb{R}$,
\begin{equation*}
    \begin{split}
    \mathcal{L}_i V(y, q)
    &= (\Sigma \Sigma^T)_{ii}\bigg(1 + 2y \psi'(y) + \frac{y^2}{2}\psi''(y)
    + \sum_{\substack{j=(s,l)\in \mathbb{J} \\ s=a}} \rho''(y) \min(q^j, Q+Q') + \sum_{\substack{j=(s,l)\in \mathbb{J} \\ s=b}} \rho''(-y) \min(q^j, Q+Q')\bigg) \\
    &\phantom{=}+ \sum_{\substack{e \in \mathbb{T}\\\delta p^e(q) = 0}}\big( \tilde{W}(y,  q +\delta q^e) - \tilde{W}(y,q) \big) \tilde{\Lambda}^{i, e}(q, y)\\
        &\phantom{=}+ \sum_{\substack{e \in \mathbb{T}\\\delta p^e(q) \neq 0}}\big((y-\delta^i \delta p^e(q))^2 \psi(y - \delta^i \delta p^e(q)) - y^2 \psi(y)\big) \tilde{\Lambda}^{i, e}(q, y)\\
        &\phantom{=}+ \sum_{\substack{e \in \mathbb{T}\\\delta p^e(q) > 0}}\bigg(\int_{(\mathbb{N}^*)^{\delta p^e(q)}} \tilde{W}(y-\delta^i\delta p^e(q), [q + \delta q^e, v])\mathcal{V}^{i,a,\delta p^e(q)}(\diff v) - \tilde{W}(y, q) \bigg) \tilde{\Lambda}^{i, e}(q, y)\\
        &\phantom{=}+ \sum_{\substack{e \in \mathbb{T}\\\delta p^e(q) < 0}}\bigg(\int_{(\mathbb{N}^*)^{-\delta p^e(q)}} \tilde{W}(y-\delta^i\delta p^e(q), [v,q + \delta q^e])\mathcal{V}^{i,b,-\delta p^e(q)}(\diff v) - \tilde{W}(y, q) \bigg) \tilde{\Lambda}^{i, e}(q, y).
    \end{split}
\end{equation*}
To show the desired result, it is sufficient to show that $\mathcal{L}_i V \leqslant C$ for some $C \in (0,\infty)$ for all $i \in \stocks$. Fix $i \in \stocks$. 

Define $M' > 0$ as
\begin{equation*}
    M' \defeq
    2\max_{l \in \{1,\dots,K\},s\in \{a,b\}} \bigg(
        \int_{(\mathbb{N}^*)^l} \bigg(\sum_{l' = 1}^l v^{l'} \bigg)\mathcal{V}^{i,s,l}(\diff v)
    \bigg).
\end{equation*}
Since $\lim_{y \to \pm\infty} \sup_{l \in \{-K,\dots,K\}}|\frac{y(\psi(y-l \delta^i)- \psi(y))}{\psi(y-l\delta^i)} + \frac{1}{y}\psi(y-l \delta^i)| = 0$ (by the condition on the derivative of $\psi$), there exists $y_0 \in (1 + 2K \delta^i, \infty)$ such that for $|y| \geqslant y_0$, $|\frac{y(\psi(y-l \delta^i)- \psi(y))}{\psi(y-l\delta^i)} + \frac{1}{y}\psi(y-l \delta^i)| \leqslant \delta^i$. Let $y_1 \in (y_0, \infty)$ such that for $y \geqslant y_1$, $\delta^i y \psi(y) > M'$.

Note that by the local boundedness of the intensities associated to price moves or potential increase of the value of $\tilde{W}$ (Assumption \ref{assumption:qr_local_boundedness}), $\sup_{q \in \mathcal{Q}} \mathcal{L}_i V(\cdot, q)$ is bounded from above on $[-y_1,y_1]$. Since the first term in the definition of $\mathcal{L}_i$ is uniformly bounded from above, it is sufficient to find an upper bound for $\mathcal{M}_i V$ defined below, for $y \geqslant y_1$ (the case $y \leqslant -y_1$ is similar). Let $q = (q^j)_{j \in \mathbb{J}} \in \mathcal{Q}$ and $y \in [y_1, \infty)$.

By Assumption \ref{assumption:qr_limits}, we have the existence of two constants $M > 0$ and $y_2 \in (y_1, \infty)$ such that for all $|y| \geqslant y_2$,
\begin{equation*}
    \begin{split}
        &\sum_{\substack{e=(+,j,s,n) \in \mathbb{T}^{+}\\s= a}} n \tilde{\Lambda}^{i, e}(q, y)
    +\sum_{\substack{e=(\text{modif},(s,l),j,n) \in \mathbb{T}^{modif}\\s=a\\ j - j^{a,best(q)} < 0}} n\tilde{\Lambda}^{i, e}(q, y)\\
    &+\sum_{\substack{e=(-,(s,l),n) \in \mathbb{T}^-\\s=b\\ (s,l)=j^{b,best}(q)}}n\tilde{\Lambda}^{i, e}(q, y)
    + \sum_{\substack{e=(\text{modif},(s,l),j,n) \in \mathbb{T}^{modif}\\s=b\\ (s,l)=j^{b,best}(q), j - j^{b,best}(q) < 0}}n\tilde{\Lambda}^{i, e}(q, y) \leqslant M
    \end{split}
\end{equation*}

We define
\begin{equation*}
    \begin{split}
        \mathcal{M}_i V(y,q)
        &\defeq \sum_{\substack{e \in \mathbb{T}\\\delta p^e(q) = 0}}\big( W(q +\delta q^e) - W(q) \big) \tilde{\Lambda}^{i, e}(q, y)\\
        &\phantom{=}+ \sum_{\substack{e \in \mathbb{T}\\\delta p^e(q) \neq 0}}\big((y-\delta^i \delta p^e(q))^2 \psi(y - \delta^i \delta p^e(q)) - y^2 \psi(y)\big) \tilde{\Lambda}^{i, e}(q, y)\\
        &\phantom{=}+ \sum_{\substack{e \in \mathbb{T}\\\delta p^e(q) > 0}}\bigg(\int_{(\mathbb{N}^*)^{\delta p^e(q)}} W([q + \delta q^e, v])\mathcal{V}^{i,a,\delta p^e(q)}(\diff v) - W(q) \bigg) \tilde{\Lambda}^{i, e}(q, y)\\
        &\phantom{=}+ \sum_{\substack{e \in \mathbb{T}\\\delta p^e(q) < 0}}\bigg(\int_{(\mathbb{N}^*)^{-\delta p^e(q)}} W([v,q + \delta q^e])\mathcal{V}^{i,b,-\delta p^e(q)}(\diff v) - W(q) \bigg) \tilde{\Lambda}^{i, e}(q, y).
    \end{split}
\end{equation*}
We denote these four terms by $A_1$, $A_2$, $A_3$ and $A_4$ respectively. We have
\begin{equation*}
    A_1 \leqslant \sum_{\substack{e=(+,j,s,n) \in \mathbb{T}^{+}\\s= b \\ j-(b,1) \leqslant 0\\\delta p^e(q)= 0}} \big((q^j +n-Q-Q')_+ - (q^j-Q-Q')_+\big)\tilde{\Lambda}^{i, e}(q, y)
    + \sum_{\substack{e=(+,j,s,n) \in \mathbb{T}^{+}\\s= a\\\delta p^e(q)= 0}} n \tilde{\Lambda}^{i, e}(q, y).
\end{equation*}
Using the definition of $y_0$, and expanding the square, we obtain
\begin{equation*}
    A_2 \leqslant
    \sum_{\substack{e \in \mathbb{T}\\\delta p^e(q) \neq 0}}
    -y \psi(y-\delta^i \delta p^e(q))\delta^i (2 \delta p^e(q) - 1)\tilde{\Lambda}^{i,e}(q,y).
\end{equation*}
Using the definition of $y_1$, and recalling
\begin{equation*}
    \sum_{\substack{e \in \mathbb{T}\\\delta p^e(q) < 0}} \tilde{\Lambda}^{i,e}(q,y) \leqslant M,
\end{equation*}
we get
\begin{equation*}
    A_2 \leqslant y\psi(y) M - M' \sum_{\substack{e \in \mathbb{T}\\\delta p^e(q) > 0}} \tilde{\Lambda}^{i,e}(q,y).
\end{equation*}
As for $A_3$,
\begin{equation*}
    \begin{split}
        A_3 & \leqslant \sum_{\substack{e=(-,(s,l),n) \in \mathbb{T}^-\\s=a\\ \delta p^e(q) > 0}}(-n + M')\tilde{\Lambda}^{i, e}(q, y)
        + \sum_{\substack{e=(\text{modif},(s,l),j,n) \in \mathbb{T}^{modif}\\s=a\\ \delta p^e(q) > 0}} M'\tilde{\Lambda}^{i, e}(q, y)\\
        & \phantom{\leqslant}
        \sum_{\substack{e=(\text{modif},(s,l),j,n) \in \mathbb{T}^{modif}\\s=b\\ \delta p^e(q) > 0}} M'\tilde{\Lambda}^{i, e}(q, y)
        + \sum_{\substack{e=(+,(s,l),n) \in \mathbb{T}^+\\s=b\\ \delta p^e(q) > 0}}((n-Q-Q')_+ + M')\tilde{\Lambda}^{i, e}(q, y),
    \end{split}
\end{equation*}
the last term corresponding to an insertion in the spread. Similarly,
\begin{equation*}
    \begin{split}
        A_4 & \leqslant \sum_{\substack{e=(-,(s,l),n) \in \mathbb{T}^-\\s=b\\ \delta p^e(q) < 0}}M'\tilde{\Lambda}^{i, e}(q, y)
        + \sum_{\substack{e=(\text{modif},(s,l),j,n) \in \mathbb{T}^{modif}\\s=b\\ \delta p^e(q) < 0}}(n+ M')\tilde{\Lambda}^{i, e}(q, y)\\
        & \phantom{\leqslant}
        \sum_{\substack{e=(\text{modif},(s,l),j,n) \in \mathbb{T}^{modif}\\s=a\\ \delta p^e(q) < 0}} M'\tilde{\Lambda}^{i, e}(q, y)
        + \sum_{\substack{e=(+,(s,l),n) \in \mathbb{T}^+\\s=a\\ \delta p^e(q) < 0}}(n + M')\tilde{\Lambda}^{i, e}(q, y).
    \end{split}
\end{equation*}
By Assumptions \ref{assumption:non_exploding_queue} and \ref{assumption:finite_expectation_size_orders},
\begin{equation*}
    \sum_{\substack{e=(+,j,s,n) \in \mathbb{T}^{+}\\s= b \\ j-(b,1) \leqslant 0\\\delta p^e(q)= 0}} \big((q^j +n-Q-Q')_+ - (q^j-Q-Q')_+\big)\tilde{\Lambda}^{i, e}(q, y)
    +\sum_{\substack{e=(+,(s,l),n) \in \mathbb{T}^+\\s=b\\ \delta p^e(q) > 0}}(n-Q-Q')_+\tilde{\Lambda}^{i, e}(q, y)
\end{equation*}
is uniformly (over $(q,y)$) bounded by some constant $C_0$. We also have
\begin{equation*}
    \begin{split}
        &\sum_{\substack{e=(+,j,s,n) \in \mathbb{T}^{+}\\s= a\\\delta p^e(q)= 0}} n \tilde{\Lambda}^{i, e}(q, y)
        +A_4
        \leqslant M(1+M')
    \end{split}
\end{equation*}
by the definition of $M$.
Recalling
\begin{equation*}
    \sum_{\substack{e=(-,(s,l),n) \in \mathbb{T}^-\\s=a\\ \delta p^e(q) > 0}}n\tilde{\Lambda}^{i, e}(q, y) \geqslant y \xi(y),
\end{equation*}
and combining all the former inequalities, we obtain
\begin{equation*}
    \mathcal{M}_i V(y,q)
    \leqslant C_0 + M(1+M') - y \xi(y) + y M\psi(y).
\end{equation*}
Since $\lim_{y \to \infty} (\xi(y) - M\psi(y)) = \infty$, we deduce the desired result.

\section{Proofs for the likelihood}
\label{sec:proof_likelihood}

In this section, we prove the propositions stated in Section \ref{sec:estimation}.

\subsection{Proof of Proposition \ref{prop:martingale}}
\label{subsec:proof_martingale}

The proof follows the same lines as the proof of \parencite[Lemma 7.3]{abi2019affine}. Let $T > 0$ and $\theta \in \Theta$. Let $(K_n)_{n \in \mathbb{N}}$ be an increasing sequence of compact subsets of $\mathcal{X} \times \mathbb{R}^d $ such that $\bigcup_{n \in \mathbb{N}}K_n = \mathcal{X} \times \mathbb{R}^d$. For $n \in \mathbb{N}$, define the stopping time $\tau_n$ by
    \begin{equation*}
        \tau_n \defeq \inf\{t \geqslant 0: (X_{t-}, S_t^{\theta} -P_{t}) \notin K_n\} \wedge T.
    \end{equation*}
    By \parencite[Theorem 2.4]{Sokol2012OptimalNC}, $(Z^{\theta}_{t \wedge \tau_n})_t$ is a true martingale because $(\Lambda^k_{\theta}(X_{t-}, S^{\theta}_t - P_{t-}))_{t \in [0, \tau^n]}$ is uniformly bounded. We can thus define, for each $n$, a probability $\mathbb{Q}^n$ on $(\Omega, \mathcal{F})$ such that $\frac{\diff \mathbb{Q}^n}{\diff \mathbb{P}} = Z_{\tau^n}^{\theta}$. Fix $n \in \mathbb{N}$. Let $\phi$ be a bounded measurable function. Then,
    \begin{equation*}
        \E[\mathbb{Q}^n][\phi(S_0, P_0, X_0, N_0)]
        = \mathbb{E}\big[\phi(S_0, P_0, X_0, N_0)\mathbb{E}[Z_{\tau^n}^{\theta}| \mathcal{F}_0]\big]
        = \E[][\phi(S_0, P_0, X_0, N_0)].
    \end{equation*}
    Hence, the law of $(S_0, P_0, X_0, N_0)$ under $\Proba$ and $\mathbb{Q}^n$ is the same. Let $\phi:\mathbb{R}^d \times \mathcal{P} \times \mathcal{X} \times \mathbb{N}^{M}$ be a bounded measurable function twice differentiable in its first set of variables, with bounded second derivatives. Let $t \geqslant 0$. By the product rule,
    \begin{align*}
        \E[\mathbb{Q}^n][\phi(S^{\theta}_{t \wedge \tau^n}, P_{t\wedge \tau^n}, X_{t\wedge \tau^n}, N_{t \wedge \tau^n})] &=
        \E[][\phi(S^{\theta}_{t \wedge \tau^n}, P_{t\wedge \tau^n}, X_{t\wedge \tau^n}, N_{t \wedge \tau^n})Z^{\theta}_{\tau^n}]\\
        &=\E[][\phi(S_0, P_0, X_0, N_0)] \\
        &\phantom{=} + 
        \frac{1}{2} \mathbb{E}\bigg[\int_0^{t \wedge \tau^n}Z_u^{\theta}\tr\big(\Sigma_{\theta} \Sigma_{\theta}^T \nabla^2_y \phi(S^{\theta}_u, P_u, X_u, N_u) \big) \diff u\bigg]\\
        & \phantom{=}+ \sum_{k \in \mathbb{M}}\mathbb{E}\bigg[
            \int_0^{t\wedge \tau^n} \int_{\mathcal{X} \times \mathcal{P}}Z_u^{\theta} \big( \Lambda^k_{\theta}(X_u, S^{\theta}_u - P_u)\phi(S^{\theta}_u, P_u + p', S^{\theta}_u + x', N_u + e_k) \\
            & \phantom{=+ \sum_{k \in \mathbb{M}}\mathbb{E}\bigg[
                \int_0^{t\wedge \tau^n}\int_{\mathcal{X} \times \mathcal{P}} Z_u^{\theta} \big(}- \phi(S^{\theta}_u, P_u, X_u, N_u) \mu(x,p,\diff x', \diff p') \big)\mathds{1}_{\mathcal{X}_+^k}(X_u) \diff u    
        \bigg]\\
        &\phantom{=}+\mathbb{E}\bigg[\int_0^{t \wedge \tau^n}\int_{\mathcal{X} \times \mathcal{P}} Z_u^{\theta}\big(\phi(S^{\theta}_u, P_u+p', X_u +x', N_u)\\
        &\phantom{=+\mathbb{E}\bigg[\int_0^{t \wedge \tau^n}\int_{\mathcal{X} \times \mathcal{P}} Z_u^{\theta}\big(}-\phi(S^{\theta}_u, P_u, X_u, N_u) \big)\mu(x,p,\diff x', \diff  p')\Lambda(X_u, S^{\theta}_u-P_u)\diff u \bigg] \\
        & \phantom{=}+
        \sum_{k \in \mathbb{M}}\mathbb{E}\bigg[\int_0^{t \wedge \tau^n} Z_u^{\theta}\big(1-\Lambda_{\theta}^k(X_u, S^{\theta}_u - P_u) \big)\phi(S^{\theta}_u, P_u, X_u, N_u) \mathds{1}_{\mathcal{X}_+^k}(X_u) \diff u \bigg]\\
        & = \E[][\phi(S_0, P_0, X_0, N_0)] + \mathbb{E}^{\mathbb{Q}^n}\bigg[\int_0^{t \wedge \tau_n}
        \hat{\mathcal{A}}_{\theta}\phi(S_u, P_u, X_u, N_u) \diff u
        \bigg].
    \end{align*}
    Hence, under $\mathbb{Q}^n$, $(S, P, X, N)_{\cdot \wedge \tau_n}$ has the law of a Markov process with infinitesimal generator $\hat{\mathcal{A}}_{\theta}$ stopped at the exit time of the process $(X, S-P)$ of $K_n$. We have $\lim_{n \to \infty} \mathbb{Q}^n[\tau_n < T]= 0$ because a Markov process with generator $\hat{\mathcal{A}}_{\theta}$ is non-explosive, by assumption \ref{assumption:non-explosiveness}. Then,
    \begin{equation*}
        \E[][Z_T^{\theta}] \geqslant \lim_{n \to \infty} \E[][Z_T^{\theta}\mathds{1}_{\{\tau^n = T\}}] = \lim_{n \to \infty} \mathbb{Q}^n[\tau^n = T] = 1,
    \end{equation*}
    which is the desired result.

\subsection{Proof of Lemma \ref{lem:parabolic_pde_existence}}
\label{subsec:proof_parabolic_pde_existence}

We use the notations from \parencite{friedman_partial_1983} for local Hölder norms $|f|_{\alpha}$, $|f|_{2+ \alpha}$,\dots For $r > 0$, let $B_r \subset \mathbb{R}^d$ be the open ball of length $d$ for the Euclidean norm. Let $T > 0$. Let $\phi:\mathbb{R}^d \to [0, 1]$ be a $C^{\infty}$ function with support included in $B_{2}$ and taking the value 1 in $\bar{B}_1$. Moreover, suppose that for any $y \in \mathbb{R}^d$, $r \in (0, \infty) \to \phi(ry)$ is non-decreasing. For $n \in \mathbb{N}^*$, and $y\in \mathbb{R}^d$ define $\phi_n(y) \defeq \phi(ny)$. By \textcite[Chapter 3, Theorem 9]{friedman_partial_1983}, there exists a unique $u_n \in C([0,T] \times \bar{B}_{2n}) \cap C^{1,2}((0,T) \times B_{2n})$ solving \eqref{eq:linear_pde} and such that $u_n(T, y) = f(y)\phi_n(y)$ for all $y \in \bar{B}_{2n}$ and $u_n(t,y) = 0$ on $\partial B_{2n}$. Let $K \subset \mathbb{R}^d$ be a compact set, $y \in K$, and $t \in [0,T]$. Let $m<n$ be two positive integers such that $K \subset B_m$. By the Feynman-Kac formula,
    \begin{align*}
        0 \leqslant u_n(t, y) - u_m(t, y) &=
        \mathbb{E}\Big[e^{-\int_{t}^T c(y + \Sigma(W_u-W_t))\diff u} (f\phi_n)(y + \Sigma(W_T-W_t))\mathds{1}_{\{\forall u \in [t, T],x+\Sigma(W_u-W_t) \in B_{2n}\}}\\
        &\phantom{=\mathbb{E}\Big[}
        -e^{-\int_{t}^T c(y + \Sigma(W_u-W_t))\diff u} (f\phi_m)(y + \Sigma(W_T-W_t))\mathds{1}_{\{\forall u \in [t, T],x+\Sigma(W_u-W_t) \in B_{2m}\}}
        \Big]\\
        &\leqslant
        \mathbb{E}\Big[f(y + \Sigma(W_T-W_t))\mathds{1}_{\{\exists u \in [t, T],x+\Sigma(W_u-W_t) \notin B_{m}\}} \Big].
    \end{align*}
    By the Cauchy-Schwarz inequality,
    \begin{equation*}
        |u_n(t, y) - u_m(t, y)| \leqslant
        \sup_{(t', y') \in [0,T] \times K}\mathbb{E}\big[f(y' + \Sigma(W_T-W_{t'}))^2\big]^{\frac{1}{2}}
        \Proba[][\exists u \in [t, T],x+\Sigma(W_u-W_t) \notin B_{m}]^{\frac{1}{2}}.
    \end{equation*}
    The first factor is finite thanks to the assumption $f\in \mathcal{C}_e$. The second one can be bounded by a function depending only on $\Sigma$, $T$, and the distance between $K$ and $\mathbb{R}^d \setminus B_m$ having limit $0$ as $m\to \infty$. Thus, $(u_n)$ is a Cauchy sequence for the sup-norm on $[0,T] \times K$. We conclude that there exists a continuous function $u$ on $[0,T] \times \mathbb{R}^d$ such that $(u_n)$ converges uniformly to $u$ on every compact. Moreover, \eqref{eq:linear_pde_final} is trivially verified.

    By \textcite[Chapter 3, Theorem 5]{friedman_partial_1983}, for every $n_0 \in \mathbb{N}^*$, and $\alpha \in (0,1)$ such that $c$ is $\alpha$-Hölder continuous on $B_{2n_0}$, there exists a constant $C > 0$ such that,
    \begin{equation*}
        |u_n|_{2+\alpha,(0,T)\times B_{2n_0}} \leqslant C|u_n|_{\infty,(0,T)\times B_{2n_0}},\quad n \geqslant n_0.
    \end{equation*}
    The right-hand side of the last inequality in bounded in $n$, hence, by the Arzelà-Ascoli theorem, for all $\epsilon > 0$,  a subsequence of $(u_n)$, $(\partial_t u_n)$ and $(\nabla^2_y u_n)$ converges uniformly in $[0, T - \epsilon] \times B_{n_0}$ to $g^{\epsilon}$, $\partial_t g^{\epsilon}$ and $\nabla^2_y g^{\epsilon}$ respectively, for some $C^{1,2}$ function $g^{\epsilon}$. Since it is the case of every $u_n$, $g^{\epsilon}$ solves \eqref{eq:linear_pde} for $u$. By uniqueness of the limit, $g^{\epsilon}=u$. This being valid for all $n_0$ and $\epsilon$, we have that $u \in C([0,T] \times \mathbb{R}^d) \cap C^{1,2}((0,T) \times \mathbb{R}^d)$ and solves the desired PDE.

    Let $A > 0$ and $B \in [1,2)$ such that for all $y \in \mathbb{R}^d$, $|f(y)| \leqslant A e^{|y|^B}$. Let $(t,y) \in [0,T] \times \mathbb{R}^d$ and $n \in \mathbb{N}^*$. Then, by Feynman-Kac,
    \begin{equation*}
        u_n(t,y) \leqslant A \mathbb{E}\Big[e^{|y + \Sigma (W_T - W_t)|^B} \Big] \leqslant A \mathbb{E}\Big[ e^{2^{B-1}|\Sigma(W_T - W_t)|^B} \Big] e^{2^{B-1}|y|^B}.
    \end{equation*}
    Passing to the limit, this inequality also holds with $u$ instead of $u_n$, thus $u(t,\cdot) \in \mathcal{C}_e$.

    Uniqueness is a direct consequence of the Feynman-Kac representation of such a solution.

\subsection{Proof of Theorem \ref{thm:PDE_characterization_lik}}
\label{subsec:proof_PDE_characterization_lik}

Existence and uniqueness of $u(\omega)$ for every $\omega$ in a subset $\Omega'$ of $\Omega$ having probability one is a direct consequence (using backwards induction) of Lemma \ref{lem:parabolic_pde_existence}, its subsequent remark, and the fact that if $f \in \mathcal{C}_e$, then $\Lambda^k_{\theta}(\cdot -p, x)f \in \mathcal{C}_e$ for any $(k, x, p)$, by Assumption \ref{assumption:intensities_regularity}.
    
    By the independence of $S^{\theta}$ and $(X, P, N)$ under $\mathbb{P}$, we have that
    \begin{equation*}
        \E[][Z_T^{\theta}| \mathcal{F}_T^{obs}]
        = \int_{\mathbb{R}^d} \psi(y, X, P, N) m_{S_0}(\diff y)\quad \mathbb{P}-a.s.,
    \end{equation*}
    where
    \begin{equation*}
        \begin{split}
            \psi(y,X(\omega), P(\omega), N(\omega))
        = \mathbb{E}\bigg[ \exp\bigg(&\sum_{k \in \mathbb{M}} \int_{(0,T]} \ln\big(\Lambda^k_{\theta}(X_{u-}(\omega), S^{\theta}_u - P_{u-}(\omega))\big) \diff N^k_{u}(\omega) \\
        &+ \sum_{k \in \mathbb{M}} \int_0^{T} \big(1-\Lambda_{\theta}^k(X_{u-}(\omega), S^{\theta}_u - P_{u-}(\omega))\big)\mathds{1}_{\mathcal{X}^k_+}(X_{u-}(\omega)) \diff u \bigg)\bigg| S_0 = y\bigg]
        \end{split}
    \end{equation*}
    for $\omega$ in some measurable subset $\Omega'' \subset \Omega'$ having probability one and such that on $\Omega''$, $(X,P,N)$ has a finite number of jumps. For $\omega \in \Omega''$ and $(t,y) \in [0,T] \times \mathbb{R}$, we have, by the Feynman-Kac formula and backwards induction on the jump times,
    \begin{equation*}
        \begin{split}
            u(\omega)(t,y)
        = \mathbb{E}\bigg[ \exp\bigg(&\sum_{k \in \mathbb{M}} \int_{(t,T]} \ln\big(\Lambda^k_{\theta}(X_{u-}(\omega), y + \Sigma_{\theta}(W_u - W_t) - P_{u-}(\omega))\big) \diff N^k_{u}(\omega) \\
        &+ \sum_{k \in \mathbb{M}} \int_t^{T} \big(1-\Lambda_{\theta}^k(X_{u-}(\omega), y + \Sigma_{\theta}(W_u - W_t) - P_{u-}(\omega))\big)\mathds{1}_{\mathcal{X}^k_+}(X_{u-}(\omega)) \diff u \bigg)\bigg].
        \end{split}
    \end{equation*}
    We deduce the desired result at $t=0$.

\printbibliography

@article{huang2015simulating,
  title={Simulating and analyzing order book data: The queue-reactive model},
  author={Huang, Weibing and Lehalle, Charles-Albert and Rosenbaum, Mathieu},
  journal={Journal of the American Statistical Association},
  volume={110},
  number={509},
  pages={107--122},
  year={2015},
  publisher={Taylor \& Francis}
}

@article{huang2017ergodicity,
  title={Ergodicity and diffusivity of Markovian order book models: a general framework},
  author={Huang, Weibing and Rosenbaum, Mathieu},
  journal={SIAM Journal on Financial Mathematics},
  volume={8},
  number={1},
  pages={874--900},
  year={2017},
  publisher={SIAM}
}

@article{delattre2013estimating,
  title={Estimating the efficient price from the order flow: a Brownian Cox process approach},
  author={Delattre, Sylvain and Robert, Christian Y and Rosenbaum, Mathieu},
  journal={Stochastic Processes and their Applications},
  volume={123},
  number={7},
  pages={2603--2619},
  year={2013},
  publisher={Elsevier}
}

@article{derchu2020bayesian,
  author = {Derchu, Joffrey},
title = {A Bayesian viewpoint on the price formation process},
journal = {Market Microstructure and Liquidity},
volume = {0},
number = {ja},
pages = {null},
year = {2023},
doi = {10.1142/S2382626620500100},
}

@book{kushner1967stochastic,
  series = {Mathematics in science and engineering 33},
publisher = {Academic Press},
year = {1967},
title = {Stochastic stability and control},
author = {Kushner, Harold Joseph},
keywords = {Commande Théorie de la},
lccn = {0-12-430150-9},
}

@book{BillingsleyPatrick1925-20111999Copm,
series = {Wiley series in probability and statistics},
publisher = {J. Wiley \& Sons},
year = {1999},
title = {Convergence of probability measures},
edition = {2nd edition.},
language = {eng},
author = {Billingsley, Patrick},
keywords = {Convergence},
lccn = {978-0-471-19745-4},
}

@article{meyntweedieIII,
  title={Stability of Markovian processes III: Foster--Lyapunov criteria for continuous-time processes},
  author={Meyn, Sean P and Tweedie, Richard L},
  journal={Advances in Applied Probability},
  volume={25},
  number={3},
  pages={518--548},
  year={1993},
  publisher={Cambridge University Press}
}

@article{abi2019affine,
author = {Abi Jaber, Eduardo and Larsson, Martin and Pulido, Sergio},
year = {2019},
pages = {},
title = {Affine Volterra processes},
journal = {The Annals of Applied Probability},
doi = {10.1214/19-AAP1477}
}

@article{Sokol2012OptimalNC,
  title={Optimal Novikov-type criteria for local martingales with jumps},
  author={Alexander Sokol},
  journal={Electronic Communications in Probability},
  year={2012},
  volume={18},
  pages={1-8}
}

@book{friedman_partial_1983,
	title = {Partial differential equations of parabolic type},
	isbn = {978-0-89874-660-0},
	language = {en},
	publisher = {R.E. Krieger Publishing Company},
	author = {Friedman, Avner},
	year = {1983},
	keywords = {Mathematics / General},
}

@article{pulido2023understanding,
  title={Understanding the least well-kept secret of high-frequency trading},
  author={Pulido, Sergio and Rosenbaum, Mathieu and Sfendourakis, Emmanouil},
  journal={arXiv preprint arXiv:2307.15599},
  year={2023}
}

@article{cont2013price,
  title={Price dynamics in a Markovian limit order market},
  author={Cont, Rama and De Larrard, Adrien},
  journal={SIAM Journal on Financial Mathematics},
  volume={4},
  number={1},
  pages={1--25},
  year={2013},
  publisher={SIAM}
}

@article{cont2012order,
  title={Order book dynamics in liquid markets: limit theorems and diffusion approximations},
  author={Cont, Rama and De Larrard, Adrien},
  journal={arXiv preprint arXiv:1202.6412},
  year={2012}
}

@article{lehalle2017limit,
  title={Limit order strategic placement with adverse selection risk and the role of latency},
  author={Lehalle, Charles-Albert and Mounjid, Othmane},
  journal={Market Microstructure and Liquidity},
  volume={3},
  number={01},
  pages={1750009},
  year={2017},
  publisher={World Scientific}
}

@article{lehalle2021optimal,
  title={Optimal liquidity-based trading tactics},
  author={Lehalle, Charles-Albert and Mounjid, Othmane and Rosenbaum, Mathieu},
  journal={Stochastic Systems},
  volume={11},
  number={4},
  pages={368--390},
  year={2021},
  publisher={INFORMS}
}

@article{muni2017modelling,
  title={Modelling intensities of order flows in a limit order book},
  author={Muni Toke, Ioane and Yoshida, Nakahiro},
  journal={Quantitative Finance},
  volume={17},
  number={5},
  pages={683--701},
  year={2017},
  publisher={Taylor \& Francis}
}

@article{sfendourakis_lob_2023,
	title = {{LOB} modeling using {Hawkes} processes with a state-dependent factor},
	journal = {Market Microstructure and Liquidity},
	author = {Sfendourakis, Emmanouil and Muni Toke, Ioane},
	month = apr,
	year = {2023},
	}

@article{hansen2016cma,
  title={The CMA evolution strategy: A tutorial},
  author={Hansen, Nikolaus},
  journal={arXiv preprint arXiv:1604.00772},
  year={2016}
}

@article{nomura2024cmaes,
  title={Cmaes: A simple yet practical Python library for CMA-ES},
  author={Nomura, Masahiro and Shibata, Masashi},
  journal={arXiv preprint arXiv:2402.01373},
  year={2024}
}

@article{kutoyants2019parameter,
  title={On parameter estimation of hidden ergodic Ornstein-Uhlenbeck process},
  author={Kutoyants, Yury A},
  year={2019}
}

@article{ekren2025consistencymlepartiallyobserved,
      title={Consistency of MLE in partially observed diffusion models on a torus}, 
      author={Ibrahim Ekren and Sergey Nadtochiy},
      year={2025},
      eprint={2412.03380},
      archivePrefix={arXiv},
      primaryClass={math.ST},
      url={https://arxiv.org/abs/2412.03380}, 
}

@article{nadtochiy2024consistencymlepartiallyobserved,
      title={Consistency of MLE for partially observed diffusions, with application in market microstructure modeling}, 
      author={Sergey Nadtochiy and Yuan Yin},
      year={2024},
      eprint={2201.07656},
      archivePrefix={arXiv},
      primaryClass={math.ST},
      url={https://arxiv.org/abs/2201.07656}, 
}

@article{gould2013limit,
  title={Limit order books},
  author={Gould, Martin D and Porter, Mason A and Williams, Stacy and McDonald, Mark and Fenn, Daniel J and Howison, Sam D},
  journal={Quantitative Finance},
  volume={13},
  number={11},
  pages={1709--1742},
  year={2013},
  publisher={Taylor \& Francis}
}

@article{amf,
  title={{M}i{FID} {II}: Impact of the new tick size regime.},
  author="{{A}utorit\'e des {M}arch\'es {F}inanciers}",
  journal={AMF report},
  year={2018},
}

@article{cont2010stochastic,
  title={A stochastic model for order book dynamics},
  author={Cont, Rama and Stoikov, Sasha and Talreja, Rishi},
  journal={Operations research},
  volume={58},
  number={3},
  pages={549--563},
  year={2010},
  publisher={Informs}
}

@article{farmer2005predictive,
  title={The predictive power of zero intelligence in financial markets},
  author={Farmer, J Doyne and Patelli, Paolo and Zovko, Ilija I},
  journal={Proceedings of the National Academy of Sciences},
  volume={102},
  number={6},
  pages={2254--2259},
  year={2005},
  publisher={National Academy of Sciences}
}

@article{smith2003statistical,
  title={Statistical theory of the continuous double auction},
  author={Smith, Eric and Farmer, J Doyne and Gillemot, L{\'a}szl{\'o} and Krishnamurthy, Supriya},
  journal={Quantitative finance},
  volume={3},
  number={6},
  pages={481},
  year={2003},
  publisher={IOP Publishing}
}

@article{abergel2013mathematical,
  title={A mathematical approach to order book modeling},
  author={Abergel, Fr{\'e}d{\'e}ric and Jedidi, Aymen},
  journal={International Journal of Theoretical and Applied Finance},
  volume={16},
  number={05},
  pages={1350025},
  year={2013},
  publisher={World Scientific}
}

@article{abergel2015long,
  title={Long-Time Behavior of a Hawkes Process--Based Limit Order Book},
  author={Abergel, Fr{\'e}d{\'e}ric and Jedidi, Aymen},
  journal={SIAM Journal on Financial Mathematics},
  volume={6},
  number={1},
  pages={1026--1043},
  year={2015},
  publisher={SIAM}
}

@article{lakner2016high,
  title={High frequency asymptotics for the limit order book},
  author={Lakner, Peter and Reed, Josh and Stoikov, Sasha},
  journal={Market Microstructure and Liquidity},
  volume={2},
  number={01},
  pages={1650004},
  year={2016},
  publisher={World Scientific}
}

@article{bacry2013modelling,
  title={Modelling microstructure noise with mutually exciting point processes},
  author={Bacry, Emmanuel and Delattre, Sylvain and Hoffmann, Marc and Muzy, Jean-Fran{\c{c}}ois},
  journal={Quantitative finance},
  volume={13},
  number={1},
  pages={65--77},
  year={2013},
  publisher={Taylor \& Francis}
}

@article{morariu2022state,
  title={State-dependent Hawkes processes and their application to limit order book modelling},
  author={Morariu-Patrichi, Maxime and Pakkanen, Mikko S},
  journal={Quantitative Finance},
  volume={22},
  number={3},
  pages={563--583},
  year={2022},
  publisher={Taylor \& Francis}
}

@article{wuQRHawkes,
author = {Wu, Peng and Rambaldi, Marcello and Muzy, Jean-Francois and Bacry, Emmanuel},
title = {A single queue reactive Hawkes model for the order flow},
journal = {Market Microstructure and Liquidity},
volume = {0},
number = {ja},
pages = {null},
year = {2022},
doi = {10.1142/S2382626620500136},

}

@article{bodor2024novel,
  title={A Novel Approach to Queue-Reactive Models: The Importance of Order Sizes},
  author={Bodor, Hamza and Carlier, Laurent},
  journal={arXiv preprint arXiv:2405.18594},
  year={2024}
}

@article{bodor2025deep,
  title={Deep Learning Meets Queue-Reactive: A Framework for Realistic Limit Order Book Simulation},
  author={Bodor, Hamza and Carlier, Laurent},
  journal={arXiv preprint arXiv:2501.08822},
  year={2025}
}

@article{jaisson2015limit,
  title={Limit theorems for nearly unstable Hawkes processes},
  author={Jaisson, Thibault and Rosenbaum, Mathieu},
  year={2015}
}

@article{ceci2006risk,
  title={Risk minimizing hedging for a partially observed high frequency data model},
  author={Ceci, Claudia},
  journal={Stochastics: An International Journal of Probability and Stochastics Processes},
  volume={78},
  number={1},
  pages={13--31},
  year={2006},
  publisher={Taylor \& Francis}
}

@article{frey1999risk,
  title={Risk-minimizing hedging strategies under restricted information: The case of stochastic volatility models observable only at discrete random times},
  author={Frey, R{\"u}diger and Runggaldier, Wolfgang J},
  journal={Mathematical Methods of Operations Research},
  volume={50},
  pages={339--350},
  year={1999},
  publisher={Springer}
}

@article{robert_new_2011,
	title = {A new approach for the dynamics of ultra-high-frequency data: the model with uncertainty zones},
	volume = {9},
	issn = {1479-8409},
	shorttitle = {A {New} {Approach} for the {Dynamics} of {Ultra}-{High}-{Frequency} {Data}},
	url = {https://doi.org/10.1093/jjfinec/nbq023},
	doi = {10.1093/jjfinec/nbq023},
	number = {2},
	urldate = {2022-03-31},
	journal = {Journal of Financial Econometrics},
	author = {Robert, Christian Y. and Rosenbaum, Mathieu},
	month = mar,
	year = {2011},
	pages = {344--366},
	file = {Snapshot:C\:\\Users\\Manolis\\Zotero\\storage\\EYSIFS53\\815201.html:text/html},
}

@article{mounjid2019asymptotic,
  title={From asymptotic properties of general point processes to the ranking of financial agents},
  author={Mounjid, Othmane and Rosenbaum, Mathieu and Saliba, Pamela},
  journal={arXiv preprint arXiv:1906.05420},
  year={2019}
}

@article{almgren2001optimal,
  title={Optimal execution of portfolio transactions},
  author={Almgren, Robert and Chriss, Neil},
  journal={Journal of Risk},
  volume={3},
  pages={5--40},
  year={2001}
}

\end{document}